\documentclass[12pt]{article}

\usepackage[english]{babel}
\usepackage[utf8x]{inputenc}
\usepackage{amsmath}
\usepackage{amsthm}
\usepackage{amsfonts}
\usepackage{algorithm}
\usepackage{algorithmic}
\usepackage{graphicx}
\usepackage[colorinlistoftodos]{todonotes}
\usepackage{fullpage}
\usepackage{bbm}
\usepackage{wrapfig}
\usepackage{subcaption}
\usepackage[normalem]{ulem}

\newtheorem{theorem}{Theorem}

\newtheorem{definition}[theorem]{Definition}
\newtheorem{lemma}[theorem]{Lemma}
\newtheorem{corollary}[theorem]{Corollary}

\newcommand{\R}{\mathbb{R}}
\newcommand{\eps}{\varepsilon}

\DeclareMathOperator{\Gr}{Gr}
\DeclareMathOperator{\E}{\mathbb{E}}

\def \Reg  {{\sf Reg}}
\def \Vol  {{\sf Vol}}

\def \wid {{\sf width}}
\def \mid {{\sf mid}}
\def \bkt {{\sf bkt}}
\def \poly {{\sf poly}}
\def \area {{\sf Area}}
\def \perim {{\sf Perimeter}}
\def \Conv {{\sf Conv}}
\def \one {{\bf 1}}
\renewcommand{\dot}[2]{{\langle {#1}, {#2}\rangle}}
\newcommand{\abs}[1]{{\left\vert {#1} \right\vert}}
\newcommand{\norm}[1]{{\left\Vert {#1} \right\Vert}}

\title{Contextual Search via Intrinsic Volumes}
\author{Renato Paes Leme \\ Google Research \and Jon Schneider \\ Princeton
University}
\date{}
\begin{document}

\maketitle

\begin{abstract}
We study the problem of contextual search, a multidimensional generalization
  of binary search that captures many problems in contextual decision-making. In
  contextual search, a learner is trying to learn the value of a hidden vector
  $v \in [0,1]^d$. Every round the learner is provided an adversarially-chosen
  context $u_t \in \R^d$, submits a guess $p_t$ for the value of $\langle u_t,
  v\rangle$, learns whether $p_t < \langle u_t, v\rangle$, and incurs loss
  $\ell(\langle u_t, v\rangle, p_t)$ (for some loss function $\ell$). The learner's goal is to minimize their total loss
  over the course of $T$ rounds.

We present an algorithm for the contextual search problem for the symmetric
  loss function $\ell(\theta, p) = |\theta - p|$ that achieves $O_{d}(1)$ total
  loss. We present a new algorithm for the dynamic pricing problem (which can be
  realized as a special case of the contextual search problem) that achieves
  $O_{d}(\log \log T)$ total loss, improving on the previous best known upper
  bounds of $O_{d}(\log T)$ and matching the known lower bounds (up to a
  polynomial dependence on $d$). Both algorithms make significant use of ideas
  from the field of integral geometry, most notably the notion of intrinsic
  volumes of a convex set. To the best of our knowledge this is the first
  application of intrinsic volumes to algorithm design. 
\end{abstract}

\section{Introduction}

Consider the classical problem of binary search, where the goal is to find a
hidden real number $x \in [0,1]$, and where feedback is limited to guessing a
number $p$ and learning whether $p \leq x$ or whether $p > x$. One can view this as
an online learning problem, where every round $t$ a learner guesses a value $p_t
\in [0,1]$, learns whether or not $p_t < x$, and incurs some loss $\ell(x, p_t)$
(for some loss function $\ell(\cdot, \cdot)$). The goal of the learner is to
minimize the total loss $\sum_{t=1}^{T} \ell(x,p_t)$ which can
alternatively be thought of as the learner's \textit{regret}. For example, for
the loss function $\ell(x, p_t) = |x-p_t|$, the learner can achieve total regret bounded
by a constant via the standard binary search algorithm.

In this paper, we consider a contextual, multi-dimensional generalization of
this problem which we call the \textit{contextual search problem}. Now, the
learner's goal is to learn the value of a hidden vector $v \in [0,1]^d$. Every
round, an adversary provides a context $u_t$, a unit vector in $\R^d$, to the
learner. The learner must now guess a value $p_t$, upon which they incur loss
$\ell(\langle u_t, v\rangle, p_t)$ and learn whether or not $p_t \leq \langle u_t,
v \rangle$. Geometrically, this corresponds to the adversary providing the
learner with a hyperplane; the learner may then translate the hyperplane however
they wish, and then learn which side of the hyperplane $v$ lies on. Again, the
goal of the learner is to minimize their total loss
$\sum_{t=1}^{T}\ell(\langle u_t, v \rangle, p_t)$. 

This framework captures a variety of problems in contextual decision-making.
Most notably, it captures the well-studied problem of \textit{contextual dynamic
pricing} \cite{amin2014repeated,cohen2016feature, nazerzadeh2016}.
In this problem, the learner takes on the role of a
seller of a large number of differentiated products. Every round $t$ the seller
must sell a new product with features summarized by some vector $u_t \in
[0,1]^d$. They are selling this item to a buyer with fixed values $v \in
[0,1]^d$ for the $d$ features (that is, this buyer is willing to pay up to
$\langle u, v\rangle$ for an item with feature vector $u$). The seller can set a
price $p_t$ for this item, and observes whether or not the buyer buys the item
at this price. If a sale is made, the seller receives revenue $p_t$; otherwise
the seller receives no revenue. The goal of the seller is to maximize their revenue over
a time horizon of $T$ rounds.

The dynamic pricing problem is equivalent to the contextual search problem with
loss function $\ell$ satisfying $\ell(\theta, p) = \theta - p$ if $\theta \geq
p$ and $\ell(\theta, p) = \theta$ otherwise. The one-dimensional variant of this problem was
first introduced by Kleinberg and Leighton \cite{kleinberg2003value},
who presented an $O(\log\log T)$
regret algorithm for this problem and showed that this was tight. Amin,
Rostamizadeh and Syed \cite{amin2014repeated} introduce the problem in its 
contextual, multi-dimensional form, but assume iid contexts. Cohen, Lobel, and
Paes Leme \cite{cohen2016feature} study the problem with adversarial contexts
and improve the $\tilde{O}(\sqrt{T})$-regret obtainable from general purpose contextual
bandit algorithms \cite{agarwal2014taming} to $O(d^2\log T)$-regret, based on approximating the current knowledge set
(possible values for $v$) with ellipsoids. This was later improved to $O(d\log T)$
in \cite{lobel2016multidimensional}.

In this paper we present algorithms for the contextual search problem with
improved regret bounds (in terms of their dependence on $T$). More specifically:

\begin{enumerate}
\item
For the symmetric loss function $\ell(\theta, p) = |\theta - p|$, we provide an algorithm that achieves regret $O(\poly(d))$. In contrast, the previous best-known algorithms for this problem (from the dynamic pricing literature) incur regret $O(\poly(d)\log T)$. 

\item \sloppy For the dynamic pricing problem, we provide an algorithm that achieves
  regret $O(\poly(d)\log\log T)$. This is tight up to a polynomial factor in
    $d$, and improves exponentially on the previous best known bounds of
    $O(\poly(d)\log T)$. 
\end{enumerate}

Both algorithms can be implemented efficiently in randomized polynomial time
(and achieve the above regret bounds with high probability).\\

\paragraph{Techniques from Integral Geometry}
Classical binary search involves keeping an interval of possible values
(the ``knowledge set'') and repeatedly bisecting it to decrease its length. In the
one-dimensional case length can both be used as a potential function to measure the
progress of the algorithm and as a bound for the loss. When generalizing to
higher dimensions, the knowledge set becomes a higher dimensional convex set and
the natural measure of progress (the volume) no longer directly bounds the
loss in each step.

To address this issue we use concepts from the field of
\textit{integral geometry}, most notably the notion of \textit{intrinsic
volumes}. The field of integral geometry (also known as geometric probability)
studies measures on convex subsets of Euclidean space which remain invariant
under rotations/translations of the space. 
One of the fundamental results in integral geometry is that in $d$ dimensions
there are $d+1$ essentially distinct different measures, of which surface
area and volume are two. These $d+1$ different measures are known as
\textit{intrinsic volumes}, and each corresponds to a dimension between $0$ and
$d$ (for example, surface area and volume are the $(d-1)$-dimensional and
$d$-dimensional intrinsic volumes respectively).

A central idea in our
algorithm for the symmetric loss function is to choose our guess $p_t$ so as to
divide one of the $d$ different intrinsic volumes in half. The choice of which
intrinsic volume to divide in half depends crucially on the geometry of the
current knowledge set. When the knowledge set is well-rounded and ball-like, we
can get away with simply dividing the knowledge set in half by volume. As the
knowledge set becomes thinner and more pointy, we must use lower and lower
dimensional intrinsic volumes, until finally we must divide the one-dimensional
intrinsic volume in half. By performing this division carefully, we can ensure
that the total sum of all the intrinsic volumes of our knowledge set
(appropriately normalized) decreases by at least the loss we incur each round.

Our algorithm for the dynamic pricing problem builds on top of the ideas
developed for the symmetric loss together with a new technique for charging
progress based on an isoperimetric inequality for intrinsic volumes that can be
obtained from the Alexandrov-Fenchel inequality. This new technique allows us to
combine the doubly-exponential buckets technique of Kleinberg and Leighton with
our geometric approach to the symmetric loss and obtain an $O_d(\log \log T)$
regret algorithm for the pricing loss.

One can ask whether simpler algorithms can be obtained for this setting using
only the standard notions of volume and width. We analyze simpler halving
algorithms and show that while they obtain $O_d(1)$ regret for the symmetric
loss, the dependency on the dimension $d$ is exponentially worse. While the
simple halving algorithms are defined purely in terms of standard geometric
notions, our analysis of them still requires tools from intrinsic geometry. For the pricing
loss case, we are not aware of any simpler technique just based on standard
geometric notions that can achieve $O_d(\log \log T)$ regret.

Finally, we would like to mention that to the best of our knowledge this is the
first application of intrinsic volumes to theoretical algorithm design.

\paragraph{Applications and Other Related Work} The main application of our result is
to the problem of contextual dynamic pricing. The dynamic pricing problem has been
extensively studied with different assumptions on contexts and valuation. Our
model is the same as the one in
Amin et al \cite{amin2014repeated},
Cohen et al \cite{CohenLL16} and Lobel et al \cite{lobel2016multidimensional}
who provide regret guarantees of $O(\sqrt{T})$, $O(d^2 \log T)$ and $O(d \log
T)$ respectively.
The problem was also studied with stochastic valuation and additional
structural assumptions on contexts in Javanmard and Nazerzadeh
\cite{nazerzadeh2016}, Javanmard \cite{javanmard2017perishability} and
Qiang and Bayati \cite{qiang2016dynamic}. This line of work relies on
techniques from statistic learning, such as greedy
least squares, LASSO and regularized  maximum likelihood estimators. The
guarantees obtained there also have $\log T$ dependency on the time horizon.

The contextual search problem was also considered with the loss function
$\ell(\theta, p) = \mathbf{1}\{ \abs{\theta - p} > \epsilon \}$. For this loss
function, Lobel et al \cite{lobel2016multidimensional} provide the optimal regret
guarantee of $O(d \log (1/\epsilon))$. The geometric techniques developed in
this line of work were later applied by Gillen et al \cite{gillen2018online}
in the design of online algorithms with an unknown fairness objective. Another important
application of contextual search is the problem of personalized medicine studied
by Bastani and Bayati \cite{bayati2016} in which the algorithms is presented
with patients who are described in terms of feature vectors and needs to decide
on the dosage of a certain medication. The right dosage for each patient might
depend on age, gender, medical history along with various other features. After 
prescribing a certain dosage, the algorithm only observes if the patient was underdosed 
or overdosed.

\paragraph{Paper organization}
The remainder of the paper is organized as follows. In Section \ref{sect:prelim}
we define the contextual search problem and related notions. In Section
\ref{sect:oned} we review what is known about this problem in one dimension
(where contexts are meaningless), specifically the $O(\log\log T)$ regret
algorithm of Leighton and Kleinberg for the dynamic pricing problem and the
corresponding $\Omega(\log \log T)$ lower bound. In Section \ref{sect:twod}, we
present our algorithms for the specific case where $d=2$, where the relevant
intrinsic volumes are just the area and perimeter, and where the proofs of
correctness require no more than elementary geometry (and the 2-dimensional
isoperimetric inequality). In Section \ref{sect:intrinsic}, we define intrinsic
volumes formally and introduce all relevant necessary facts. In Section
\ref{sect:multid}, we present our two main algorithms in their general form,
prove upper bounds on their regret, and argue that they can be implemented
efficiently in randomized polynomial time. In Section
\ref{sect:halving}, we consider simple halving algorithms (such as those that
always halve the width or volume of the current knowledge set) and analyze their
regret using our tools from integral geometry. Finally in Section
\ref{sect:general_loss} we
discuss how to generalize our algorithms to other loss functions.

\section{Preliminaries}\label{sect:prelim}

\subsection{Contextual Search}

% In the \emph{Multidimensional Online Search} problem, a learner repeatedly
% guesses the dot product between adversarially chosen contexts $u_t \in \R^d$
% and a fixed unknown vector $v \in \R^d$. Upon guessing, the feedback received is
% whether the guess $p_t$ was below or above the actual dot product
% $\dot{u_t}{v}$. For each guess, the learner incurs a loss  $\ell(\dot{u_t}{v}, p_t)$
% that depends on the guess and the actual dot product.

We define the \emph{contextual search problem} as a game between between
a \emph{learner} and an \emph{adversary}.
The adversary begins by choosing a point $v \in [0,1]^d$. 
Then, every round for $T$ rounds, the adversary
chooses a context represented by an
unit vector $u_{t} \in \R^d$ and gives it to the learner. The
learner must then choose a value $p_{t} \in \R$, whereupon the learner accumulates
regret $\ell(\dot{u_t}{v}, p_t)$ (for some loss function $\ell(\cdot, \cdot)$) and learns whether $p_t \leq \dot{u_t}{v}$ or
$p_t \geq \dot{u_t}{v}$.
The goal of
the learner is to minimize their total regret, which is equal to the sum of their
losses over all time periods: $\Reg = \sum_t \ell(\dot{u_t}{v}, p_t)$.

We primarily consider two loss functions:\\

\noindent \textbf{Symmetric loss.} The symmetric loss measures the
absolute value between the guess and the actual dot product, i.e,
$$\ell(\theta, p) = \abs{\theta - p}.$$
Alternatively, $\ell(\theta, p)$ can be thought of as the distance between the learner's hyperplane $H_t 
:= \{x \in \R^d; \dot{u_t}{x} = p_t\}$ and the adversary's point $v$.\\

\noindent \textbf{Pricing loss.} The pricing loss corresponds to the revenue
loss by pricing an item at $p$ when the buyer's value is $\theta$. If a
price $p \leq \theta$ the product is sold with revenue $p$, so the loss
with respect to the optimal revenue $\theta$ is $\theta - p$. If the
price is $p > \theta$, the product is not sold and the revenue is zero,
generating loss $\theta$. In other words,
$$\ell(\theta, p) = \theta - p \one\{ p \leq \theta\}.$$
The pricing loss function is highly asymmetric: underpricing by $\epsilon$ can
only cause the revenue to decrease by $\epsilon$ while overpricing
by $\epsilon$ can cause the item not to be sold generating a large loss.

\subsection{Notation and framework}\label{sec:notation}

The algorithms we consider will keep track of a \textit{knowledge set} $S_t
\subseteq S_1 :=  [0,1]^d$, which will be the set of vectors
$v$ consistent with all observations so far. In step $t$ if the context is
$u_t$ and the guess is $p_t$, the algorithm will update $S_{t+1}$ to
$S_t^+(p_t; u_t)$ or $S_t^-(p_t; u_t)$ depending on the feedback obtained, where:

$$S_t^+(p_t; u_t) := \{ x \in S_t; \dot{u_t}{x} \geq p_t \} \quad \text{ and }
S_t^-(p_t; u_t) := \{ x \in S_t; \dot{u_t}{x} \leq p_t \} $$

Since $S_1$ is originally a convex set and since $S_{t+1}$ is always obtained from
$S_t$ by intersecting it with a halfspace, our knowledge set $S_t$ will remain convex for all $t$. 

Given context $u_t$ in round $t$, we let $\underline{p}_t$ and $\overline{p}_t$ be the minimum and maximum (respectively) of the dot product $\langle u_t, x\rangle$ that is consistent with $S_t$:

$$\underline{p}_t = \min_{x \in S_t} \dot{u_t}{x} \quad \text{ and } \quad
  \overline{p}_t = \max_{x \in S_t} \dot{u_t}{x} $$

Finally, given a set $S$ and an unit vector $u$ we will define the width in the
direcion $u$ as $$\wid (S; u) = \max_{x \in S} \dot{u}{x} - \min_{x \in S}
\dot{u}{x}.$$

We will consider strategies for the learner that map the current knowledge set
$S_t$ and context $u_t$ to guesses $p_t$. In Algorithm \ref{algo:framework} we
summarize our general setup.

\begin{algorithm}[h]
 \caption{Contextual search framework}
   \begin{algorithmic}[1] \label{algo:framework}
   \STATE Adversary selects $v \in S_1 = [0,1]^d$\\
 \FOR{$t = 1$ to $T$} 
 \STATE Learner receives a unit vector $u_t \in \R^d$, $\norm{u_t}=1$.
 \STATE Learner selects $p_t \in \R$ and incurs loss $\ell(\dot{u_t}{v},
 p_t)$.
 \STATE Learner receives feedback and learns the sign of $\dot{u_t}{x} - p_t$. 
 \STATE Learner updates $S_{t+1}$ to $S_t^+(p_t;u_t)$ or $S_t^-(p_t;u_t)$ accordingly.
 \ENDFOR
 	\end{algorithmic}

\end{algorithm}

Oftentimes, we will want to think of $d$ as fixed, and consider only the asymptotic dependence on $T$ of some quantity (e.g. the regret of some algorithm). We will use the notation $O_{d}(\cdot)$ and $\Omega_{d}(\cdot)$ to hide the dependency on $d$. 

\section{One dimensional case and lower bounds}\label{sect:oned}

In the one dimensional case, contexts are meaningless and 
the adversary only gets to choose the unknown parameter $v \in [0,1]$. 
Here algorithms which achieve optimal regret (up to constant factors) are known for both the symmetic loss and the pricing loss. We review them here both as a warmup for the
multi-dimensional version and as a way to obtain lower bounds for the multi-dimensional problem.

For the symmetric loss function, binary search gives constant regret. If
the learner keeps an interval $S_t$ of all the values of $v$ that are consistent
with the feedback received and in each
step guesses the midpoint, then the loss $\ell_t \leq \abs{S_t} = 2^{-t}
\abs{S_1}$. Therefore the total regret $\Reg = \sum_t \ell_t = O(1)$.\\

For the pricing loss, one reasonable algorithm is to perform $\log T$ steps
of binary search, obtain an interval containing $v$ of length $1/T$ and price at the
lower end of this interval. This algorithm gives the learner regret $O(\log T)$. Kleinberg
and Leighton \cite{kleinberg2003value} provide a surprising algorithm that exponentially improves upon this regret. Their policy biases the search towards lower prices to guarantee that if at some point the price $p_t$ is above $v$, then the length of the interval $S_t$
decreases by a large factor.

Kleinberg and Leighton's algorithm works as follows. At all rounds, they maintain a knowledge set $S_t = [a_t, a_t + \Delta_t]$. If $\Delta_t > 1/T$, they choose the price $p_t = a_t + 1/2^{2^{k_t}}$ where $k_t = \lfloor 1+\log_2 \log_2
\Delta_t^{-1} \rfloor$ (this is approximately equivalent to choosing $p_t = a_t + \Delta_t^2$). Otherwise (if $\Delta_t \leq 1/T$), they set their price equal to $a_t$. (In Appendix \ref{appendix:one-dim} we present their analysis
of this algorithm.) Moreover, they show that this bound is tight up to constant factors:

\begin{theorem}[Kleinberg and Leighton \cite{kleinberg2003value}]
The optimal regret for the contextual search problem with pricing loss in one dimension is $\Theta(\log \log T)$.
\end{theorem}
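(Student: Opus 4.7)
The statement combines a matching upper bound of $O(\log\log T)$ (achieved by the Kleinberg--Leighton algorithm just described) and a lower bound of $\Omega(\log\log T)$; I would prove these separately. For the upper bound, my plan is an epoch-based charging argument. Since $k_t = \lfloor 1 + \log_2 \log_2 \Delta_t^{-1}\rfloor$ depends only on $\Delta_t$, I group consecutive rounds with the same value of $k_t$ into an \emph{epoch}; epoch $k$ corresponds to $\Delta_t \in (2^{-2^k}, 2^{-2^{k-1}}]$, so there are at most $K = O(\log\log T)$ epochs before $\Delta_t$ drops below $1/T$. Within epoch $k$, each ``sold'' round shrinks $\Delta_t$ by exactly $2^{-2^k}$ and contributes loss at most $\Delta_t \le 2^{-2^{k-1}}$; hence at most $2^{2^{k-1}}$ sold rounds can occur, contributing total sale-loss at most $1$ per epoch. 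A ``not sold'' round terminates the epoch (it sets $\Delta_{t+1} = 2^{-2^k}$, advancing $k_{t+1}$) and contributes loss $v \le 1$. Summing over all $K$ epochs, plus an $O(1)$ tail contribution once $\Delta_t \le 1/T$ (where pricing at $a_t$ loses at most $1/T$ per round over at most $T$ remaining rounds), gives total regret $O(\log\log T)$.

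For the lower bound, the plan is to construct an adaptive adversary that maintains a candidate interval $S_t = [a_t, b_t] \subseteq [1/2,1]$ and chooses each response to maximize the learner's regret. The central analytic device is the potential $\Phi_t = \log_2 \log_2 |S_t|^{-1}$, initialized at $\Phi_1 = O(1)$. The adversary is designed so that any single query increases $\Phi_t$ by at most one unit, and any such unit increase forces the learner to incur $\Omega(1)$ loss. Concretely, if $p_t$ lies in the upper portion of $S_t$ (so that $|S_t| - (p_t - a_t) \le |S_t|^2$), the adversary declares ``not sold'', which produces $|S_{t+1}| = p_t - a_t \ge |S_t| - |S_t|^2$ (bounding the $\Phi$-gain by one) while charging the learner loss $v \ge a_t \ge 1/2$. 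If $p_t$ lies near $a_t$, the adversary declares ``sold'' and $|S_t|$ barely shrinks, so $\Phi$ does not increase. Because the learner must drive $\Phi_T$ up to $\log_2 \log_2 T - O(1)$ to prevent the remaining rounds from accruing $\Omega(\log\log T)$ total loss from an over-wide $S_t$, the total regret is $\Omega(\log\log T)$.

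The main technical obstacle is on the lower bound side: the adversary's thresholds must be calibrated so that (i) every history the adversary produces is consistent with some genuine $v \in S_t$, (ii) no single query purchases more than one unit of $\Phi$-progress, and (iii) ``middle-range'' prices (neither near $a_t$ nor near $b_t$) can also be answered so as to simultaneously bound $\Phi$-gain and sale loss. A potentially cleaner alternative is a Yao-style reduction: fix the uniform distribution over the doubly-exponential grid $\{v_k = 1 - 2^{-2^k}\}_{k=1}^{\lfloor \log_2 \log_2 T \rfloor}$, and argue that any algorithm must execute at least one query per adjacent pair $(v_k, v_{k+1})$ with price in $(v_k, v_{k+1}]$ in order to disambiguate, each such query costing $\Omega(1)$ in expectation under the prior; summing over the $\Omega(\log\log T)$ levels then yields the bound.
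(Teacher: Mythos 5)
Your upper-bound argument is correct and essentially identical to the paper's own analysis (Appendix~\ref{appendix:one-dim}): the paper likewise partitions rounds by the value of $k_t$, charges at most one no-sale of loss at most $1$ per value of $k$, bounds the number of sales at level $k$ by $2^{2^{k-1}}$ with loss at most $2^{-2^{k-1}}$ each, and treats the $\Delta_t \le 1/T$ tail separately. Nothing to add there.

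The lower bound is where the gap lies; note that the paper does not prove it (the theorem is imported from Kleinberg and Leighton, and only the upper-bound analysis is reproduced), and both of your sketches have a concrete hole. (i) For the adaptive adversary, the invariant ``one unit of $\Phi$ costs $\Omega(1)$'' is the easy half; the hard half is lower-bounding the loss of the \emph{sold} rounds. When $p_t - a_t < |S_t|^2$ the adversary answers ``sold'' and would like to charge the learner $v - p_t \approx |S_t|$, so that the $\approx |S_t|^{-1}$ sold rounds needed to traverse a level cost $\Omega(1)$ in total. But $v$ is a single value, fixed only implicitly and consistent with \emph{all} answers: any later ``not sold'' response forces $v$ below that later (much lower-offset) price, retroactively collapsing the loss $v - p_t$ of the earlier sold rounds to far below $|S_t|$. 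Making the sold-round charges coexist with future not-sold answers is exactly the calibration you defer, and it is the entire content of the Kleinberg--Leighton lower bound; they resolve it by passing to a randomized construction rather than a deterministic greedy adversary. (ii) The Yao-style alternative with the \emph{uniform} prior on $K = \lfloor \log_2\log_2 T\rfloor$ grid points does not yield $\Omega(\log\log T)$: the query that disambiguates level $k$ overprices only when $v = v_k$, which conditionally on $v \ge v_k$ has probability $1/(K-k+1)$, so level $k$ costs only $\Omega(1/(K-k+1))$ in expectation and the sum is $\Omega(\log K) = \Omega(\log\log\log T)$. To make every level cost $\Omega(1)$ one needs the conditional probability of ascending each level to be a constant, i.e., a product/martingale-type prior (an independent fair coin at each level deciding whether $v$ lies in the next doubly-exponentially smaller subinterval), which is what Kleinberg and Leighton actually use.
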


Their result implies a lower bound for the $d$-dimensional problem. If the
adversary only uses coordinate vectors $e_i = (0 \hdots 0 1 0 \hdots 0)$ as
contexts, then the problem reduces to $d$ independent instances of the one
dimensional pricing problem.

\begin{corollary} Any algorithm for the $d$-dimensional contextual search problem with pricing loss must incur $\Omega(d \log \log T)$ regret.
\end{corollary}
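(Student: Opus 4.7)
The plan is to carry out the reduction sketched right before the corollary, turning a $d$-dimensional instance into $d$ independent $1$-dimensional pricing instances. Concretely, I would have the adversary partition the $T$ rounds into $d$ consecutive blocks of $\lfloor T/d \rfloor$ rounds each, and in the $i$-th block always present the context $u_t = e_i$. With this choice, the learner's guess $p_t$ is compared to $\langle e_i, v\rangle = v_i$, the loss in round $t$ depends only on $v_i$ and $p_t$, and the feedback reveals only whether $p_t \le v_i$. Thus, the interaction during the $i$-th block is precisely a one-dimensional pricing instance with hidden value $v_i \in [0,1]$ and horizon $\lfloor T/d \rfloor$, and different blocks are completely decoupled because $v_i$ and $v_j$ are independent coordinates that the adversary can choose separately.

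Next I would invoke the Kleinberg--Leighton lower bound (the one-dimensional half of the preceding theorem). For each block $i$, there exists a distribution over $v_i \in [0,1]$ such that every learner strategy incurs expected regret at least $c \log \log \lfloor T/d \rfloor$ on that block, for some absolute constant $c > 0$. Taking the product of these $d$ distributions as the adversary's randomized strategy for $v \in [0,1]^d$, the expected total regret is the sum of the per-block expected regrets, giving
\[
\E[\Reg] \;\ge\; d \cdot c \log \log \lfloor T/d \rfloor.
\]
By averaging, there is a deterministic choice of $v$ achieving at least this regret, which establishes the claimed $\Omega(d \log \log T)$ lower bound in the regime $T \ge d^{1+\Omega(1)}$; for smaller $T$ the bound is trivial since $\log\log T = O(1)$.

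The only subtle point is ensuring that the $d$ one-dimensional instances are genuinely independent from the learner's perspective, i.e., that knowledge about $v_j$ learned in block $j$ gives the learner no leverage on block $i \ne j$. This follows because in block $i$ the feedback and loss depend solely on $v_i$, so the learner's best-response strategy on block $i$ projects to a valid strategy for the one-dimensional problem with hidden value $v_i$; hence the Kleinberg--Leighton lower bound applies coordinatewise without any loss. I expect no other obstacle: the result is essentially a packaging of $d$ parallel one-dimensional lower bounds via the axis-aligned context schedule, and the $\log\log\lfloor T/d\rfloor = \Theta(\log\log T)$ step absorbs the $d$ inside the double logarithm.
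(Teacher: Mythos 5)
Your proposal is correct and matches the paper's approach: the paper proves the corollary by the same reduction, having the adversary present only coordinate vectors $e_i$ so the game decouples into $d$ independent one-dimensional pricing instances, to each of which the Kleinberg--Leighton $\Omega(\log\log T)$ lower bound applies. You simply spell out the block schedule and the $\log\log\lfloor T/d\rfloor = \Theta(\log\log T)$ bookkeeping that the paper leaves implicit.
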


\section{Two dimensional case}\label{sect:twod}

%\renato{ I think it is good to have all the results that don't require integral
%geometry together and then discuss intrinsic volumes and then have the
%general ones.
%The problem with this approach is that it will take time to get to the
%intrinsic volume part, which is the coolest thing. On the other hand, when we
%get there, it will be very clear why it is needed. One other option is to have
%everything for the symmetric loss then everything for the pricing loss, like
%it was before. I think separate 2-dim for both and d-dim for both is the best
%for readability for the final version. For the submission I am unsure what is
%best. }

We start by showing how to obtain optimal regret for both loss functions in
the two dimensional case. We highlight this special case since it is simple to
visualize and conveys the geometric intuition for the general case. Moreover,
it can be explained using only elementary plane geometry.

\subsection{Symmetric loss}

\begin{figure}
\centering
\begin{subfigure}[b]{0.30\textwidth}
\begin{tikzpicture}[scale=.9, xscale=1.4]
  \fill[blue!0!white] (-.1,-.4) rectangle (3.4,2.6);
  %\draw[help lines] (0,0) grid (3,2);
  \draw[line width=1.5pt] (0,1) .. controls (0,1.4) and (.6, 2) .. (1,2)
              .. controls (1.4,2) and (3,1.4) .. (3,1)
              .. controls (3,0.6) and (1.4,0) .. (1,0)
              .. controls (.6,0) and (0,.6) .. (0,1);
 \node [shape=circle, fill=black,inner sep=1.5pt,label=above:$x_1$] (X1) at
 (1.5,1.89) {};
\node [shape=circle, fill=black,inner sep=1.5pt,label=below:$x_2$] (X2) at
 (1.5,2-1.89) {};
 \draw[line width=1.2pt, color=blue] (X1)--(X2);
 \node at (1.7,1) {$h$};
 \begin{scope}[line width=1.0pt]
 \begin{scope}[>=latex]
 \draw[<->] (0,2.5)--(1.5,2.5);
 \draw[<->] (3,2.5)--(1.5,2.5);
 \draw[->] (2.6,1.7) -- (3.2,1.7);  
 \end{scope}
 \draw (0,2.4)--(0,2.6);
 \draw (1.5,2.4)--(1.5,2.6);
 \draw (3, 2.4)--(3,2.6);
 \end{scope}
 \node at (1.5/2, 2.7) {$w$};
 \node at (1.5 + 1.5/2, 2.7) {$w$};
 \node at (2.8,1.9) {$u_t$};
\end{tikzpicture}
\caption{$w \geq h$ : midpoint cut}
\label{fig:2dsymm_a}
\end{subfigure}
 \begin{subfigure}[b]{0.30\textwidth}
\begin{tikzpicture}[scale=.9, xscale=1.4]
  \fill[blue!0!white] (-.1,-.4) rectangle (4,2.6);
  %\draw[help lines] (0,0) grid (3,2);
  \draw[line width=1.5pt] (0,1) .. controls (0,1.4) and (.6, 2) .. (1,2)
              .. controls (1.4,2) and (3,1.4) .. (3,1)
              .. controls (3,0.6) and (1.4,0) .. (1,0)
              .. controls (.6,0) and (0,.6) .. (0,1);
 \node [shape=circle, fill=black,inner sep=1.5pt,label=below:$x_1$] (X1) at
 (0,1) {};
\node [shape=circle, fill=black,inner sep=1.5pt,label=below:$x_2$] (X2) at
 (3,1) {};
 \draw[line width=1.2pt, color=blue] (X1)--(X2);
 \node at (1.5,1.2) {$h$};
 \begin{scope}[line width=1.0pt]
 \begin{scope}[>=latex]
 \draw[<->] (3.2,0)--(3.2,1);
 \draw[<->] (3.2,1)--(3.2,2);
 \draw[->] (2.8,1.4) -- (2.8,2.2);  
 \end{scope}
 \draw (3.1,0)--(3.3,0);
 \draw (3.1,1)--(3.3,1);
 \draw (3.1,2)--(3.3,2);
 \end{scope}
 \node at (3.4, 1.5) {$w$};
 \node at (3.4, .5) {$w$};
 \node at (2.6, 1.7) {$u_t$};
\end{tikzpicture}
\caption{$w < h$: half area cut}
\label{fig:2dsymm_b}
\end{subfigure}
\begin{subfigure}[b]{0.30\textwidth}
\begin{tikzpicture}[scale=.9, xscale=1.4]
  %\draw[help lines] (0,0) grid (3,2);
  \fill[blue!0!white] (-.1,-.4) rectangle (4,2.6);
\filldraw[fill=red!20!white, draw=red, line width=1.5pt] (0,1) -- (1.5,2-1.89) -- (3,1) -- (1.5,1.89) -- cycle;
\draw[line width=1.5pt] (0,1) .. controls (0,1.4) and (.6, 2) .. (1,2)
              .. controls (1.4,2) and (3,1.4) .. (3,1)
              .. controls (3,0.6) and (1.4,0) .. (1,0)
              .. controls (.6,0) and (0,.6) .. (0,1);
 \node [shape=circle, fill=black,inner sep=1.5pt,label=above:$x_1$] (X1) at
 (1.5,1.89) {};
\node [shape=circle, fill=black,inner sep=1.5pt,label=below:$x_2$] (X2) at
 (1.5,2-1.89) {};
 \node [shape=circle, fill=black,inner sep=1.5pt,label=left:$x_{\min}$] (Xmin) at (0,1) {};
 \node [shape=circle, fill=black,inner sep=1.5pt,label=right:$x_{\max}$] (Xmax) at
(3,1) {};
\end{tikzpicture}
\caption{area $\geq wh$}
\label{fig:2dsymm_proof}
\end{subfigure}

\caption{}
\end{figure}

Our general approach will be to maintain a potential function of the current knowledge set which decreases each round by an amount proportional to the loss. Since at each time $t$, the loss is bounded by the width $\wid(S_t, u_t)$ of the knowledge $S_t$ in direction $u_t$, it suffices to show that our potential function decreases each round by some amount proportional to the width of the current knowledge set.

What should we pick as our potential function? Inspired by the one-dimensional case, where one can take the potential function to be the length of the current interval, a natural candidate for the potential function is the area of the current knowledge set. Unfortunately, this does not work; if the knowledge set is long in the direction of $u_t$ and skinny in the perpendicular direction (e.g. Figure \ref{fig:2dsymm_a}), then it can have large width but arbitrarily small area.

Ultimately we want to make the width of the knowledge set as small as
possible in any given direction. This motivates a second choice of potential
function: the average width of the knowledge set, i.e., $\frac{1}{2\pi} \int_0^{2 \pi} \wid(S_t,
u_\theta) d\theta$ where
$u_\theta = (\cos \theta, \sin \theta)$. A result of Cauchy (see Section 5.5 in
\cite{klain1997introduction}) shows that the average width of a convex 2-dimensional shape is proportional to the perimeter, so this potential function can alternately be thought of as the perimeter of the knowledge set. 

Unfortunately, this too does not quite work. Now, if the set $S_t$ is thin in the direction $u_t$ and long in the perpendicular direction (e.g. Figure \ref{fig:2dsymm_b}), any cut will result in a negligible decrease in perimeter (in particular, the perimeter decreases by $O(w^2)$ instead of $\Theta(w)$). 

This motivates us to consider an
algorithm that keeps track of two potential functions: the perimeter $P_t$, and the square root of the area
$\sqrt{A_t}$. Each iteration, the algorithm will (depending on the shape of the knowledge set) choose one of these two potentials to make progress in.  If $S_t$ is long in the
$u_t$ direction, cutting it through the midpoint will allow us to decrease
the perimeter by an amount proportional to the loss incurred
(Figure \ref{fig:2dsymm_a}). If $S_t$ is thin in the
$u_t$ direction, then we can charge the loss to the square root of the area
(Figure \ref{fig:2dsymm_b}).

In Algorithm \ref{algo:symmetric2d} we describe how to compute the
guess $p_t$ from the knowledge set $S_t$ and $u_t$. Recall that the full setup
together with how knowledge sets are updated is defined in Algorithm
\ref{algo:framework}.

\begin{algorithm}[h]
  \caption{ 2D-SymmetricSearch }
  \begin{algorithmic}[1] \label{algo:symmetric2d}
 \STATE $w = \frac{1}{2}(\overline{p}_t - \underline{p}_t)$ and $p^{\mid}_t = \frac{1}{2}(\overline{p}_t + \underline{p}_t )$
 \STATE $h = $  length of the segment $S_t \cap \{x; \dot{u_t}{x}=p^{\mid}_t\}$
 \IF{$w \geq h$}
 	\STATE set $p_t = p^{\mid}_t$
 \ELSE 
 	\STATE set $p_t$ such that $\area(S_t^+) = \area(S_t^-)$
 \ENDIF
 	\end{algorithmic}
\end{algorithm}

\begin{theorem}
  The 2D-SymmetricSearch algorithm
  (Algorithm \ref{algo:symmetric2d}) has regret bounded by 
  $8 + 2\sqrt{2}$ for the symmetric loss.
\end{theorem}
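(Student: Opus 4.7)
My plan is to design a potential function $\Phi(S) = \perim(S) + c\sqrt{\area(S)}$ on convex subsets of $[0,1]^2$, with $c > 0$ a constant to be calibrated, and to show that in each round the potential drops by at least the loss incurred: $\Phi(S_t) - \Phi(S_{t+1}) \geq \ell_t$. Telescoping then bounds the total regret by $\Phi(S_1) - \Phi(S_{T+1}) \leq \perim([0,1]^2) + c\sqrt{\area([0,1]^2)} = 4 + c$, so the stated bound $8 + 2\sqrt{2}$ will follow by calibrating $c = 4 + 2\sqrt{2}$. Throughout, write $2w = \overline{p}_t - \underline{p}_t$ for the width of $S_t$ in direction $u_t$, so that $\ell_t \leq 2w$ holds unconditionally (because $\dot{u_t}{v} \in [\underline{p}_t, \overline{p}_t]$), and let $x_{\min}, x_{\max} \in S_t$ denote the extremal points in direction $u_t$ and $a, b$ the endpoints of the midpoint chord (of length $h$).

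The analysis splits into the two branches of the algorithm. When $w \geq h$ (midpoint cut), I would argue that the perimeter term alone absorbs the loss: the midpoint guess makes $\ell_t \leq w$, and whichever side $v$ turns out to lie on, the discarded arc of the boundary runs from $a$ to $b$ through one of $x_{\min}, x_{\max}$, a point at perpendicular distance $w$ from the cut line, so by the triangle inequality that arc has length at least $2w$. Replacing the discarded arc by the chord of length $h$ gives $\perim(S_t) - \perim(S_{t+1}) \geq 2w - h \geq w \geq \ell_t$, and the $\sqrt{\area}$ term only helps further.

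When $w < h$ (area-halving cut), the loss may be as large as $2w$, and I plan to charge it to the $\sqrt{\area}$ term. By convexity, $S_t$ contains the quadrilateral $x_{\min}\, a\, x_{\max}\, b$, which decomposes into two triangles of base $h$ and height $w$, giving $\area(S_t) \geq wh \geq w^2$ and hence $\sqrt{\area(S_t)} \geq w$. Since $\area(S_{t+1}) = \area(S_t)/2$, we have $\sqrt{\area(S_t)} - \sqrt{\area(S_{t+1})} = (1 - 1/\sqrt{2})\sqrt{\area(S_t)} \geq (1 - 1/\sqrt{2})\, w$, and taking $c = 2/(1 - 1/\sqrt{2}) = 4 + 2\sqrt{2}$ makes $c$ times this drop at least $2w \geq \ell_t$. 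The perimeter term is harmless in this branch because any chord of a convex set is no longer than either arc it subtends, so $\perim$ is non-increasing under halfspace cuts.

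The step I expect to be the main obstacle is the perimeter estimate in the first branch, since it is the only genuinely geometric argument (the inscribed-quadrilateral area bound and the monotonicity of $\perim$ are elementary); once the detour length $2w$ for the discarded arc is justified, everything else reduces to calibrating $c$, summing the per-round bound, and plugging in $\perim([0,1]^2) = 4$, $\sqrt{\area([0,1]^2)} = 1$.
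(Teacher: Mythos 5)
Your proposal is correct and follows essentially the same route as the paper: the identical potential $\perim(S) + c\sqrt{\area(S)}$ with $c = 4+2\sqrt{2}$, the same case split on $w$ versus $h$, the same $2w - h \geq w$ perimeter argument for the midpoint cut, and the same inscribed-two-triangles bound $\area(S_t) \geq wh \geq w^2$ for the area-halving cut. The constant calibration and the final bound $4 + c = 8 + 2\sqrt{2}$ also match the paper exactly.
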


\begin{proof}
We will keep track of the perimeter $P_{t}$ and the area $A_t$ of the
knowledge set $S_t$ and consider the potential function $\Phi_t = P_t +
  \sqrt{A_t} / C$, where the constant $C = (1-\sqrt{1/2})/2$.
We will show that every round this potential decreases by at least the regret we incur that round:
$$\Phi_t - \Phi_{t+1} \geq  |\dot{u_t}{v} - p_t|.$$
This implies that the total
regret is bounded by $\Reg \leq \Phi_1 = 4 + 2 / (1-\sqrt{1/2}) = 8 + 2\sqrt{2} $. We will write $\ell_t$ as shorthand for the loss $\ell(\dot{u_t}{v}, p_t) = |\dot{u_t}{v} - p_t|$ at time $t$. 

We first note that both $P_{t}$ and $A_{t}$ are decreasing in $t$. This
follows from the fact that $S_{t+1}$ is a convex subset of $S_{t}$. We will
show that when $w \geq h$, $P_{t}$ decreases by at least $\ell_{t}$,
whereas when $w < h$, $\sqrt{A_{t}}$ decreases by at least $\ell_{t}$.\\

\noindent \emph{Case $w \geq h$.} In this case, $p_t = p_t^{\mid}$.
We claim here that $P_{t} - P_{t+1} \geq w$. To see this, let $x_1$
and $x_2$ be the two endpoints of the line segment  $S_t \cap \{x;
\dot{u_t}{x}=p^{\mid}_t\}$ (so that $h = \norm{x_1 - x_2}$).
Without loss of generality, assume $S_{t+1} = S_{t}^-$ (the other case is
analogous).

Note that the boundary of $S_{t+1}$ is the same as the boundary of
$S_{t}$, with the exception that the segment of the boundary of $S_{t}$ in the
half-space $\{x;\dot{ u_{t}}{ x } \geq p_t^{\mid}\}$ has been replaced with
the line segment $\overline{x_1x_2}$. The part of boundary of $S_t$ in the
halfspace $\{\dot{u_{t}}{x} \geq p_{mid}\}$,
reach some point on the line $\langle u_{t}, x \rangle = \overline{p}_t$, and
return to $x_2$. Since $\overline{p}_t - p^{\mid}_t = w$, any such
path must have length at least $2w$. From the fact that $w \geq h$, it follows that:

$$ P_{t} - P_{t+1} \geq 2w - h 
\geq  2w - w 
\geq  w \geq \ell_t.$$

\noindent \emph{Case $w < h$.} We set $p_t$ such that $A_{t+1} = A_{t}/2$, so
$$\Phi_t - \Phi_{t+1} \geq (\sqrt{A_{t}} - \sqrt{A_{t+1}})/C \geq  \sqrt{A_{t}} (1-\sqrt{1/2})
/ C =  2 \sqrt{A_{t}}.$$ If we argue
that $2 \sqrt{A_{t}} \geq 2 w \geq \ell_t$ we are done. 
To show this, define $x_1$ and $x_2$ as
before, and let $x_{\min}$ be a point in $S_{t}$ that satisfies $\dot{ u_{t}}{
x_{min}} = \underline{p}_t$, and likewise let $x_{\max}$ be a point in $S_{t}$
that satisfies $\dot{ u_{t} }{ x_{max}}  = \overline{p}_t$. Since $S_{t}$ is
convex, it contains the two triangles with endpoints $(x_1, x_2, x_{max})$ and
$(x_1, x_2, x_{min})$, see Figure \ref{fig:2dsymm_proof}. 
But these two triangles are disjoint and each have area
at least $\frac{1}{2}wh$, so $A_{t} \geq wh \geq w^2$, since $h > w$. It follows
that $\sqrt{A_{t}} \geq w$.

\end{proof}

\subsection{Pricing loss}

To minimize regret for pricing loss, we want to somehow combine our above insight of looking at both the area and the perimeter with Leighton and Kleinberg's bucketing procedure for the 1D case. At first we might try to do this bucketing procedure just with the area. 

That is, if the area of the current knowledge set belongs to the interval $[\Delta^2, \Delta]$, choose a price that carves off a subset of total area $\Delta^2$. Now, if you overprice, you incur $O(1)$ regret, but the area of your knowledge set shrinks to $\Delta^2$ (and belongs to a new ``bucket''). On the other hand, if you underprice, your area decreases by at most $\Delta$, so you can underprice at most $\Delta^{-1}$ times. If the regret you incurred this round was at most $O(\Delta)$, then this means that you would incur at most $O(1)$ total regret underpricing while your area belongs to this interval.

This would be true if the regret per round was at most the area of the knowledge set. Unfortunately, as noted earlier, this is not true; the regret per round scales with the width of the knowledge set, not the area, and you can have knowledge sets with large width and small area. The trick, as before, is to look at both area and perimeter, and argue that at each step the bucketization argument for at least one of these quantities holds.

More specifically, let $P_t$ again be the perimeter of the knowledge set at time $t$, $A_t$ be the area of the knowledge set at time $t$, and let $A'_t = 2\sqrt{\pi A_t}$ be a normalization of $A_t$. Note that by the isoperimetric inequality for dimensions (which says that of all shapes with a given area, the circle has the least perimeter), we have that $P_t \geq A'_t$. 

Let $\ell_k = \exp(-1.5^k)$. Our buckets will be the intervals $(\ell_{k+1}, \ell_k]$. We will define the function $\bkt(x) = k$ if $x \in (\ell_{k+1}, \ell_k]$. Our algorithm is described in Algorithm \ref{algo:pricing2d}. The analysis follows.

\begin{algorithm}[h]
  \caption{ 2D-PricingSearch }

   \begin{algorithmic}[1] \label{algo:pricing2d}
\STATE $w = (\overline{p}_t - \underline{p}_t)$
\STATE $h_{max} = $ maximum length of a segment of the form $S_t \cap \{x; \dot{u_t}{x}=p\}$ 
\IF{$w < 1/T$} 
	\STATE choose $p_t = \underline{p}_t$. 
\ELSIF{$\bkt(A'_t) = \bkt(P_t)$} 
  \STATE set $p_t$ such that $\area(S_t^{-}) = \ell^2_{\bkt(A'_t)+1}/4\pi$. 
\ELSIF{$\bkt(A'_t) > \bkt(P_t)$ and $w < h_{max}$}
  \STATE set $p_t$ such that $\area(S_t^{-}) = \ell^2_{\bkt(A'_t)+1}/4\pi$. 
\ELSIF{$\bkt(A'_t) > \bkt(P_t)$ and $w \geq h_{max}$} 
  \STATE set $p_t$ such that $\perim(S_t) - \perim(S_t^{+}) = \frac{1}{2}\ell_{\bkt(P_t) + 1}$.
\ENDIF
\end{algorithmic}

\end{algorithm}

\begin{theorem}
  The 2D-PricingSearch algorithm
  (Algorithm \ref{algo:pricing2d}) has regret bounded by $O(\log\log T)$ for the pricing loss.
\end{theorem}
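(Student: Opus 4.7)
The plan is to lift Kleinberg and Leighton's bucketing scheme to two dimensions by tracking two potentials simultaneously: the perimeter $P_t$ and the normalized area $A'_t = 2\sqrt{\pi A_t}$. The planar isoperimetric inequality gives $P_t \geq A'_t$, so the bucket indices $k_P(t) := \bkt(P_t)$ and $k_A(t) := \bkt(A'_t)$ always satisfy $k_P(t) \leq k_A(t)$, and both are non-decreasing in $t$ since perimeter and area are monotone over nested convex sets. As long as $w_t \geq 1/T$, the inequality $w_t \leq P_t/2$ forces $P_t \geq 2/T$, which places both buckets in $\{0, 1, \ldots, K\}$ with $K = O(\log\log T)$.

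I would define a \emph{phase} as a maximal block of rounds over which $(k_A, k_P)$ is constant. Since each phase boundary must advance at least one of the two counters, the number of phases is $O(\log\log T)$. The aim is to bound the total loss in each phase by a universal constant; together with the $O(1)$ contribution of rounds with $w_t < 1/T$ (each losing at most $1/T$), this yields the $O(\log\log T)$ regret bound.

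Per-phase underpricing loss is bounded by casework mirroring the algorithm's branches. Each underpricing round has loss at most $w_t \leq \ell_{k_P}/2$. In case (a) and case (b) the algorithm peels off area $\ell_{k_A+1}^2/(4\pi)$ per round, so the phase admits at most $(\ell_{k_A}/\ell_{k_A+1})^2 = \exp(1.5^{k_A}) = 1/\ell_{k_A}$ underprices; in case (a), multiplying by the per-round bound $\ell_{k_A}/2$ telescopes to $1/2$. In case (b) we need the sharper width bound $w_t \leq \ell_{k_A}/(2\sqrt{\pi})$, which follows from the rectangle inclusion $A_t \leq w_t \cdot h_{max}$ together with the hypothesis $h_{max} > w_t$, giving $w_t < \sqrt{A_t} = A'_t/(2\sqrt{\pi})$; the same telescoping then yields $O(1)$. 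In case (c) each underprice peels off perimeter $\ell_{k_P+1}/2$, so at most $2\ell_{k_P}/\ell_{k_P+1}$ rounds occur, and the total underprice loss is $\leq \ell_{k_P}^2/\ell_{k_P+1} = \sqrt{\ell_{k_P}} \leq 1$. For overpricing in cases (a) and (b), the cut directly forces $\area(S_{t+1}) = \ell_{k_A+1}^2/(4\pi)$, which advances $k_A$ and ends the phase at cost $\leq \overline{p}_t \leq \sqrt{2}$.

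The principal obstacle I anticipate is overpricing in case (c). The post-overprice knowledge set is a thin cap bounded by a chord of length $L \leq h_{max}$ and an arc exceeding it by $\ell_{k_P+1}/2$, whose perimeter $2L + \ell_{k_P+1}/2$ need not drop into the next $\bkt$-bucket when $L$ is comparable to $h_{max}$. To close this case I would use a cap-specific area bound, $\area \lesssim L^{3/2}\ell_{k_P+1}^{1/2}$ (obtainable from the parabolic approximation of a thin circular segment, which in the plane is a direct consequence of the isoperimetric inequality for caps), combined with the particular exponent $3/2$ in the recursion $\ell_{k+1} = \ell_k^{3/2}$, to show that either $k_A$ advances on this step, or else the remaining cap is so long-and-thin relative to its area that the subsequent round falls into case (a) or case (b) with a strictly smaller $k_A$-budget. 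Amortizing the $O(1)$ per-round overprice losses from case (c) against the resulting forced bucket transitions should keep the overall count of overprices at $O(\log\log T)$, closing the argument.
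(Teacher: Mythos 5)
Your case decomposition (a), (b), (c) corresponds exactly to the paper's Cases 2, 3, 4, and your handling of (a) and (b) is essentially the paper's (modulo a sign typo: the triangle inclusion gives the \emph{lower} bound $A_t \geq w_t h_{max}$, not an upper bound, and it is that lower bound combined with $h_{max} > w_t$ that yields $w_t < \sqrt{A_t}$). The genuine gap is in case (c) overpricing, which you correctly flag as the obstacle but then route through a cap-specific isoperimetric estimate and an amortization over forced future transitions; that route is both unfinished and unnecessary.

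The simple resolution, and the one the paper uses, is that the case (c) hypotheses already make the chord short in \emph{absolute} terms, not merely relative to $h_{max}$. Since $w \geq h_{max}$, the same triangle bound you invoked in case (b) gives $A_t \geq w\, h_{max} \geq h_{max}^2$, hence $h_{max} \leq \sqrt{A_t} = A'_t/(2\sqrt{\pi})$. Since $\bkt(A'_t) = r$ we have $A'_t \leq \ell_r$, and the strict separation $r > k := \bkt(P_t)$ in the case hypothesis gives $r \geq k+1$, so $\ell_r \leq \ell_{k+1}$. Chaining, the chord length satisfies $L \leq h_{max} \leq \ell_{k+1}/(2\sqrt{\pi})$. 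The post-overprice perimeter is therefore $\ell_{k+1}/2$ plus a constant times $L$, i.e.\ $O(\ell_{k+1})$, which is one doubly-exponential scale below $\ell_k$; so $\bkt(P_t)$ advances after a single overprice (up to a constant-factor tightening of the peel-off parameter), ending the phase at cost $O(1)$. No cap isoperimetry and no amortization are required. The scenario where ``$L$ is comparable to $h_{max}$'' that worried you is innocuous precisely because the bucket separation $\bkt(A'_t) > \bkt(P_t)$ already pushes $h_{max}$ itself down to the $\ell_{k+1}$ scale rather than the $\ell_k$ scale.
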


\begin{proof}
We will divide the behavior of the algorithm into four cases (depending on which branch of the if statement in Algorithm \ref{algo:pricing2d} is taken), and argue the total regret sustained in each case is at most $O(\log\log T)$.

\begin{itemize}
\item
\textit{Case 1: $w \leq 1/T$.} Whenever this happens, we pick the minimum possible price in our convex set, so we definitely underprice and sustain regret at most $1/T$. The total regret sustained in this case over $T$ rounds is therefore at most $1$.

\item
\textit{Case 2: $\bkt(A'_t) = \bkt(P_t)$.} Let $\bkt(P_t) = \bkt(A'_t) = k$. Note that $k \leq O(\log\log T)$, or else we would be in Case 1 (if $P_t < 1/T$, then $w < 1/T$). Therefore, fix $k$; we will show that the total regret we incur for this value of $k$ is at most $O(1)$. Summing over all $O(\log\log T)$ values of $k$ gives us our upper bound.

If we overprice and do not make a sale, this contributes regret at most $1$, but then $A'_{t+1} = \ell_{k+1}$ and we leave this bucket. If we underprice, our regret is at most $w \leq P_t \leq \ell_k$, and the decrease in area $A_{t+1} - A_{t} = \ell_{k+1}^2/4\pi$. It follows that we can underprice at most $4\pi\ell_{k}^2/\ell_{k+1}^2$ before leaving this bucket, and therefore we incur total regret at most

$$4\pi\frac{\ell_k^2}{\ell_{k+1}^2} \ell_{k} = 4\pi\exp(-3\cdot 1.5^k + 2\cdot 1.5^{k+1}) = 4\pi = O(1).$$

\item
\textit{Case 3: $\bkt(A'_t) > \bkt(P_t)$ and $w < h_{max}$.} Let $\bkt(A'_t) = r$ and $\bkt(P_t) = k$. As in case 2, we will fix $r$ and argue that total regret we incur for this $r$ is at most $O(1)$. As before, if we overprice, $A'_{t+1}$ becomes $\ell_{r+1}$, so we incur total regret at most $O(1)$ from overpricing. 

Now, note that since $w < h_{max}$, $S_t$ contains two disjoint triangles with base $h_{max}$ and combined height $w$ (see Figure \ref{fig:2dsymm_proof}), so $A_t \geq wh_{max} > w^2$, and therefore $w < \sqrt{A_t} = A'_t/2\sqrt{\pi}$. Therefore, if we underprice, we incur regret at most $A'_t/2\sqrt{\pi} \leq \ell_{r}$. As before, the area decreases by at least $A_{t+1} - A_t = \ell_{r+1}^2/4\pi$ if we underprice, so we underprice at most $4\pi\ell_{k}^2/\ell_{k+1}^2$ before leaving this bucket, and therefore we incur total regret at most

$$4\pi\frac{\ell_r^2}{\ell_{r+1}^2} \ell_{r} = 4\pi\exp(-3\cdot 1.5^r + 2\cdot 1.5^{r+1}) = 4\pi = O(1).$$

\item
\textit{Case 4: $\bkt(A'_t) > \bkt(P_t)$ and $w \geq h_{max}$.} Let $\bkt(A'_t) = r$ and $\bkt(P_t) = k$, and fix $k$. Now $A_{t} \geq wh_{max} \geq h_{max}^2$, so $h_{max} < A'_t/2\sqrt{\pi} \leq \ell_{r}/2\sqrt{\pi}$. 

First, note that it is in fact possible to set $p_t$ so that $\perim(S_t) - \perim(S_t^{+}) = \frac{1}{2}\ell_{k+1}$, since the total perimeter is at least $\ell_{k+1}$, so this corresponds to just cutting off a chunk ($S_t^{+}$) of $S_t$ of perimeter $\perim(S_t) - \frac{1}{2}\ell_{k+1}$ (which is possible since this is nonnegative and less than $\perim(S_t)$).

    If we overprice, then the perimeter of the new region ($S_{t}^{-}$) is equal to $\perim(S_t) - \perim(S_{t}^{+}) = \ell_{k+1}/2$, plus the length of the segment formed by the intersection of $S_t$ with $\{x; \langle u_t, x\rangle = p_t\rangle$. The length of this segment is at most $h_{max}$, so the perimeter is at most $\ell_{k+1}/2 + \ell_{r}/2\sqrt{\pi} \leq \ell_{k+1}/2 + \ell_{k+1}/2\sqrt{\pi} \leq \ell_{k+1}$. This means we can overprice at most once before the perimeter changes buckets, and we thus incur at most $O(1)$ regret due to overpricing.

If we underprice, then we incur regret at most $w \leq P_t \leq \ell_k$, and the perimeter of the new region decreases by at least $\ell_{k+1}/2 - \ell_{r}/2\sqrt{\pi} \geq \ell_{k+1}/5$. This means we can underprice at most $5\ell_{k}/\ell_{k+1}$ times before we switch buckets, and we incur total regret at most

$$5\frac{\ell_{k}}{\ell_{k+1}}\ell_k = 4\exp(-2\cdot 1.5^{r} + 1.5^{r+1}) \leq 4 = O(1).$$

\end{itemize}

\end{proof}

\section{Interlude: Intrinsic Volumes}\label{sect:intrinsic}

The main idea in the two dimensional case was to balance between making progress
in a two-dimensional measure of the knowledge set (the area) and in a
one-dimensional measure (the perimeter). To generalize this idea to
higher dimensions we will keep track of $d$ potential functions, each corresponding to a
$j$-dimensional measure of the knowledge set for each $j$ in $\{1, 2, \dots d\}$. 

Luckily for us, one
of the central objects of study in \emph{integral geometry} (also known as \textit{geometric probability}) corresponds
exactly to a $j$-dimensional measure of a $d$-dimensional object. Many readers are undoubtedly familiar with two of these measures, namely volume (the $d$-dimensional measure) and surface area (the $(d-1)$-dimensional area) but it is less clear how to define the $1$-dimensional measure of a three-dimensional convex set (indeed, Shanuel
\cite{schanuel1986length} jokingly calls his lecture notes on the topic ``What is
the length of a potato?"). These measures are known as \emph{intrinsic volumes} and they 
match our intuition for how a $j$-dimensional measure of a $d$-dimensional set should behave (in particular, reducing to the regular $j$-dimensional volume as the set approaches a $j$-dimensional object).

We now present a formal definition of intrinsic volumes and summarize their most
important properties. We refer to the excellent book by Klain and Rota
\cite{klain1997introduction} for a comprehensive introduction to integral geometry.

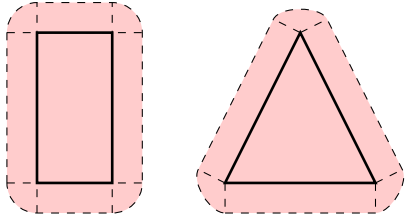
\begin{figure}
\centering
\begin{tikzpicture}
  \draw[dashed, fill=red!20!white] (-.4,0) -- (-.4,2) .. controls (-.4,2.22) and (-0.22,2.4) .. (0,2.4) --
  (1,2.4) .. controls  (1.22,2.4) and (1.4,2.22)  .. (1.4,2) -- (1.4,0) ..
  controls    (1.4,-.22)  and (1.22,-.4)  .. (1,-.4) -- (0,-.4) .. controls
  (-.22,-.4) and (-.4,-.22)  .. cycle;

  \draw[line width = 1] (0,0) -- (1,0) -- (1,2) -- (0,2) -- cycle;
  % \node at (.5,.25) {\small $K$};
  %\node at (.25,1) {\small $L$};
  \draw[dashed] (-.4, 0) -- (1.4,0);
  \draw[dashed] (-.4, 2) -- (1.4,2);
  \draw[dashed] (0, -.4) -- (0 ,2.4);
  \draw[dashed] (1, -.4) -- (1 ,2.4);

  \begin{scope}[shift={(2.5,0 )}]
  \draw[dashed, fill=red!20!white] (0,-.4) -- (2,-.4) .. controls (2+.2,-.4) and
    (2.357 + .2*.447 ,.178 - .2*.894) .. (2.357,.178) --
  (2.357-1,.178+2) .. controls  (2.357-1 - .2*.447,.178+2 + .2*.894)  and  
    (2-2.357+1+.2*.447,.178+2 + .2*.894)
 .. (2-2.357+1,.178+2)
    -- (2-2.357,.178) ..
  controls    (2-2.357 - .2*.447 ,.178 - .2*.894)  and (-.2,-.4)   .. cycle;

    \draw[dashed] (0,0) -- (0,-.4);
    \draw[dashed] (2,0) -- (2,-.4);
    \draw[dashed] (0,0) -- ( -.4*0.894,  .4*0.447 );
    \draw[dashed] (2,0) -- ( 2+.4*0.894,  .4*0.447 );
    \draw[dashed] (1,2) -- ( 1-.4*0.894, 2+ .4*0.447 );
    \draw[dashed] (1,2) -- ( 1+.4*0.894, 2+ .4*0.447 );

  %\draw[dashed] (-.4, 0) -- (2.4,0);
  %\draw[dashed] (0 - .2*0.447 ,0 -.2*0.894) -- (1 + .2*0.447,2 + .2*0.894);
  %\draw[dashed] (2 + .2*0.447 ,0 -.2*0.894) -- (1 - .2*0.447,2 + .2*0.894);

    \draw[line width = 1] (0,0) -- (2,0) -- (1,2) -- cycle;
  \end{scope}
\end{tikzpicture}
  \caption{Steiner's formula for  2D: $\sf{Area}(K + \epsilon B) =
  \sf{Area}(K) + \sf{Perimeter}(K) \cdot \epsilon +  \pi \epsilon^2 $}
\label{fig:steiner}
\end{figure}

Intrinsic volumes can be defined as the coefficients that arise in Steiner's
formula for the the volume of the (Minkowski) sum of a convex
set $K \subseteq \R^d$ and an unit ball $B$. Steiner
\cite{schanuel1986length} shows that the
$\Vol(K+\eps B)$ is a polynomial in $\epsilon$ and the intrinsic volumes $V_j(K)$ are the
(normalized) coefficients of this polynomial:
\begin{equation}\label{eqn:steineralt}
\Vol(K+\eps B) = \sum_{j=0}^{d}\kappa_{d-j}V_j(K)\eps^{d-j}
\end{equation}
where $\kappa_{d-j}$ is the volume of the $(d-j)$-dimensional unit ball. An
useful exercise to get intuition about intrinsic volumes is to directly compute
the intrinsic volumes of the parallelotope $[0,a_1] \times [0,a_2] \times
[0,a_d]$. It is easy to check for $d=2$ and $3$ (see Figure \ref{fig:steiner}) that
$V_d = a_1 a_2 \hdots a_d$, $V_1 = a_1 + a_2 + \hdots + a_d$ and $V_0 = 1$.
More generally $V_j$
correspnds to the symmetric polynomial of degree $j$:
$V_j = \sum_{S \subseteq [d]; \abs{S} = j} a_S$ where $a_S = \prod_{s \in S} a_s$. 
In particular for $[0,1]^d$ the $j$-th intrinsic volume is
$V_j = {d \choose j}$.

\begin{definition}[Valuations] Let $\Conv_d$ be the class of compact convex bodies in
  $\R^d$. A \emph{valuation} is a map $\nu:\Conv_d \rightarrow \R$ such that
  $\nu(\emptyset) = 0$ and for every $S_1, S_2 \in \Conv_d$ satisfying $S_1\cup S_2 \in \Conv_d$ it holds that
  $$\nu(S_1 \cup S_2) +\nu(S_1 \cap S_2) = \nu(S_1) + \nu(S_2).$$
  A valuation is said to be \emph{monotone} if $\nu(S) \leq \nu(S')$ whenever $S
  \subseteq S'$. A valuation is said to be \emph{non-negative} is $\nu(S) \geq 0$.
  Finally, a valuation is \emph{rigid} if $\nu(S) = \nu(T(S))$ for every rigid motion
  (i.e. rotations and translations) $T$ of $\R^d$.
\end{definition}

To define what it means for a valuation to be continuous, we need a notion of
distance betweeen convex sets. We define the Hausdorff distance $\delta(K,L)$
between two sets $K,L \in \Conv_d$ to be the  the minimum $\epsilon$ such that $K +
\epsilon B \subseteq L$ and $L + \epsilon B \subseteq K$ where $B$ is the unit
ball. This notion of distance allows us to define limits: a sequence
$K_t \in \Conv_d$ converges to $K \in \Conv_d$ (we write this as $K_t
\rightarrow K$) if $\delta(K_t, K) \rightarrow 0$. Continuity can now be defined
in the natural way.

\begin{definition}[Continuity]\label{def:continuous}
  A valuation function $\nu$ is \emph{continuous} if
  whenever $K_t \rightarrow K$ then $\nu(K_t) \rightarrow \nu(K)$.
\end{definition}

\begin{theorem}\label{thm:valuation}
The intrinsic volumes are non-negative monotone continuous rigid valuations.
\end{theorem}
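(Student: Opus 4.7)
The plan is to derive each of the four properties of $V_j$ directly from Steiner's formula (\ref{eqn:steineralt}) together with the corresponding property of ordinary volume under Minkowski summation with the unit ball. Rigidity will be immediate: for any rigid motion $T$ of $\R^d$, $T(K) + \eps B = T(K + \eps B)$ because $T$ preserves the centrally symmetric ball and commutes with translations, so $\Vol(T(K) + \eps B) = \Vol(K + \eps B)$ as polynomials in $\eps$, and matching coefficients forces $V_j(T(K)) = V_j(K)$. The valuation identity will follow from the two set-theoretic facts that $(K_1 \cup K_2) + \eps B = (K_1 + \eps B) \cup (K_2 + \eps B)$ always, and $(K_1 \cap K_2) + \eps B = (K_1 + \eps B) \cap (K_2 + \eps B)$ whenever $K_1 \cup K_2 \in \Conv_d$; applying the valuation property of $\Vol$ to the right-hand sides and comparing the resulting polynomial expansions in $\eps$ term by term yields $V_j(K_1 \cup K_2) + V_j(K_1 \cap K_2) = V_j(K_1) + V_j(K_2)$. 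Taking $K = \emptyset$ (so $\emptyset + \eps B = \emptyset$ has zero volume) shows $V_j(\emptyset) = 0$.

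For continuity I would use that Minkowski addition with the fixed body $\eps B$ is Hausdorff-continuous and that $\Vol$ is Hausdorff-continuous on $\Conv_d$. Together these imply that if $K_t \to K$, then $\Vol(K_t + \eps B) \to \Vol(K + \eps B)$ for every fixed $\eps \geq 0$. Both sides are polynomials of degree at most $d$ in $\eps$; evaluating at any $d+1$ distinct positive values of $\eps$ yields a Vandermonde linear system whose inverse is a fixed matrix, so pointwise convergence of the polynomials forces convergence of their coefficients, i.e.\ $V_j(K_t) \to V_j(K)$ for each $j$.

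Monotonicity is where I expect the real obstacle to lie. Steiner's formula alone is too weak: the inequality $\Vol(K + \eps B) \leq \Vol(L + \eps B)$ valid for all $\eps \geq 0$ does not force one coefficient polynomial to dominate the other coefficient-wise (a polynomial nonnegative on $[0,\infty)$ can easily have negative coefficients). To get around this I would invoke the Kubota-type integral representation
$$V_j(K) = c_{d,j} \int_{G(d,j)} \Vol_j(\pi_E K)\, dE,$$
where $G(d,j)$ is the Grassmannian of $j$-dimensional linear subspaces of $\R^d$ with its rotation-invariant probability measure, $\pi_E$ denotes orthogonal projection onto $E$, and $c_{d,j} > 0$ is a dimensional constant. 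Since $K \subseteq L$ implies $\pi_E K \subseteq \pi_E L$ and ordinary $j$-dimensional volume is monotone on $j$-dimensional convex bodies, the pointwise inequality $\Vol_j(\pi_E K) \leq \Vol_j(\pi_E L)$ integrates to $V_j(K) \leq V_j(L)$. Non-negativity of $V_j$ then follows from monotonicity together with the already-established $V_j(\emptyset) = 0$ (or equivalently, directly from the Kubota representation, whose integrand is non-negative).
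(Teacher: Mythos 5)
The paper gives no proof of this theorem at all: it is stated as a classical fact of integral geometry and deferred to the Klain--Rota reference, so there is nothing to compare your argument against. On its own merits, your proof is essentially the standard textbook derivation and is correct. The rigidity, valuation, and continuity arguments from Steiner's formula are all sound; in particular you implicitly rely on the identity $(K_1 \cap K_2) + \eps B = (K_1 + \eps B) \cap (K_2 + \eps B)$ for convex $K_1 \cup K_2$, which is a true (and necessary) lemma --- the reverse inclusion uses the fact that the segment joining witnesses $y_1 \in K_1$ and $y_2 \in K_2$ stays within distance $\eps$ of $x$ and must meet $K_1 \cap K_2$ because the union is convex; it would strengthen the writeup to state and verify this explicitly. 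You are also right that monotonicity genuinely cannot be read off from Steiner coefficients and requires the Kubota/random-projection representation (Theorem \ref{thm:random_projections} in the paper). The one caveat worth flagging: the paper presents that representation as a \emph{consequence} of Hadwiger's theorem, and Hadwiger's theorem in turn presupposes that the $V_j$ are continuous rigid valuations, so to avoid circularity you should either cite a direct proof of Kubota's formula (via Cauchy's projection formula and mixed volumes) or prove monotonicity by the equally standard route that $V_j$ is a mixed volume $V(K[j], B[d-j])$ up to a positive constant and mixed volumes are monotone in each argument. With that provenance made explicit, the proof is complete.
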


In fact the intrinsic volumes are quite special since they form a basis for the
set of all valuations with this property. This constitutes the fundamental
result of the field of integral geometry:

\begin{theorem}[Hadwiger]\label{thm:hadwiger}
  If $\nu$ is a continuous rigid valuation of $\Conv_d$,
  then there are constants $c_0, \hdots, c_d$ such that $\nu = \sum_{i=0}^d c_i
  V_i$.
\end{theorem}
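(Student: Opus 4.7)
The plan is to proceed by induction on the dimension $d$, following the standard Klain--Rota strategy. The base case $d=0$ is trivial: $\Conv_0$ consists only of a point and the empty set, so $\nu$ is determined by its single value $\nu(\{\mathrm{pt}\})$, which must equal $c_0 V_0$. For the inductive step, suppose the result holds in dimensions less than $d$, and let $\nu$ be a continuous rigid valuation on $\Conv_d$. Fix a hyperplane $H \subseteq \R^d$ and consider the restriction $\nu|_H$ to convex subsets of $H$; this is a continuous rigid valuation on $\Conv_{d-1}$ (after identifying $H$ with $\R^{d-1}$), so by the inductive hypothesis it equals $\sum_{i=0}^{d-1} c_i(H) V_i$ on $H$. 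Rigid invariance of $\nu$ forces the $c_i(H)$ to be independent of $H$ (any two hyperplanes are related by a rigid motion). Since the intrinsic volumes $V_i$ of $\R^d$ agree with the corresponding $V_i$ on any lower-dimensional affine subspace they are contained in, the difference $\mu := \nu - \sum_{i=0}^{d-1} c_i V_i$ is a continuous rigid valuation that \emph{vanishes on every convex set of dimension strictly less than $d$} --- a so-called \emph{simple} valuation.

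It therefore suffices to show that any continuous rigid simple valuation $\mu$ on $\Conv_d$ is a constant multiple of $V_d$. I would do this in three stages. First, establish that $\mu$ is reflection-invariant: if $K'$ is the reflection of $K$ across a hyperplane $H$ through the interior of $K$, then $K \cap K' \subseteq H$ is lower-dimensional, so $\mu(K \cap K') = 0$ and the valuation identity gives $\mu(K) = \mu(K')$; combined with rigid invariance this yields full reflection-invariance. Second, show $\mu$ is proportional to volume on axis-aligned boxes: subdividing $[0,1]^d$ into $k^d$ translates of $[0,1/k]^d$ and applying inclusion-exclusion (with all overlap terms vanishing by simplicity) gives $\mu([0,1/k]^d) = k^{-d}\mu([0,1]^d)$; continuity extends this to all side lengths, and rigid plus reflection invariance handles arbitrary orientations, yielding $\mu(R) = c\cdot V_d(R)$ on all boxes for a fixed constant $c$. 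Third, extend the identity from boxes to all convex bodies: dissect a cube into $d!$ congruent orthoschemes (right simplices) permuted by rigid motions to deduce $\mu = c\cdot V_d$ on orthoschemes, then on all simplices by iterated bisection along an edge (each bisection produces two simplices whose pairwise intersection lies in a hyperplane, so simplicity lets us split $\mu$ additively), and finally on all of $\Conv_d$ by polyhedral approximation together with the continuity hypothesis (Definition \ref{def:continuous}) and density of polytopes in $\Conv_d$ in the Hausdorff metric.

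The main obstacle is the third stage, particularly the passage from a rigid class such as cubes/orthoschemes to \emph{arbitrary} simplices while retaining rigid invariance. The dissection arguments are geometrically delicate: one must ensure that a given simplex can be decomposed into pieces on which $\mu$ is already known, using only cuts whose common faces lie in hyperplanes (so that simplicity eliminates overlap terms). The iterated-bisection approach due to Klain circumvents the classical use of Hilbert's third problem and Dehn invariants, but still requires care that successive bisections remain well-behaved under rigid motions. Once the identity $\mu = c\cdot V_d$ is established on all simplices, the continuity of both $\mu$ and $V_d$, together with the fact that any $K \in \Conv_d$ is a Hausdorff limit of finite unions of simplices handled by inclusion-exclusion, closes the induction and proves the theorem.
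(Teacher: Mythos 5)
First, note that the paper does not prove this theorem: it is Hadwiger's classical characterization theorem, stated as background and cited from Klain and Rota \cite{klain1997introduction}, so there is no in-paper argument to compare yours against. Your outline is a recognizable sketch of the standard Klain--Rota strategy (induct on dimension, restrict to a hyperplane, subtract $\sum_{i<d} c_i V_i$ to reduce to a \emph{simple} valuation, then prove the ``volume theorem'' that a continuous rigid simple valuation is $c\cdot V_d$), and the reduction in your first paragraph is sound.

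However, two steps in the second paragraph have genuine problems. The reflection-invariance argument is wrong as stated: if $K'$ is the reflection of $K$ across a hyperplane $H$ meeting the interior of $K$, then $K\cap K'$ is in general a full-dimensional convex body, not a subset of $H$ (take $K$ a ball centered on $H$, so $K'=K$ and $K\cap K'=K$); and even if $K\cap K'$ were lower-dimensional, the valuation identity would only yield $\mu(K\cup K')=\mu(K)+\mu(K')$, which says nothing about $\mu(K)=\mu(K')$. Reflection invariance of simple valuations is true, but it requires a genuine dissection argument (e.g., that a simplex and its mirror image are equidecomposable modulo lower-dimensional pieces); it does not fall out of the valuation identity in one line. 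Second, your inclusion--exclusion over the $k^d$ subcubes of $[0,1]^d$ applies the valuation identity to non-convex unions, whereas the paper's definition only guarantees it when $S_1\cup S_2$ is convex; to justify this one needs Groemer's extension theorem (continuous valuations on $\Conv_d$ extend to the lattice of finite unions of convex bodies). Finally, you explicitly defer the hardest part --- passing from orthoschemes to arbitrary simplices and then to all of $\Conv_d$ --- describing it as ``delicate'' rather than carrying it out. That step (Klain's volume theorem for simple valuations) is precisely where the real content of Hadwiger's theorem lies, so as written the proposal is an outline with the crux missing rather than a proof.
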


Next we describe a few important properties of intrinsic valuations that will be
useful in the analysis of our algorithms:

\begin{theorem}[Homogeneity] The map $V_j$ is $j$-homogenous,
  i.e., $V_j(\alpha K) = \alpha^j V_j(K)$ for any $\alpha \in \R_{\geq 0}$.
\end{theorem}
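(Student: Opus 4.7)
The plan is to derive the homogeneity of $V_j$ directly from Steiner's formula \eqref{eqn:steineralt}, which serves here as the definition of the intrinsic volumes. The key observation is that ordinary volume $\Vol$ is $d$-homogeneous on $\R^d$, and that Minkowski sums interact cleanly with scaling: for any $\alpha > 0$, one has $\alpha K + \eps B = \alpha\bigl(K + (\eps/\alpha) B\bigr)$, since $B$ is the unit ball centered at the origin. Combining these two facts will let us rewrite $\Vol(\alpha K + \eps B)$ in two ways, and comparing polynomial coefficients in $\eps$ will yield the desired identity.

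Concretely, I would first fix $\alpha > 0$ and compute
\[
\Vol(\alpha K + \eps B) \;=\; \alpha^d \Vol\!\bigl(K + (\eps/\alpha) B\bigr)
\;=\; \alpha^d \sum_{j=0}^d \kappa_{d-j} V_j(K) (\eps/\alpha)^{d-j}
\;=\; \sum_{j=0}^d \kappa_{d-j}\,\alpha^{j} V_j(K)\,\eps^{d-j},
\]
where the first equality uses $d$-homogeneity of Lebesgue volume and the second applies Steiner's formula to $K$. On the other hand, applying Steiner's formula directly to the convex body $\alpha K$ gives
\[
\Vol(\alpha K + \eps B) \;=\; \sum_{j=0}^d \kappa_{d-j} V_j(\alpha K)\,\eps^{d-j}.
\]
Since both right-hand sides are polynomials in $\eps$ that agree for all $\eps \geq 0$, their coefficients must coincide, and I can read off $V_j(\alpha K) = \alpha^j V_j(K)$ for each $j$.

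It remains to handle the boundary case $\alpha = 0$. Here $\alpha K = \{0\}$, and $\Vol(\eps B) = \kappa_d \eps^d$, which matched against Steiner's formula for $\{0\}$ forces $V_0(\{0\}) = 1$ and $V_j(\{0\}) = 0$ for $j \geq 1$. With the convention $0^0 = 1$, this is consistent with $V_j(\alpha K) = \alpha^j V_j(K)$ at $\alpha = 0$.

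I do not expect any real obstacle in this argument: the only nontrivial input is that the Steiner expansion is actually a polynomial in $\eps$ (so that coefficient extraction is legitimate), which is precisely the content of the preceding discussion establishing \eqref{eqn:steineralt}. Everything else is a one-line manipulation of Minkowski sums and scaling. The same coefficient-matching technique could also be adapted later to prove other invariance properties (e.g., rigid invariance of $V_j$ follows from rigid invariance of $\Vol$ and of $B$ in exactly the same way).
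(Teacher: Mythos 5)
Your argument is correct, and it is the standard derivation of $j$-homogeneity: rewrite $\Vol(\alpha K + \eps B)$ as $\alpha^d\Vol(K + (\eps/\alpha)B)$, expand both sides via Steiner's formula \eqref{eqn:steineralt}, and match coefficients of the polynomial in $\eps$ (legitimate since the two polynomials agree for all $\eps>0$). The paper states this theorem without proof, citing the integral-geometry literature, so there is nothing to compare against; your treatment, including the $\alpha=0$ edge case via $V_0(\{0\})=1$, is complete.
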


\begin{theorem}[Ambient independence]\label{thm:ambient}
  Intrinsic volumes are independent of the
  ambient space, i.e, if $K \in \Conv_d$ and $K'$ is a copy of $K$ embedded in a
  larger dimensional space
$$K' = T(\{(x, 0_k); x \in K, 0_k \in \R^k \} \in
  \Conv_{d+k}$$ for a rigid transformation $T$, then for any $j \leq d$, we have
$V_j(K) = V_j(K')$.
\end{theorem}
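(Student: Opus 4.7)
The plan is to reduce everything to Steiner's formula (Equation \eqref{eqn:steineralt}) and match coefficients. By the rigid invariance of intrinsic volumes (part of Theorem \ref{thm:valuation}), we may assume without loss of generality that $T$ is the identity, so $K' = K \times \{0\}^k \subset \R^d \times \R^k = \R^{d+k}$. It then suffices to show that $V_j(K') = V_j(K)$ for every $0 \leq j \leq d$ (and incidentally that $V_j(K') = 0$ for $j > d$).

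The key observation is a direct description of the tube around $K'$ in $\R^{d+k}$: a point $(x, y) \in \R^d \times \R^k$ lies in $K' + \eps B_{d+k}$ if and only if $\|y\| \leq \eps$ and $x \in K + \sqrt{\eps^2 - \|y\|^2} B_d$. I would then use Fubini together with Steiner's formula applied to $K$ in $\R^d$:
\begin{align*}
\Vol_{d+k}(K' + \eps B_{d+k})
&= \int_{\|y\| \leq \eps} \Vol_d\bigl(K + \sqrt{\eps^2-\|y\|^2}\, B_d\bigr)\, dy\\
&= \sum_{j=0}^{d} \kappa_{d-j} V_j(K) \int_{\|y\|\leq\eps}(\eps^2 - \|y\|^2)^{(d-j)/2}\, dy.
\end{align*}

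The next step is to evaluate the inner integral. Passing to polar coordinates in $\R^k$ and substituting $r = \eps \sin\theta$ turns the integral into a standard Beta-function integral, which with the identity $\kappa_m = \pi^{m/2}/\Gamma(m/2+1)$ simplifies after some bookkeeping to
\[
\int_{\|y\|\leq\eps}(\eps^2-\|y\|^2)^{(d-j)/2}\,dy \;=\; \frac{\kappa_{d+k-j}}{\kappa_{d-j}}\,\eps^{d+k-j}.
\]
Substituting back yields $\Vol_{d+k}(K'+\eps B_{d+k}) = \sum_{j=0}^{d}\kappa_{d+k-j} V_j(K)\, \eps^{d+k-j}$. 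Matching this against Steiner's formula for $K'$ in $\R^{d+k}$, namely $\sum_{j=0}^{d+k}\kappa_{d+k-j}V_j(K')\eps^{d+k-j}$, and using that Steiner's formula uniquely determines the coefficients as a polynomial identity in $\eps$, gives $V_j(K')=V_j(K)$ for $0 \leq j \leq d$ and $V_j(K')=0$ for $d<j\leq d+k$.

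The main obstacle is the careful Gamma-function bookkeeping: one needs the Beta-function computation to collapse precisely to the ratio $\kappa_{d+k-j}/\kappa_{d-j}$, so that after multiplying by the $\kappa_{d-j}$ from Steiner's formula for $K$ one recovers exactly the normalization $\kappa_{d+k-j}$ appearing in Steiner's formula for $K'$. Everything else (the tube description, the Fubini step, and the coefficient matching) is straightforward once this identity is in place.
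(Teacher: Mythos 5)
Your proof is correct. Note, however, that the paper does not actually supply a proof of Theorem~\ref{thm:ambient}; it is stated as one of several summarized facts about intrinsic volumes with a pointer to Klain and Rota \cite{klain1997introduction}, so there is no ``paper's own proof'' to compare against. Your argument is the standard self-contained one: reduce to $K' = K \times \{0\}^k$ by rigid invariance, write the tube $K' + \eps B_{d+k}$ fiberwise as $\{(x,y) : \|y\| \le \eps,\ x \in K + \sqrt{\eps^2-\|y\|^2}\,B_d\}$, apply Fubini and Steiner's formula in $\R^d$, evaluate the resulting radial integral via the Beta function (the identity $\int_{\|y\|\le\eps}(\eps^2-\|y\|^2)^{m/2}\,dy = (\kappa_{m+k}/\kappa_m)\,\eps^{m+k}$ does check out with $\kappa_n = \pi^{n/2}/\Gamma(n/2+1)$), and match coefficients against Steiner's formula in $\R^{d+k}$. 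This yields $V_j(K')=V_j(K)$ for $0 \le j \le d$ and, as a bonus, $V_j(K')=0$ for $d < j \le d+k$. The only caveat worth flagging is that you implicitly invoke the fact that the Steiner coefficients uniquely determine the intrinsic volumes (i.e.\ that \eqref{eqn:steineralt} is a definition, not merely an identity); since the paper itself takes \eqref{eqn:steineralt} as the defining property of the $V_j$, this is fine in context.
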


We now provide an inequality between intrinsic volumes which we will use later to derive an isoperimetric inequality for intrinsic volumes. The following inequality
is a consequence of the Alexandrov-Fenchel inequality due to McMullen
\cite{mcmullen1991inequalities}.

\begin{theorem}[Inequality on intrinsic volumes]\label{thm:af}
If $S \in \Conv_d$ and any $i \geq 1$ then 
$$V_i(S)^2 \geq \frac{i+1}{i} V_{i-1}(S) V_{i+1}(S).$$
\end{theorem}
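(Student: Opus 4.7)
The plan is to derive the stated bound from the classical Alexandrov--Fenchel inequality for mixed volumes and then translate the result back into intrinsic volumes via Steiner's formula (Equation~\ref{eqn:steineralt}). Recall that Alexandrov--Fenchel states that for convex bodies $L_1,\dots,L_d \subseteq \R^d$ with mixed volume $V(\cdot)$,
$$V(L_1,L_2,L_3,\dots,L_d)^2 \;\geq\; V(L_1,L_1,L_3,\dots,L_d)\cdot V(L_2,L_2,L_3,\dots,L_d).$$
I would apply this with $L_1 = S$, $L_2 = B$ the unit ball, and $L_3,\dots,L_d$ taken to be $i-1$ copies of $S$ together with $d-i-1$ copies of $B$. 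Writing $M_j(S) := V(\underbrace{S,\dots,S}_{j},\underbrace{B,\dots,B}_{d-j})$, this collapses to the log-concavity
$$M_i(S)^2 \;\geq\; M_{i-1}(S)\, M_{i+1}(S).$$

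Next I would read off the relationship between mixed and intrinsic volumes from Steiner's formula: the classical expansion $\Vol(S+\eps B) = \sum_{j=0}^d \binom{d}{j} M_j(S)\,\eps^{d-j}$ compared with Equation~\ref{eqn:steineralt} gives $M_j(S) = \kappa_{d-j}V_j(S)/\binom{d}{j}$. Substituting this into the log-concavity of $M$ and simplifying the binomial coefficients yields
$$V_i(S)^2 \;\geq\; \frac{(i+1)(d-i+1)}{i(d-i)}\cdot\frac{\kappa_{d-i+1}\kappa_{d-i-1}}{\kappa_{d-i}^2}\cdot V_{i-1}(S)\,V_{i+1}(S).$$

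To finish, I would argue that the prefactor is at least $(i+1)/i$; equivalently, setting $n = d-i$, that $(n+1)\kappa_{n+1}\kappa_{n-1} \geq n \kappa_n^2$. Substituting $\kappa_m = \pi^{m/2}/\Gamma(m/2+1)$, all powers of $\pi$ cancel and (after using $\Gamma(n/2+3/2) = \tfrac{n+1}{2}\Gamma(n/2+1/2)$) the inequality reduces to $[\Gamma(n/2+1)/\Gamma(n/2+1/2)]^2 \geq n/2$. This is a direct consequence of the log-convexity of $\Gamma$ (Bohr--Mollerup): applied to the three points $n/2,\,n/2+1/2,\,n/2+1$ it gives $\Gamma(n/2+1/2)^2 \leq \Gamma(n/2)\Gamma(n/2+1) = (2/n)\,\Gamma(n/2+1)^2$, which rearranges to exactly the desired bound.

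The main obstacle is not Alexandrov--Fenchel itself (which I would invoke as a black box from the standard references) but cleanly extracting the constant. Naively combining AF with the mixed-to-intrinsic conversion produces a prefactor involving a ratio of $\kappa_n$'s that looks messier than the stated $(i+1)/i$; the clean form only emerges after one observes that this extra factor is always at least $1$, a fact that is not transparent from $\kappa_m = \pi^{m/2}/\Gamma(m/2+1)$ alone but follows from log-convexity of $\Gamma$.
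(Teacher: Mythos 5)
Your derivation is correct. Note that the paper does not actually prove Theorem~\ref{thm:af}: it cites it as a known consequence of Alexandrov--Fenchel due to McMullen, so you have supplied the derivation the paper leaves to the reference, and it is the standard route. The three key computations all check out: (i) Alexandrov--Fenchel with $L_1=S$, $L_2=B$ and the remaining slots filled by $i-1$ copies of $S$ and $d-i-1$ copies of $B$ does give $M_i(S)^2\geq M_{i-1}(S)M_{i+1}(S)$ (this needs $1\leq i\leq d-1$; the case $i=d$ is trivial since $V_{d+1}(S)=0$); (ii) comparing the mixed-volume Steiner expansion with Equation~\ref{eqn:steineralt} indeed gives $M_j(S)=\kappa_{d-j}V_j(S)/\binom{d}{j}$, and the binomial ratio simplifies to $\frac{(i+1)(d-i+1)}{i(d-i)}$ as you claim; (iii) the residual inequality $(n+1)\kappa_{n+1}\kappa_{n-1}\geq n\kappa_n^2$ with $n=d-i\geq 1$ reduces, via $\Gamma(n/2+3/2)=\frac{n+1}{2}\Gamma(n/2+1/2)$, to $\Gamma(n/2+1)^2/\Gamma(n/2+1/2)^2\geq n/2$, which follows from log-convexity of $\Gamma$ at the midpoint of $n/2$ and $n/2+1$. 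Your closing observation is also the right one: the constant $(i+1)/i$ is not the sharpest constant this argument yields (the ball saturates Alexandrov--Fenchel, so the true extremal constant carries the $\kappa$ ratio), and the stated form is obtained precisely by discarding the $\kappa$ factor after verifying it is at least $1$.
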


One beautiful consequence of Hadwiger's theorem is a probabilistic
interpretation of intrinsic volumes as the expected volume of the projection of
a set onto a random subspace. To make this precise, define the
Grassmannian $\Gr(d, k)$ as the collection of all $k$-dimensional
linear subspaces of $\R^d$. The \emph{Haar measure} on the Grassmannian is the
unique probability measure on $\Gr(d,k)$ that is invariant under rotations in $\R^d$ (i.e., $SO(\R^d)$).

\begin{theorem}[Random Projections]\label{thm:random_projections}
  For any $K \in \Conv_d$, the $j$-th intrinsic volume $$V_j(K) = \E_{H \sim
  \Gr(d,k)}[\Vol(\pi_H(K))]$$ where $H \sim \Gr(d,k)$ is a $k$-dimensional subspace
  $H$ sampled according to the Haar measure, $\pi_H$ is the projection on $H$
  and $\Vol(\pi_H(K))$ is the usual ($k$-dimensional) volume on $H$.
\end{theorem}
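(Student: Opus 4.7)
The plan is to apply Hadwiger's theorem (Theorem \ref{thm:hadwiger}) to the map
$$\nu(K) := \E_{H \sim \Gr(d,j)}[\Vol(\pi_H(K))],$$
show it is a continuous rigid valuation of the right homogeneity, and then fix the constant by evaluating on a convenient test body. Taking $k=j$ in the statement, my target is to recover $V_j(K)$ up to the normalization implicit in the paper's Steiner-formula definition of intrinsic volumes.

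\emph{Rigidity and continuity of $\nu$.} Translations of $K$ translate each projection $\pi_H(K)$, leaving its $j$-dimensional volume unchanged, so $\nu$ is translation invariant. For any rotation $\rho \in SO(\R^d)$, $\pi_H(\rho K)$ is congruent to $\pi_{\rho^{-1}H}(K)$, and rotation-invariance of the Haar measure on $\Gr(d,j)$ (together with the fact that volume is a rigid invariant on each $j$-dimensional subspace) gives $\nu(\rho K) = \nu(K)$. For continuity, orthogonal projection is $1$-Lipschitz in Hausdorff distance, so $K_t \to K$ implies $\pi_H(K_t) \to \pi_H(K)$ for every $H$; since the projected sets stay uniformly bounded and the $j$-dimensional volume is continuous on $\Conv_j$, dominated convergence passes the limit through the Haar expectation.

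\emph{Valuation property.} This is the delicate step. I need to verify that for $S_1, S_2 \in \Conv_d$ with $S_1 \cup S_2 \in \Conv_d$,
$$\pi_H(S_1 \cup S_2) = \pi_H(S_1) \cup \pi_H(S_2), \qquad \pi_H(S_1 \cap S_2) = \pi_H(S_1) \cap \pi_H(S_2).$$
The first equality is immediate. For the nontrivial direction of the second, take $x \in \pi_H(S_1) \cap \pi_H(S_2)$ and lift it to $y_1 \in S_1$, $y_2 \in S_2$ with $\pi_H(y_i) = x$. Convexity of $S_1 \cup S_2$ places the segment $[y_1,y_2]$ inside $S_1 \cup S_2$; the closed sets $A_i = \{t \in [0,1] : (1-t)y_1 + ty_2 \in S_i\}$ cover $[0,1]$, so by connectedness $A_1 \cap A_2 \neq \emptyset$, producing a point of $S_1 \cap S_2$ on the segment. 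Since the whole segment projects to $x$, this witnesses $x \in \pi_H(S_1 \cap S_2)$. Because $\Vol$ is itself a valuation on $\Conv_j$, integrating the inclusion--exclusion identity for $\Vol(\pi_H(\cdot))$ over $H$ yields the valuation identity for $\nu$.

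\emph{Identifying the intrinsic volume.} By Hadwiger's theorem, $\nu = \sum_{i=0}^{d} c_i V_i$ for constants $c_i$. Because $\pi_H$ is linear and $\Vol$ on $\R^j$ is $j$-homogeneous, $\nu(\alpha K) = \alpha^j \nu(K)$, whereas each $V_i$ is $i$-homogeneous; matching homogeneity forces $c_i = 0$ for $i \neq j$, so $\nu = c\, V_j$ for a single constant $c$. To pin down $c$ I would plug in the Euclidean unit ball $B \subset \R^d$: every $j$-dimensional projection is the unit $j$-ball, so $\nu(B) = \kappa_j$, while Steiner's formula (\ref{eqn:steineralt}) applied to $B + \eps B = (1+\eps)B$ gives $V_j(B) = \binom{d}{j}\kappa_d/\kappa_{d-j}$, determining $c$ and hence the precise relationship between $V_j(K)$ and the expected projected volume asserted in the statement. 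The main obstacle is the valuation identity for $\nu$; unlike rigidity and continuity it does not reduce to formal manipulation of the Haar integral but genuinely relies on the convexity hypothesis through the fiber-wise connectedness argument above.
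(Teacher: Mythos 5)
The paper states this result without proof, presenting it as a known consequence of Hadwiger's theorem, and your argument is precisely the proof it alludes to: you check that $\nu(K) = \E_{H}[\Vol(\pi_H(K))]$ is a continuous rigid valuation (your fiber-connectedness argument for $\pi_H(S_1\cap S_2) = \pi_H(S_1)\cap\pi_H(S_2)$ under convexity of $S_1\cup S_2$ is exactly the right way to handle the one nontrivial step), isolate $V_j$ via homogeneity, and normalize on a test body. The argument is correct; the one substantive thing your final computation surfaces is that the constant is \emph{not} $1$: with the Haar probability measure, the unit ball gives $\nu(B)=\kappa_j$ while $V_j(B)=\binom{d}{j}\kappa_d/\kappa_{d-j}$, so the correct identity is $V_j(K) = \binom{d}{j}\tfrac{\kappa_d}{\kappa_j\kappa_{d-j}}\,\E_{H}[\Vol(\pi_H(K))]$ (Kubota's formula), and the theorem as printed omits this flag coefficient. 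This discrepancy is harmless for every use made of the theorem in the paper (Lemma \ref{lem:oblique} and Theorem \ref{thm:intrinsicvolumeapprox} only need the identity up to a fixed, known constant), but your evaluation of $c$ should be read as a correction to the displayed equality rather than a confirmation of it.
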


A remark on notation: we use $\Vol$ to denote the standard notion of volume and
$V_j$ to denote intrinsic volumes. When analyzing an object in a $d$-dimensional
(sub)space, then $\Vol = V_d$.

\section{Higher dimensions}\label{sect:multid}

In this section, we generalize our algorithms from Section \ref{sect:twod} from the two-dimensional case to the general multi-dimensional case.

Both results require as a central component lower bounds on the intrinsic volumes of high dimensional cones. These bounds relate the intrinsic volume of a cone to the product of the cone's height and the intrinsic volume of the cone's base (a sort of  ``Fubini's theorem'' for intrinsic volumes).

More formally, a \textit{cone} $S$ in $\R^{d+1}$ is the convex hull of a $d$-dimensional convex set
$K$ and a point $p \in \R^{d+1}$. If the distance from $p$ to the affine subspace
containing $K$ is $h$,
we say the cone has \textit{height} $h$ and \textit{base} $K$. The lemma we require is the following.

\begin{lemma}[Cone Lemma]\label{lem:cone}
Let $K$ be a convex set in $\R^{d}$, and let $S$ be a cone in $\R^{d+1}$ with base $K$ and height $h$. Then, for all $0 \leq j \leq d$,
$$V_{j+1}(S) \geq \frac{1}{j+1}h V_{j}(K).$$
\end{lemma}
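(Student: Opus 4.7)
The plan is to prove the lemma via the random projection formula (Theorem~\ref{thm:random_projections}): write $V_{j+1}(S) \propto \int_{\Gr(d+1,j+1)} \Vol_{j+1}(\pi_H S)\, dH$, and lower-bound each integrand in terms of a traditional cone volume. The key observation is that because projection is linear, $\pi_H S = \Conv(\pi_H K \cup \{\pi_H p\})$ is itself a cone inside $H \cong \R^{j+1}$, with base $\pi_H K$ and apex $\pi_H p$; this mirrors the way the lemma reduces to the elementary cone volume formula $V_{d+1}(S) = \tfrac{h}{d+1} \Vol(K)$ in the top degree $j = d$.

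To analyze a generic $H$, I would decompose it as $H = H_0 \oplus \R v$, where $H_0 := H \cap L$ is (generically) a $j$-dimensional subspace of the base hyperplane $L$ containing $K$, and $v$ is a unit vector orthogonal to $H_0$ in $H$. Parametrizing $v = \cos\alpha \cdot w + \sin\alpha \cdot e$, with $e$ the unit normal to $L$ in $\R^{d+1}$, $w$ a unit vector in $L \cap H_0^\perp$, and $\alpha \in [-\pi/2,\pi/2]$ the tilt angle, a short computation shows $\pi_H p = h\sin\alpha \cdot v$, so the projected apex sits at signed distance $h|\sin\alpha|$ from the hyperplane $H_0 \subset H$. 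Consequently $\pi_H S$ contains a traditional sub-cone from $\pi_H p$ over a suitable cross-section of $\pi_H K$ normal to $v$, whose $(j+1)$-volume is at least $\tfrac{h|\sin\alpha|}{j+1} \cdot \Vol_j(\text{cross-section})$ by the classical cone formula.

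Finally, I would integrate this pointwise lower bound over the Grassmannian. The Haar measure on $\Gr(d+1, j+1)$ factors (up to normalization) as Haar on $\Gr(d, j)$ for $H_0 \subset L$ times the uniform measure on the unit sphere $S^{d-j} \subset H_0^\perp$ for $v$; in spherical coordinates the latter further decomposes into uniform measure on the sphere in $L \cap H_0^\perp$ for $w$ together with density $\cos^{d-j-1}\alpha\, d\alpha$ for $\alpha$. The tilt-angle integral $\int |\sin\alpha|\cos^{d-j-1}\alpha\, d\alpha$ evaluates to a finite universal constant depending only on $d-j$, and the remaining integral over $(H_0, w)$ should, after switching the order of integration and applying Kubota--Cauchy identities to appropriately chosen sections and projections of $K$, recover the translation-invariant quantity $V_j(K)$ with the correct normalization.

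The main obstacle I anticipate is twofold. First, tracking the various normalizing constants from Kubota--Cauchy formulas in ambient dimensions $d+1$, $d$, and in intermediate subspaces, and verifying that they all combine to produce exactly the factor $\tfrac{1}{j+1}$ on the right-hand side. Second, and more substantively, is choosing the sub-cone's cross-section carefully: the naive choice of slicing $\pi_H K$ by $H_0$ itself (a linear slice through the origin) is not translation-invariant in $K$, so its integrated volume would not directly yield $V_j(K)$ when $K$ is positioned away from the origin of $L$. This can be addressed either by letting the cross-section be an affine hyperplane $H_0 + t v$ with $t$ chosen adaptively to the $v$-extent of $\pi_H K$, or by integrating over all parallel cross-sections via a Cavalieri-type decomposition $\Vol_{j+1}(\pi_H S) = \int \Vol_j((\pi_H S) \cap (H_0 + t v))\, dt$, which restores the translation-invariance of the final integrand.
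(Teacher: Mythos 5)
Your approach is genuinely different from the paper's: you try to prove the inequality in one shot by integrating a per-subspace lower bound over the Grassmannian, whereas the paper never applies the random-projection formula to the cone directly. Your top-degree reduction is sound and the case $j=0$ does close (the projected segment from apex to foot has length $h|\sin\alpha|$, and the tilt-angle integral recovers the constant exactly), but for general $j$ there is a real gap at exactly the point you flag and do not resolve: you need the Grassmannian average of $\Vol_j$ of a \emph{specific slice} of $\pi_H K$ (the slice at the $v$-coordinate of the projected apex foot) to dominate $V_j(K)$ with the right normalization. No Kubota-type identity controls slices of projections from below; a slice is contained in the projection $\pi_{H_0}K$, so the comparison you can get for free goes the wrong way. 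Worse, the lemma permits an oblique cone whose apex foot $p_0$ lies far outside $K$; then for most $H$ the slice of $\pi_H K$ at the apex's $v$-coordinate is empty and your pointwise bound degenerates to zero, while the right-hand side $\frac{h}{j+1}V_j(K)$ does not. The adaptive-$t$ and Cavalieri fixes you sketch destroy the clean factorization into $h|\sin\alpha|$ times a base term, so the constant $\frac{1}{j+1}$ and the identification with $V_j(K)$ both become unclear. As written, the argument does not close.

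For contrast, the paper sidesteps Grassmannian integration over the cone entirely. It first establishes an exact Steiner-formula identity for orthogonal cylinders, $V_{j+1}(K+L) = V_{j+1}(K) + hV_j(K)$; it then shows oblique cylinders dominate orthogonal ones of the same base and height via a cut-and-reassemble argument combined with a contraction/stretch estimate on surface area (random projections are used only to reduce this cylinder comparison from general $j$ to the top degree). Finally it slices the cone into $n$ thin frusta, each containing a thin oblique cylinder over a scaled copy $\frac{i}{n}K$ of the base, and the factor $\frac{1}{j+1}$ emerges as the Riemann sum $\sum_i (\frac{i}{n})^j\frac{1}{n} \to \int_0^1 x^j\,dx$ rather than from a spherical integral. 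If you want to salvage your route, you would need a genuinely new inequality relating averaged slice volumes of projections to $V_j(K)$; the cylinder-approximation route avoids needing any such statement.
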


In the two-dimensional case, this lemma manifests itself when we use the fact that the perimeter of a convex set with height $h$ is at least $h$. We note that when $j = d$, Lemma \ref{lem:cone} holds with equality and is a
simple exercise in elementary calculus. On the other hand, when $0 \leq j < d$, there is no
straightforward formula for the $(j+1)$-th intrinsic volume of a set in terms of the $j$-th intrinsic volume of its cross sections. 

We begin by taking the Cone Lemma as true, and discuss how to use it to generalize our
contextual search algorithms to higher dimensions in Sections \ref{sec:sym_loss} (for symmetric loss) and \ref{sec:price_loss} (for pricing loss). We then prove the Cone Lemma in Section~\ref{sec:cone_lemma}. Finally, in Section \ref{sec:efficient}, we argue that both algorithms can be implmented efficiently.

\subsection{Symmetric loss}\label{sec:sym_loss}

In this section we present a $O_{d}(1)$ regret algorithm for the contextual search problem with symmetric loss in $d$ dimensions. The algorithm, which we call SymmetricSearch, is presented in Algorithm \ref{algo:symmetricdd}.

Recall that in two dimensions, we always managed to choose $p_t$ so that the loss from that round is bounded by the decrease in either the perimeter or the square root of the area. The main idea of Algorithm \ref{algo:symmetricdd} is to similarly choose $p_t$ such that the loss is bounded by the decrease in one of the intrinsic volumes, appropriately normalized. As before, if the width is large enough, we bound the loss by the 
decrease in the average width (i.e. the one-dimensional intrinsic volume $V_1(S)$). As the width gets smaller, we charge the loss to progressively higher-dimensional intrinsic volumes.

Constants $c_0$ through $c_{d-1}$ in Algorithm \ref{algo:symmetricdd} are defined so that $c_{0} =
1$ and $c_{i}/c_{i-1} = \frac{1}{2i}$. In other words, $c_i =
\frac{1}{2^{i-1}i!}$. Constant $c_0$ is only used in the analysis.

\begin{algorithm}[h]
 \caption{SymmetricSearch}
 \begin{algorithmic}[1]  \label{algo:symmetricdd}
 \STATE $w = \frac{1}{2}\wid(S_t; u_t)$
 \FOR{$i=1$ to $d$}
  \STATE define $p_i \in \R$ such that $V_i(S_t^+(p_i; u_t)) =
 V_i(S_t^-(p_i; u_t))$.
  \STATE define $K_i = \{x \in S_t; \dot{u_t}{x} = p_i \}$.
  \STATE define $L_i = (V_i(K_i) / c_i)^{1/i}$ (set $L_0 = \infty$).
 \ENDFOR
  \STATE find $j$ such that $L_{j-1} \geq w \geq L_j$\\
  \STATE set $p_t = p_j$. 
\end{algorithmic}

\end{algorithm}

We first argue that this algorithm is well-defined:

\begin{lemma} SymmetricSearch (Algorithm \ref{algo:symmetricdd}) is well-defined, i.e., there is
  always a choice of $p_i$ and $j$ that satisfies the required properties.
\end{lemma}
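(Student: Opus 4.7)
The plan is to verify the two requirements in turn: (1) for each $i \in \{1,\ldots,d\}$ the equipartition price $p_i$ is well-defined, and (2) an index $j \in \{1,\ldots,d\}$ satisfying $L_{j-1} \geq w \geq L_j$ exists. Both are soft statements that follow from the qualitative properties of intrinsic volumes recorded in Section~\ref{sect:intrinsic} (continuity, monotonicity, ambient independence), together with an intermediate-value argument.

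For (1), fix $i$ and consider the function $f_i : [\underline{p}_t, \overline{p}_t] \to \R$ defined by $f_i(p) = V_i(S_t^+(p; u_t)) - V_i(S_t^-(p; u_t))$. First I would check that as $p$ varies continuously, the sets $S_t^+(p; u_t)$ and $S_t^-(p; u_t)$ vary continuously in the Hausdorff metric (assuming, as we may without loss of generality, that $S_t$ is full-dimensional). Combined with continuity of $V_i$ (Theorem~\ref{thm:valuation}, Definition~\ref{def:continuous}), this makes $f_i$ continuous. At the left endpoint, $S_t^+(\underline{p}_t; u_t) = S_t$ while $S_t^-(\underline{p}_t; u_t)$ is the lower supporting face of $S_t$ in direction $u_t$, so monotonicity (Theorem~\ref{thm:valuation}) gives $f_i(\underline{p}_t) \geq 0$; the right endpoint is symmetric, yielding $f_i(\overline{p}_t) \leq 0$. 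The intermediate value theorem then produces $p_i$ with $f_i(p_i) = 0$, which is the defining property.

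For (2), note that $L_0 = \infty \geq w$ by the definition in Algorithm~\ref{algo:symmetricdd}. The key remaining observation is that $L_d = 0$. Indeed, $K_d = S_t \cap \{x : \langle u_t, x\rangle = p_d\}$ is contained in a hyperplane of $\R^d$, hence has $d$-dimensional Lebesgue measure zero; since $V_d = \Vol$, this gives $V_d(K_d) = 0$ and therefore $L_d = (V_d(K_d)/c_d)^{1/d} = 0 \leq w$. Having both $L_0 \geq w$ and $L_d \leq w$, I would choose $j$ to be the smallest index in $\{1,\ldots,d\}$ with $L_j \leq w$; by minimality $L_{j-1} \geq w$ (either $j=1$ and $L_0 = \infty$, or $j > 1$ and $L_{j-1} > w$ by the choice of $j$), which completes the verification.

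The main (and only mild) obstacle is justifying that the sliced intrinsic volumes $p \mapsto V_i(S_t^{\pm}(p; u_t))$ depend continuously on $p$. This reduces to Hausdorff continuity of the cuts, which is a standard fact for full-dimensional convex bodies and compatible with the convergence notion used in Definition~\ref{def:continuous}; the rest of the argument is bookkeeping with IVT and with the vanishing of the top intrinsic volume on a lower-dimensional slice.
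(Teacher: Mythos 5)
Your proposal is correct and follows essentially the same route as the paper: an intermediate-value argument for the existence of $p_i$ (the paper phrases it via the two monotone continuous functions $\phi^{\pm}(x)=V_i(S_t^{\pm}(x;u_t))$ crossing, you via the single difference $f_i=\phi^+-\phi^-$ changing sign), and the observation that $L_0=\infty$ and $L_d=0$ (since $K_d$ lies in a hyperplane) for the existence of $j$. Your explicit choice of $j$ as the smallest index with $L_j\leq w$ is a slightly cleaner way to handle possible non-monotonicity of the $L_j$ than the paper's covering statement, but the substance is identical.
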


\begin{proof}
We begin by arguing that there exists a $p_i$ such that $V_i(S_t^+(p_i; u_t)) = V_i(S_t^-(p_i;
u_t))$. To see this, note that the functions $\phi^+(x) =V_i(S_t^+(x; u_t))$ and
$\phi^-(x) = V_i(S_t^-(x; u_t))$ are continuous on $[\underline{p}_t,
\overline{p}_t]$ since the intrinsic volumes are continuous with respect to
  Hausdorff distance (Definition \ref{def:continuous} and Theorem
  \ref{thm:valuation}).
Moreover, since intrinsic volumes are monotone (Theorem \ref{thm:valuation}), $\phi^{+}(x)$ is decreasing and $\phi^-$ is increasing on this interval. Finally, since $\phi^{+}(\underline{p}_t) = \phi^-(\overline{p}_t)$, it follows from the Intermediate Value Theorem that there exists a $p_i$ where $\phi^+(p_i) = \phi^-(p_i)$, as desired.

To see that there exists a $j$ such that $L_{j-1}
\geq w \geq L_j$, note that $L_d = 0$ since $K_d$ is in a $d-1$-dimensional
hyperplane, so the segments $[L_{j}, L_{j-1})$ for $L_{j} < L_{j-1}$ cover the
entire $[0,\infty)$. It follows that one such interval must contain $w$.
\end{proof}

Before we proceed to the regret bound, we will show the following two lemmas. The first lemma shows that if we pick $j$ in this manner, then $V_{j}(S_t)^{1/j}$ will be at least $\Omega(w)$. 

\begin{lemma}\label{lemma:symm1}
  $V_j(S_t) \geq \frac{1}{j} c_{j-1} w^j$.
\end{lemma}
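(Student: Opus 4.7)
My plan is to find a cone sitting inside $S_t$ whose $j$-th intrinsic volume already meets the desired lower bound, and then invoke monotonicity of $V_j$. The Cone Lemma is tailor-made for this: it converts a $(j{-}1)$-dimensional intrinsic volume of a base into a $j$-dimensional intrinsic volume of a cone, so I want a cone whose base is the cross-section $K_{j-1}$ (already known to be ``large'' because $L_{j-1}\geq w$) and whose height in the $u_t$ direction is at least $w$.

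First I would unwind the hypothesis on $j$. By definition $L_{j-1}=(V_{j-1}(K_{j-1})/c_{j-1})^{1/(j-1)}$, so $L_{j-1}\geq w$ immediately gives $V_{j-1}(K_{j-1})\geq c_{j-1}w^{j-1}$. Next I would build the apex of the cone. Since $p_{j-1}\in[\underline{p}_t,\overline{p}_t]$ and $\overline{p}_t-\underline{p}_t=2w$, at least one of $\overline{p}_t-p_{j-1}$ and $p_{j-1}-\underline{p}_t$ is at least $w$; without loss of generality assume $\overline{p}_t-p_{j-1}\geq w$. Pick any $x_{\max}\in S_t$ with $\langle u_t,x_{\max}\rangle=\overline{p}_t$ and set $C=\Conv(K_{j-1}\cup\{x_{\max}\})$. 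Then $C$ is a cone with base $K_{j-1}$ (which lies in the hyperplane $\{\langle u_t,\cdot\rangle=p_{j-1}\}$) and height $h=\overline{p}_t-p_{j-1}\geq w$, and convexity of $S_t$ together with $K_{j-1}\subseteq S_t$ and $x_{\max}\in S_t$ forces $C\subseteq S_t$.

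Now the Cone Lemma (combined with ambient independence, Theorem~\ref{thm:ambient}, to accommodate the fact that $K_{j-1}$ is $(d{-}1)$-dimensional rather than $d$-dimensional in its ambient space) yields
\[
V_j(C)\;\geq\;\tfrac{1}{j}\,h\,V_{j-1}(K_{j-1})\;\geq\;\tfrac{w}{j}\cdot c_{j-1}w^{j-1}\;=\;\tfrac{1}{j}c_{j-1}w^j,
\]
and monotonicity of $V_j$ (Theorem~\ref{thm:valuation}) upgrades this to $V_j(S_t)\geq \tfrac{1}{j}c_{j-1}w^j$, as required.

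The one case needing a separate word is $j=1$, where $K_0$ is not defined by the algorithm (we adopted $L_0=\infty$ as a convention). There I would simply observe that $S_t$ contains the line segment joining any two points achieving $\underline{p}_t$ and $\overline{p}_t$ in direction $u_t$; this segment has length $2w$, so by monotonicity and ambient independence $V_1(S_t)\geq 2w\geq w=c_0 w$. The main potential pitfall is the dimension mismatch when applying the Cone Lemma, but since intrinsic volumes are intrinsic (Theorem~\ref{thm:ambient}) this is only a notational issue, so I expect no real obstacle in the argument.
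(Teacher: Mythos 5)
Your proof is correct and follows essentially the same route as the paper's: the hypothesis $w\leq L_{j-1}$ gives $V_{j-1}(K_{j-1})\geq c_{j-1}w^{j-1}$, the width condition produces a cone in $S_t$ with base $K_{j-1}$ and height at least $w$, and the Cone Lemma plus monotonicity finish the argument, with the $j=1$ case handled separately via the segment of length $2w$. Your explicit invocation of ambient independence to handle the dimension of $K_{j-1}$ is a detail the paper leaves implicit, but the substance is identical.
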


\begin{proof}
  For $j=1$, note that $S_t$ contains a segment of length $2w$, so $V_1(S_t)$
  is at most the $1$-dimensional intrinsic volume of that segment, which is
  exactly $2w$ (Theorem \ref{thm:ambient}).

  For $j > 1$, we known that $S_t$ contains a cone with base $K_{j-1}$ and
  height $w$ (since the width of $S_{t}$ is $2w$, there is a point at least
  distance $w$ from the plane $H_{j-1} = \{ x; \dot{u_t}{x} = p_{j-1} \}$).
  By Theorem \ref{lem:cone} and the fact that $V_j$ is monotone, this implies
  that:
  $$V_{j}(S_t) \geq \frac{1}{j}w V_{j-1}(K_{j-1}).$$
  Since $w < L_{j-1} = (V_{j-1}(K_{j-1}) / c_{j-1})^{1/{j-1}}$, we have that
  $V_{j-1}(K_{j-1}) \geq c_{j-1} w^{j-1}$. Substituting this into the previous
  expression, we obtain the desired result.
\end{proof}

The second lemma shows that if we pick $j$ in this manner, then $V_{j}(S_{t+1})/V_{j}(S_t)$ is bounded above by a constant strictly less than $1$.

\begin{lemma}\label{lemma:symm2}
  $V_j(S_{t+1}) \leq \frac{3}{4} V_j(S_t)$.
\end{lemma}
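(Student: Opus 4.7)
The plan is to exploit the valuation property of the intrinsic volume $V_j$ together with the definition of $p_j$ chosen by the algorithm, and then combine with the two bounds on $V_j(K_j)$ and $V_j(S_t)$ that fall out of the algorithm's construction.

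First, observe that by construction $K_j = S_t^+(p_j;u_t) \cap S_t^-(p_j;u_t)$ and $S_t = S_t^+(p_j;u_t) \cup S_t^-(p_j;u_t)$. Since $V_j$ is a valuation (Theorem \ref{thm:valuation}), this gives
$$V_j(S_t^+(p_j;u_t)) + V_j(S_t^-(p_j;u_t)) = V_j(S_t) + V_j(K_j).$$
The algorithm chose $p_j$ precisely so that $V_j(S_t^+(p_j;u_t)) = V_j(S_t^-(p_j;u_t))$, and regardless of the feedback, $S_{t+1}$ equals one of these two halves. Hence
$$V_j(S_{t+1}) = \tfrac{1}{2}\bigl(V_j(S_t) + V_j(K_j)\bigr),$$
so the desired inequality $V_j(S_{t+1}) \le \tfrac{3}{4} V_j(S_t)$ reduces to showing $V_j(K_j) \le \tfrac{1}{2} V_j(S_t)$.

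Next, I would bound the two sides of this inequality using $w$ and the algorithm's choice of $j$. From $w \ge L_j = (V_j(K_j)/c_j)^{1/j}$ we get the upper bound $V_j(K_j) \le c_j w^j$. From Lemma \ref{lemma:symm1} applied with the same index $j$, we get the lower bound $V_j(S_t) \ge \tfrac{1}{j} c_{j-1} w^j$. Taking the ratio,
$$\frac{V_j(K_j)}{V_j(S_t)} \le \frac{c_j w^j}{(1/j)\, c_{j-1} w^j} = \frac{j\, c_j}{c_{j-1}} = j \cdot \frac{1}{2j} = \frac{1}{2},$$
where the last equality uses the defining relation $c_i/c_{i-1} = 1/(2i)$. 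Substituting back into the valuation identity yields $V_j(S_{t+1}) \le \tfrac{1}{2}(V_j(S_t) + \tfrac{1}{2}V_j(S_t)) = \tfrac{3}{4} V_j(S_t)$, as required.

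I do not foresee a real obstacle here: the whole argument is essentially a two-line computation once the valuation identity is written down. The only conceptual point to be careful about is that intrinsic volumes, unlike ordinary volume, are not simply additive across a hyperplane cut; the $V_j(K_j)$ correction term shows up, and the constants $c_i$ in the algorithm were chosen exactly so that Lemma \ref{lemma:symm1}'s lower bound is tight enough to absorb this correction by a factor of two. In this sense the lemma is really just verifying that the constants $c_i = 1/(2^{i-1} i!)$ are well-calibrated against the Cone Lemma.
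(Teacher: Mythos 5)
Your proof is correct and follows essentially the same route as the paper's: the valuation identity $V_j(S^+)+V_j(S^-)=V_j(S_t)+V_j(K_j)$ combined with the equal split, the bound $V_j(K_j)\le c_j w^j$ from $w\ge L_j$, and Lemma \ref{lemma:symm1} to get $V_j(K_j)\le \tfrac12 V_j(S_t)$. No differences worth noting.
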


\begin{proof}
  The set $S_{t+1}$ is equal to either $S^+ = S^+_t(p_j;
  u_t)$ or $S^- = S^-_t(p_j;u_t)$. Our choice of $p_j$ is such that $V_j(S^-) =
  V_j(S^+)$. Therefore:
  $$2 V_j(S_{t+1}) = V_j(S^+) + V_j(S^-) = V_j(S^- \cap S^+) + V_j(S^- \cup S^+)
  = V_j(K_j) + V_j(S_t) $$
  To bound $V_j(K_j)$ in terms of $V_j(S_t)$ we observe
  that $w \geq L_j = (V_j(K_j) / c_j)^{1/j}$ so $V_j(K_j) \leq c_j w^j$.
  Plugging the previous lemma we get $V_j(K_j) \leq j
  \frac{ c_{j} }{ c_{j-1} } V_j(S_t) = \frac{1}{2} V_j(S_t)$ by the choice of
  constants. Substituting this inequality into the above equation gives us
  the desired result.
\end{proof}

Together, these lemmas let us argue that each round, the sum of the normalized intrinsic volumes $V_{i}(S_t)^{1/i}$ decreases by at least $\Omega(w)$ (and hence the total regret is constant).

\begin{theorem}\label{thm:symsearch}
  The SymmetricSearch algorithm
  (Algorithm \ref{algo:symmetricdd}) has regret bounded by 
  $O(d^4)$ for the symmetric loss.
\end{theorem}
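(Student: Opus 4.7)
The plan is to introduce a potential function $\Phi_t = \sum_{j=1}^d C_j V_j(S_t)^{1/j}$ for positive constants $C_j$ to be chosen, and show that each round $\Phi_t - \Phi_{t+1} \geq \ell_t$, so that $\Reg \leq \Phi_1 - \Phi_{T+1} \leq \Phi_1$. The theorem then reduces to bounding the initial potential $\Phi_1 = \sum_j C_j V_j([0,1]^d)^{1/j} = \sum_j C_j \binom{d}{j}^{1/j}$.

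For the per-round decrease, fix $t$ and let $j$ be the index chosen by the algorithm (so that $L_{j-1} \geq w \geq L_j$, where $w = \wid(S_t;u_t)/2$). By Lemma \ref{lemma:symm2}, $V_j(S_{t+1})^{1/j} \leq (3/4)^{1/j} V_j(S_t)^{1/j}$, so
$$V_j(S_t)^{1/j} - V_j(S_{t+1})^{1/j} \geq \bigl(1-(3/4)^{1/j}\bigr) V_j(S_t)^{1/j}.$$
Combining with Lemma \ref{lemma:symm1}, which gives $V_j(S_t)^{1/j} \geq (c_{j-1}/j)^{1/j} w$, and using monotonicity of the $V_i$ (Theorem \ref{thm:valuation}) to ensure the other terms in $\Phi_t$ do not increase, we get
$$\Phi_t - \Phi_{t+1} \geq C_j \alpha_j w, \qquad \alpha_j := \bigl(1-(3/4)^{1/j}\bigr)(c_{j-1}/j)^{1/j}.$$
Since $\ell_t \leq 2w$, choosing $C_j := 2/\alpha_j$ makes $\Phi_t - \Phi_{t+1} \geq \ell_t$ as required.

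To finish, I would estimate $C_j$ and $\binom{d}{j}^{1/j}$. Elementary calculus gives $1-(3/4)^{1/j} \geq \ln(4/3)/(2j) = \Omega(1/j)$. A Stirling calculation using $c_{j-1} = 1/(2^{j-2}(j-1)!)$ yields $(c_{j-1}/j)^{1/j} = (2^{2-j}/j!)^{1/j} = \Theta(1/j)$. Hence $\alpha_j = \Omega(1/j^2)$ and $C_j = O(j^2)$ uniformly in $j$. Using the crude bound $\binom{d}{j}^{1/j} \leq d$, we conclude
$$\Phi_1 = \sum_{j=1}^d C_j \binom{d}{j}^{1/j} \leq d \sum_{j=1}^d O(j^2) = O(d^4).$$

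The main obstacle in this plan is purely bookkeeping rather than conceptual: verifying that $(c_{j-1}/j)^{1/j}$ really is $\Theta(1/j)$ with a uniform constant (the combination of a factorial in $j$ and the $1/j$-th power is where Stirling is essential), so that the coefficients $C_j = 2/\alpha_j$ grow like $j^2$ rather than faster. Every geometric ingredient needed --- monotonicity, continuity, the Cone Lemma (already used in Lemma \ref{lemma:symm1}), and the halving property (Lemma \ref{lemma:symm2}) --- is already in hand.
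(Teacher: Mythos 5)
Your proposal is correct and follows exactly the same approach as the paper: the paper uses the potential $\Phi_t = \sum_{i=1}^d i^2 V_i(S_t)^{1/i}$, and your generic $C_j$ (once estimated via the Stirling step you outline) works out to $\Theta(j^2)$, matching the paper's choice, with the same appeal to Lemma \ref{lemma:symm1}, Lemma \ref{lemma:symm2}, monotonicity, and the bound $\binom{d}{j}^{1/j} = O(d)$. The bookkeeping you flag as the remaining work --- showing $(c_{j-1}/j)^{1/j} = \Theta(1/j)$ uniformly --- does go through: $c_{j-1}/j = 2^{2-j}/j!$, and $(j!)^{1/j} \in [j/e, j]$ gives $(c_{j-1}/j)^{1/j} \in [2^{2/j-1}/j,\ 2^{2/j-1}e/j]$, both $\Theta(1/j)$ with absolute constants, exactly as the paper also uses (and your $1-(3/4)^{1/j} \geq \ln(4/3)/(2j)$ bound is valid for $j \geq 1$, and if anything slightly cleaner than the corresponding step as written in the paper).
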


\begin{proof}
  We will show that for the potential function $\Phi_t = \sum_{i=1}^d
  i^2 V_i(S_t)^{1/i}$ we can always charge the loss to the decrease in potential,
  i.e., $\Phi_t - \Phi_{t+1} \geq \Omega(w) \geq \Omega(\ell_t)$ and therefore,
  $\Reg \leq \sum_{t=1}^\infty \ell_t \leq O(\Phi_1)$. The initial potential is
  $$\Phi_1 = \sum_{i=1}^d i^2 V_i([0,1]^d)^{1/i} = \sum_{i=1}^d i^2 {d \choose
  i}^{1/i} \leq \sum_{i=1}^d i^2 O(d) = O(d^4)$$

  Since $V_j(S_t) \geq V_j(S_{t+1})$ by monotonicity, we can bound the potential
  change by $\Phi_t - \Phi_{t+1} \geq 
  j^2 [ V_j(S_t)^{1/j} - V_j(S_{t+1})^{1/j} ]$. We now show that this
  last term is $\Omega(w)$:
  
  \begin{eqnarray*}
  j^2 [ V_j(S_t)^{1/j} - V_j(S_{t+1})^{1/j} ] &\geq& j^2 \left(1 -\left(\frac{3}{4}\right)^{1/j}\right)V_{j}(S_t)^{1/j} \\
  &\geq & j^2 \left(1 -\left(\frac{3}{4}\right)^{1/j}\right)  \left( \frac{c_{j-1}}{j} \right)^{1/j} w  \\
  &\geq & j^2 \left(1 - (1 - \log(4/3)/j)\right)\left(\frac{1}{2^{j-2}j!}\right)^{1/j} w\\
  &\geq & j^2 \left(\frac{\log(4/3)}{j}\right)\left(\frac{e}{2j}\right)w \\
  &\geq & \Omega(w).
  \end{eqnarray*}
  Here the first inequality follows from Lemma \ref{lemma:symm2} and the
  second from Lemma \ref{lemma:symm1}. 
\end{proof}

\subsection{Pricing loss}\label{sec:price_loss}

In the 2-dimensional version of the dynamic pricing problem, we decomposed the range
of each potential into $O(\log \log T)$ buckets and used the isoperimetric
inequality $\sqrt{4 \pi A_t} \leq  P_t$ to argue that (when suitably
normalized), the area always belonged to a higher bucket than the perimeter. To apply the same idea here,
we will apply our inequality on intrinsic volumes (Theorem
\ref{thm:af}) to obtain an isoperimetric inequality for
intrinsic volumes:

\begin{lemma}[Isoperimetric inequality]\label{lem:iso}
For any $S \in \Conv_d$ and any $i \geq 1$ it holds that
$$(i! V_i(S))^{1/i} \geq ((i+1)!V_{i+1}(S))^{1/(i+1)}.$$
\end{lemma}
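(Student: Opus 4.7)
The plan is to reduce the claimed inequality to a log-concavity statement about a normalized version of the intrinsic volumes. Define $a_i = i! \, V_i(S)$ for $i = 0, 1, \dots, d$. In these variables, the claim becomes the clean statement
$$a_i^{1/i} \geq a_{i+1}^{1/(i+1)},$$
i.e., that $(\log a_i)/i$ is non-increasing in $i$.

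The first step is to translate Theorem~\ref{thm:af} into a log-concavity statement about the sequence $(a_i)$. Substituting $V_j = a_j/j!$ into $V_i^2 \geq \tfrac{i+1}{i} V_{i-1} V_{i+1}$ and simplifying the factorial prefactors (the ratio $\tfrac{i+1}{i} \cdot \tfrac{(i!)^2}{(i-1)!(i+1)!}$ collapses to $1$), we obtain
$$a_i^2 \geq a_{i-1}\, a_{i+1}, \qquad i = 1, \dots, d-1.$$
So $(a_i)$ is log-concave. I will treat separately the degenerate case where some $a_i = 0$: in that case log-concavity (together with $a_j \geq 0$) forces $a_{i+1} = 0$ as well, and the inequality holds trivially with both sides equal to zero. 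So from here on assume all $a_i > 0$.

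The second step is to unwind log-concavity by telescoping. Set $b_i = \log a_i$. Log-concavity says the successive differences $b_i - b_{i-1}$ are non-increasing in $i$. Crucially, $a_0 = 0! \cdot V_0(S) = 1$ (since $V_0$ is the Euler characteristic, which equals $1$ for any nonempty convex body), so $b_0 = 0$. Therefore
$$b_i \;=\; \sum_{k=1}^{i} (b_k - b_{k-1}) \;\geq\; i \cdot (b_{i+1} - b_i),$$
where the inequality uses that each increment $b_k - b_{k-1}$ on the left is at least the smallest one, namely $b_{i+1} - b_i$. Rearranging gives $(i+1) b_i \geq i\, b_{i+1}$, i.e., $b_i/i \geq b_{i+1}/(i+1)$, which is exactly $a_i^{1/i} \geq a_{i+1}^{1/(i+1)}$.

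The main (minor) subtlety is that the telescoping argument needs $b_0 = 0$ as an anchor to convert the local log-concavity into the global power-mean-type inequality; the identity $V_0(S) = 1$ is exactly what supplies this anchor. The other thing to be careful about is the degenerate case with vanishing intrinsic volumes, which as noted above is handled immediately by the fact that log-concavity propagates zeros upward.
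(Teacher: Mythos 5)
Your proof is correct and is essentially the paper's argument in a different packaging: both rest on Theorem~\ref{thm:af} together with the anchor $V_0(S)=1$, and your observation that the normalization $a_i = i!\,V_i(S)$ turns Theorem~\ref{thm:af} into plain log-concavity $a_i^2 \geq a_{i-1}a_{i+1}$, followed by telescoping, is just an unrolled form of the paper's induction on $i$. The log-concavity framing is a clean way to see it, and your handling of the degenerate case (zeros propagate upward from the minimal vanishing index, since $a_0 = 1 > 0$) fills a detail the paper glosses over.
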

\begin{proof}
We proceed by induction. For $i = 1$, note that Theorem \ref{thm:af} gives us that $V_{1}(S)^2 \geq 2V_{0}(S)V_{2}(S)$. Since $V_{0}(S)$ equals $1$ for any convex set $S$, this reduces to $V_{1}(S) \geq \sqrt{2!V_{2}(S)}$. 

Now assume via the inductive hypothesis that we have proven the claim for all $j \leq i$. From Theorem \ref{thm:af} we have that

\begin{eqnarray*}
V_{i}(S)^2 &\geq & \frac{i+1}{i}V_{i-1}(S)V_{i+1}(S) = \frac{i+1}{i!} ((i-1)!V_{i-1}(S))V_{i+1}(S) \\
&\geq & \frac{i+1}{i!} (i!V_{i}(S))^{(i-1)/i}V_{i+1}(S) = \frac{1}{i!^{(i+1)/i}} V_{i}(S)^{(i-1)/i}(i+1)!V_{i+1}(S).
\end{eqnarray*}

Rearranging, this reduces to $(i!V_{i}(S))^{(i+1)/i} \geq (i+1)!V_{i+1}(S)$, and therefore $(i!V_{i}(S))^{1/i} \geq ((i+1)!V_{i+1}(S))^{1/(i+1)}$.
\end{proof}

Inspired by the isoperimetric inequality we will keep track of the following
``potentials'' (these vary with $t$, but we will omit the subscript for notational convenience):
$$\varphi_{i} = (i! V_i(S_t))^{1/i}  $$
Since $S_1 = [0,1]^d$, their initial values will be given by
$\varphi_{i} = (i! \cdot {d \choose i})^{1/i} < di
\leq d^2$. Since those quantities are monotone non-increasing, they will be in
the interval $[0,d^2)$. We will divide this interval in ranges of
doubly-exponentially decreasing length (as in one and two dimensions). The
ranges will be $(\ell_{k+1}, \ell_k]$ where
  $$\ell_k = d^2 \exp(-\alpha^k) \quad \text{for} \quad \alpha = 1+1/d$$

To keep track of which range each of our potentials $\phi_i$ belongs to, define $k_{i}$ so that $\varphi_{i} \in (\ell_{k_i+1}, \ell_{k_i}]$. By the
  isoperimetric inequality we know that:
  $$\varphi_1 \geq \varphi_2 \geq \hdots \geq \varphi_d \qquad k_1 \leq k_2 \leq \hdots
  \leq k_d$$

Recall that in the 2-dimensional case, whenever the perimeter and the area were
in the same range, we chose to make progress in the area. To extend
this idea to higher dimensions, whenever many $\phi_i$ belong to the same
range and we decide to make progress on that range, we will always choose the
largest such $\varphi_i$:

$$M(i) = \max \{j ; k_i = k_j \}$$

The complete method is summarized in Algorithm \ref{algo:pricingdd}. As before, constants $c_0$ through $c_{d-1}$ in Algorithm \ref{algo:pricingdd} are defined so that $c_{0} =1$ and $c_{i}/c_{i-1} = \frac{1}{2i}$. In other words, $c_i =\frac{1}{2^{i-1}i!}$.

We begin by arguing that our algorithm is well-defined. We ask the reader to recall the notation $\underline p_t = \min_{x \in S_t} \dot{u_t}{x}$ and $\overline p_t = \max_{x
\in S_t} \dot{u_t}{x}$.

\begin{algorithm}[h]
 \caption{PricingSearch}
 
  \begin{algorithmic}[1]  \label{algo:pricingdd}
 \STATE $w = \frac{1}{2}\wid(S_t; u_t)$
 \FOR{$i = 1$ to $d$}
 	\STATE let $\varphi_{i} = (i! \cdot V_i(S_t))^{1/i}$ and $k_i$ such that
 $\varphi_{i} \in (\ell_{k_i+1},
 \ell_{k_i}]$ 
   	\IF {$V_i(S_t) - V_i(S_t^+(\overline p_t; u_t)) > \ell^i_{k_i + 1} / ( 2 \cdot i! )$}
      \STATE choose $p_i$ such that $V_i(S_t) - V_i(S_t^+(p_i; u_t)) =  \ell^i_{k_i + 1} /
  ( 2 \cdot i! )$
    \ELSE
      \STATE choose $p_i = \overline p_t$
    \ENDIF
  	\STATE define $K_i = \{x \in S_t; \dot{u_t}{x} = p_i \}$ 
  	\STATE define $L_i = (V_i(K_i) / c_i)^{1/i}$ (define $L_0 = \infty$)
 \ENDFOR
 \IF{$w < 1/T$}
 	\STATE set $p_t = \underline p_t$.
 \ELSE
	\STATE  let $M(i) = \max \{j ; k_i = k_j \}$ 
	\STATE  find a $j$ such that $L_{j-1} \geq w \geq L_{M(j)}$ 
	\STATE  let $J = M(j)$ and set $p_t = p_{J}$.
 \ENDIF
\end{algorithmic}

\end{algorithm}

\begin{lemma} PricingSearch (Algorithm \ref{algo:pricingdd}) is well defined, i.e., it is always
  possible to choose $p_i$ and $j$ with the desired properties.
\end{lemma}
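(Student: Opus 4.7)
The plan is to verify two things separately: (a) that for each $i \in \{1,\dots,d\}$ the price $p_i$ is well-defined by the conditional, and (b) that in the main branch (when $w \geq 1/T$) there exists some $j$ with $L_{j-1} \geq w \geq L_{M(j)}$.

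For (a), I would mimic the well-definedness argument for SymmetricSearch. Fix $i$ and define $f(p) = V_i(S_t) - V_i(S_t^+(p; u_t))$ on $[\underline p_t, \overline p_t]$. Since $p \mapsto S_t^+(p; u_t)$ is a monotone family of compact convex sets that is continuous in Hausdorff distance, and since $V_i$ is continuous in Hausdorff distance and monotone (Theorem \ref{thm:valuation}), $f$ is continuous and non-decreasing, with $f(\underline p_t) = 0$. The condition of the first \textbf{if}-branch is exactly $f(\overline p_t) > \ell^i_{k_i+1}/(2\cdot i!)$, so the Intermediate Value Theorem gives a $p_i$ with $f(p_i) = \ell^i_{k_i+1}/(2\cdot i!)$; in the \textbf{else}-branch, $p_i = \overline p_t$ is trivially well-defined.

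For (b), the key observations are the two boundary values $L_0 = \infty$ (by convention) and $L_d = 0$: the latter holds because $K_d$ lies in the affine hyperplane $\{x : \dot{u_t}{x} = p_d\}$, so by ambient independence (Theorem \ref{thm:ambient}) its $d$-dimensional intrinsic volume vanishes. Starting from $j^{(1)} = 1$, I would build a strictly increasing sequence of candidates by setting $j^{(a+1)} = M(j^{(a)}) + 1$ whenever $w < L_{M(j^{(a)})}$; then $L_{j^{(a+1)}-1} = L_{M(j^{(a)})} > w$ and the left inequality is maintained along the sequence. Since $M(j) \geq j$, the sequence strictly increases. Because $M(d) = d$ gives $L_{M(d)} = L_d = 0 \leq w$, the process cannot continue past $j^{(a)} = d$, and the first $j^{(a)}$ at which $w \geq L_{M(j^{(a)})}$ holds is the required $j$. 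Equivalently, the consecutive intervals $[L_{M(j^{(a)})}, L_{j^{(a)} - 1}]$ are contiguous and together cover $[0, \infty)$.

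The only subtle point is continuity of $f$ at the right endpoint $\overline p_t$, where $S_t^+(p; u_t)$ can degenerate to a lower-dimensional face: this is handled by the standard fact that a monotone decreasing family of compact convex sets converges in Hausdorff distance to its intersection, after which continuity of $V_i$ applies. Everything else is a direct adaptation of the SymmetricSearch well-definedness argument together with the elementary covering argument above, so I do not expect any serious obstacle.
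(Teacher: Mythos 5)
Your proposal is correct and follows essentially the same route as the paper: part (a) is the same Intermediate Value Theorem argument using continuity and monotonicity of $V_i$, and part (b) is an iterative restatement of the paper's covering argument (the paper enumerates the fixed points $i$ of $M$ and observes that the contiguous intervals $[L_{M(i)}, L_{i-1})$ cover $[L_d, L_0) = [0,\infty)$, which is exactly the sequence your recursion $j^{(a+1)} = M(j^{(a)})+1$ generates). Your extra remark on continuity of $f$ at the degenerate endpoint is a minor technical point the paper leaves implicit.
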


\begin{proof}
  For the choice of $p_i$, if $V_i(S_t) - V_i(S_t^+(\overline p_t; u_t)) >
  \ell^i_{k_i + 1} /  ( 2 \cdot i! )$, then the function 
  $\phi_i : [\underline p_t,\overline p_t] \rightarrow \R$,
  $\phi_i(p) = V_i(S_t) - V_i(S_t^+(p; u_t)) $ is continuous and
  monotone with $\phi_i(\underline p_t) = 0$ and $\phi_i(\overline p_t) >
  \frac{1}{2} i! \cdot \ell^i_{k_i + 1}$ so this
  guarantees the existence of such $p_i$.

  For the choice of $j$, let $0 = i_0 < i_1 < \hdots < i_a = d$ be the indices $i$
  such that $M(i) = i$. Notice that the intervals $[L_{i_{s+1}}, L_{i_{s}})$ are of
  the form $[L_{M(i)}, L_{i-1})$ for $i = i_s + 1$. Finally notice that the intervals $[L_{i_{s+1}}, L_{i_{s}})$ cover the entire interval $[L_d, L_0) =
  [0, \infty)$ so one of them must contain $w$.
\end{proof}

Before we proceed to the main analysis, we begin by proving a couple of lemmas regarding the ranges $(\ell_{k+1},
\ell_k]$ of the intrinsic volumes before and after each iteration. The first
lemma says that if we overprice (i.e. $S_t^-$ is chosen) the quantity $\varphi_J$ jumps from the range
$[\ell_{k_J+1},\ell_{k_J})$ to the next range $(\ell_{k_J+2},\ell_{k_J+1}]$. 

\begin{lemma}\label{lem:pricing1}
  $[ J! \cdot V_J(S^-_t(p_J;u_t)) ]^{1/J} \leq \ell_{k_J+1}$
\end{lemma}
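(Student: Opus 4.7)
The plan is to reduce the target inequality to a bound on $V_J(K_J)$ via the valuation property of intrinsic volumes, and then bound $V_J(K_J)$ using the Cone Lemma together with the bucket structure forced by $J = M(j)$.

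First, I would apply the inclusion--exclusion identity for the valuation $V_J$ to the decomposition $S_t = S_t^+(p_J;u_t) \cup S_t^-(p_J;u_t)$, whose intersection is $K_J$, obtaining
\[
V_J(S_t^-) = V_J(S_t) - V_J(S_t^+) + V_J(K_J).
\]
By construction of $p_J$ in Algorithm \ref{algo:pricingdd}, either $p_J$ was set so that $V_J(S_t) - V_J(S_t^+) = \ell_{k_J+1}^J/(2 J!)$ (if-branch) or $p_J = \overline{p}_t$ and the difference is bounded by the same quantity (else-branch). In both cases
\[
V_J(S_t^-) \ \leq\ \frac{\ell_{k_J+1}^J}{2\, J!} + V_J(K_J),
\]
so the lemma reduces, after taking $J$-th roots, to establishing $V_J(K_J) \leq \ell_{k_J+1}^J/(2 J!)$.

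When $J = d$ this is immediate since $K_J$ lies in a $(d-1)$-dimensional hyperplane and hence $V_d(K_J) = 0$. For $J < d$, the crucial structural fact is that $J = M(j)$ is the largest index in its bucket, so $k_{J+1} > k_J$, and therefore $\varphi_{J+1}(S_t) \leq \ell_{k_J+1}$, i.e.\ $V_{J+1}(S_t) \leq \ell_{k_J+1}^{J+1}/(J+1)!$. On the other hand, since $S_t$ has width $2w$ in direction $u_t$, it contains a cone with base $K_J$ and height at least $w$ (taking the apex to be the farther of the two extremal points). The Cone Lemma then yields
\[
V_{J+1}(S_t) \ \geq\ \frac{w}{J+1}\, V_J(K_J),
\]
which combined with the bucket bound on $V_{J+1}(S_t)$ gives $V_J(K_J) \leq \ell_{k_J+1}^{J+1}/(J!\, w)$. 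To upgrade this into the clean bound $V_J(K_J) \leq \ell_{k_J+1}^J/(2 J!)$ I would combine it with the selection condition $w \geq L_J$, which by the definition of $L_J$ and $c_J = 1/(2^{J-1}J!)$ rearranges to $V_J(K_J) \leq c_J w^J = w^J/(2^{J-1}J!)$; the two inequalities are respectively decreasing and increasing in $w$, so balancing them (case-splitting on whether $w$ is smaller or larger than a constant multiple of $\ell_{k_J+1}$) closes the estimate.

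The main obstacle is exactly this balancing step, together with verifying that the else-branch does not cause trouble for the chosen $J$: if $p_J = \overline{p}_t$ then $S_t^- = S_t$ and a naive reading of the identity would contradict the lemma, so one must also show (using the same Cone Lemma--plus--bucket-jump argument) that the chosen $J < d$ cannot be one for which the else-branch triggered, while the case $J = d$ never triggers the else-branch by the definition of $\varphi_d$'s bucket. Making the balancing algebra tight and carrying out this case-exclusion carefully is where I expect the bulk of the technical work to lie.
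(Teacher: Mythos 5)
Your proposal follows the paper's proof essentially step for step: the valuation identity $V_J(S^-) = V_J(S_t) - V_J(S^+) + V_J(K_J)$ reduces the claim to bounding $V_J(K_J)$, and the key estimate $V_J(K_J) \leq \ell_{k_J+1}^J/(2J!)$ is obtained by combining the Cone Lemma $V_{J+1}(S_t) \geq \tfrac{w}{J+1} V_J(K_J)$ with the bucket jump $k_{J+1} \geq k_J + 1$ (which uses $J = M(J)$) and the selection condition $w \geq L_J$. The only mechanical difference is that the paper substitutes $w \geq L_J$ directly into the Cone Lemma inequality to eliminate $w$, whereas you keep the two one-sided bounds and optimize over $w$ by a threshold case-split — this is the same algebra. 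Two small remarks. First, your worry about the else-branch is not actually an obstacle for this lemma: the inequality $V_J(S_t) - V_J(S_t^+) \leq \ell_{k_J+1}^J/(2J!)$ holds by construction in \emph{both} branches of the algorithm, and the bound on $V_J(K_J)$ is branch-independent, so the lemma's conclusion follows regardless of which branch was taken; the fact that the else-branch therefore could not have been triggered for the chosen $J$ is exactly what the paper's Lemma \ref{lemma:pricing2} derives as a \emph{consequence}, not a prerequisite. Second, your explicit treatment of $J = d$ (where $V_d(K_d) = 0$ since $K_d$ is lower-dimensional) is a useful clarification that the paper's chain of inequalities, which implicitly references $k_{J+1}$, leaves to the reader.
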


\begin{proof}
  We abbreviate $S^-_t(p_J;u_t)$ and $S^+_t(p_J;u_t)$ by $S^-$ and $S^+$
  respectively. Using the fact that $V_J$ is a valuation and that $S^- \cap S^+
  = K_J$ we have that:
  $$V_J(S^-) = V_J(S_t) - V_J(S^+) + V_J(K_J) \leq \ell^J_{k_J + 1} / ( 2 \cdot
  J! ) + V_J(K_J)$$
  It remains to show that  $V_J(K_J) \leq \ell^J_{k_J + 1} / ( 2 \cdot J! ) $.
  To do this, we will again use the Cone Lemma to obtain the following
  inequalities:
  $$\frac{1}{J+1}V_{J}(K_J)w \leq V_{J+1}(S_t) \leq 
  \frac{1}{(J+1)!} \ell^{J+1}_{k_{(J+1)}} \leq 
  \frac{1}{(J+1)!}  \ell^{J+1}_{(k_{J})+1} $$
  The first inequality is the Cone Lemma (Lemma \ref{lem:cone}) applied to the fact that $S_t$ contains a cone of base $K_J$ and height at least $w$. The second
  inequality comes from the definition of $k_J$ and the third
  inequality comes from the fact that $J = M(J)$ so $k_{J+1} \geq k_J + 1$.

  Finally, observe that because of our choice of $J$, $w \geq L_J =
  (V_J(K_J)/c_J)^{1/J}$. Substituting in the previous equation we
  obtain:
  $$  \frac{1}{J+1}V_{J}(K_J)^{(J+1)/J}(c_J)^{-1/J} \leq
  \frac{1}{(J+1)!}  \ell_{(k_J)+1}^{J+1}$$
  Substituting the definition of $c_J$ and simplifying, we get the desired bound
  of  $V_J(K_J) \leq \ell^J_{k_J + 1} / ( 2 \cdot J! ) $.
\end{proof}

We next show that, for our chosen $J$, if we underprice, then the $J$th intrinsic volume of our knowledge set decreases by at least $\ell_{k_{J}+1}^{J}$. This will allow us to bound the number of times we can potentially underprice before $k_{J}$ changes (in particular, it is at most $2\ell_{k_J}^{J}/\ell_{k_{J}+1}^{J}$).

\begin{lemma}\label{lemma:pricing2}
  $V_J(S_t) - V_J(S_t^+(p_J; u_t)) =  \ell^J_{k_J + 1} / ( 2 \cdot J! )$
\end{lemma}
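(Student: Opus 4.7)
The claim is essentially a direct restatement of the rule used to set $p_J$ inside Algorithm~\ref{algo:pricingdd}. The plan is to unpack the algorithm at iteration $i = J$ of the inner loop: the if-else branch sets $p_J$ to the unique value achieving the target decrement $\ell_{k_J+1}^J/(2 \cdot J!)$ whenever this is feasible, and otherwise sets $p_J = \overline p_t$. So proving the equality reduces to (a) verifying that the first branch really is taken whenever this lemma is applied, and (b) showing that a $p_J$ satisfying the equation exists in that branch.

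The main technical step is (b), the existence argument. For this I would define $\phi(p) := V_J(S_t) - V_J(S_t^+(p; u_t))$ on $p \in [\underline p_t, \overline p_t]$ and verify that $\phi$ is continuous and monotone non-decreasing. Continuity follows because $S_t^+(p; u_t)$ varies continuously in $p$ with respect to Hausdorff distance and intrinsic volumes are continuous valuations on $\Conv_d$ (Theorem~\ref{thm:valuation}, Definition~\ref{def:continuous}); monotonicity follows because $S_t^+(p; u_t)$ shrinks as $p$ increases and $V_J$ is monotone (Theorem~\ref{thm:valuation}). At the endpoints we have $\phi(\underline p_t) = 0$, and by the first-branch hypothesis $\phi(\overline p_t) > \ell_{k_J+1}^J/(2 \cdot J!)$. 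The Intermediate Value Theorem therefore produces a $p_J \in (\underline p_t, \overline p_t)$ at which $\phi(p_J) = \ell_{k_J+1}^J/(2 \cdot J!)$, exactly matching the algorithm's choice and hence yielding the claimed equality by construction.

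Part (a) is a degenerate-case observation. If we are in the else branch at $i = J$, then $p_J = \overline p_t$, and since every $v \in S_t$ satisfies $\langle u_t, v \rangle \le \overline p_t$, the learner necessarily overprices in round $t$, so the update is $S_{t+1} = S_t^-(p_J; u_t)$ rather than $S_t^+(p_J; u_t)$. Lemma~\ref{lemma:pricing2} tracks the decrease in $V_J$ along the underpricing side $S_t^+$; thus for the downstream regret analysis it is only invoked in situations where the first branch was taken, where it holds by definition. The only genuine mathematical content is the IVT existence argument above, which in turn rests on the continuity of intrinsic volumes. No further estimate is needed for this lemma.
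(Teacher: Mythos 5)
There is a genuine gap in part (a), which is where the entire content of this lemma lives. The statement is an unconditional equality, and it is simply false in the else branch of the algorithm: if $p_J = \overline p_t$ was chosen because $V_J(S_t) - V_J(S_t^+(\overline p_t; u_t)) \le \ell^J_{k_J+1}/(2\cdot J!)$ failed to be exceeded, then the left-hand side of the lemma is strictly smaller than the right-hand side. You do not resolve this; instead you argue that in the else branch the learner overprices, so the lemma ``is only invoked in situations where the first branch was taken.'' That is a claim about how the lemma is used downstream, not a proof of the lemma, and it is also not airtight on its own terms (e.g.\ when $\langle u_t, v\rangle = \overline p_t$ the sale/no-sale outcome and the choice between $S^+$ and $S^-$ are degenerate). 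The correct and necessary observation is that the else branch \emph{cannot occur} for the selected index $J$: if $p_J = \overline p_t$, then $S_t^-(p_J; u_t) = S_t$, so Lemma~\ref{lem:pricing1} applied to this round yields $\bigl(J!\,V_J(S_t)\bigr)^{1/J} \le \ell_{k_J+1}$, which contradicts $\varphi_J = \bigl(J!\,V_J(S_t)\bigr)^{1/J} > \ell_{k_J+1}$ from the definition of the bucket $k_J$. This contradiction argument, resting on Lemma~\ref{lem:pricing1} (which itself uses the Cone Lemma and the choice $w \ge L_J$), is the actual substance of the proof and is missing from your write-up.

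Your part (b), the intermediate-value argument for the existence of $p_J$ in the first branch, is correct but is not the issue here: it is the content of the separate well-definedness lemma for PricingSearch, and once the first branch is known to be taken, the equality does hold ``by construction'' as you say. The work is entirely in showing the first branch must be taken at $i = J$.
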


\begin{proof}
  Note that this equality is guaranteed by the algorithm's choice of $p_J$, except when $V_J(S_t) - V_J(S_t^+(\overline p_t;
  u_t)) < \ell^J_{k_J + 1} / ( 2 \cdot J! )$ and $p_J = \overline p_t$. However, in
  this case, $S_t^-(\overline p_t; u_t) = S_t$ by the definition of $\overline
  p_t$. Lemma \ref{lem:pricing1} then implies that $[ J! \cdot V_J(S_t) ]^{1/J}
  \leq \ell_{k_J+1}$, but this contradicts the definition of $k_J$.
\end{proof}

We now show that in each round, the width of the knowledge set (and thus our loss) is at most $2\ell_{k_J}$. 

\begin{lemma}\label{lemma:pricing3}
  $w \leq 2 \ell_{k_J}$
\end{lemma}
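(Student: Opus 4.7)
My plan is to reuse the cone-based lower bound $V_j(S_t) \geq \frac{c_{j-1}}{j} w^j$ that was established in Lemma~\ref{lemma:symm1}, and then combine it with the bucket constraint $\varphi_j \leq \ell_{k_j}$ to pin down $w$. The key observation is that the proof of Lemma~\ref{lemma:symm1} uses only two facts: that $w \leq L_{j-1}$, and that $K_{j-1}$ is the cross-section $\{x \in S_t : \dot{u_t}{x} = p_{j-1}\}$. Both hold verbatim in the pricing algorithm: the index $j$ is chosen so that $L_{j-1} \geq w \geq L_{M(j)}$, and $K_{j-1}$ is defined as that cross-section regardless of how $p_{j-1}$ itself is computed. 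Hence Lemma~\ref{lemma:symm1} can be invoked as is.

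The main step is then a short calculation. For $j \geq 2$, Lemma~\ref{lemma:symm1} gives $V_j(S_t) \geq \frac{c_{j-1}}{j} w^j$. By the definition of $k_j$ we have $\varphi_j = (j!\,V_j(S_t))^{1/j} \leq \ell_{k_j}$, i.e., $V_j(S_t) \leq \ell_{k_j}^j / j!$. Chaining the two inequalities yields $w^j \leq \frac{1}{(j-1)!\,c_{j-1}} \ell_{k_j}^j$. Substituting $c_{j-1} = 1/(2^{j-2}(j-1)!)$, the right-hand side simplifies to $2^{j-2}\ell_{k_j}^j$, and taking $j$-th roots gives $w \leq 2^{(j-2)/j} \ell_{k_j} \leq 2\ell_{k_j}$. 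Since $J = M(j)$ has $k_J = k_j$ by construction, this is exactly $w \leq 2\ell_{k_J}$. For the base case $j = 1$, I would use the stronger bound $V_1(S_t) \geq 2w$ (from the fact that $S_t$ contains a segment of length $2w$ in direction $u_t$, together with Theorem~\ref{thm:ambient}), which combined with $\varphi_1 = V_1(S_t) \leq \ell_{k_1}$ yields the even tighter $w \leq \ell_{k_1}/2 \leq 2\ell_{k_J}$.

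I do not anticipate any serious obstacle. Essentially all of the real work has already been done in Lemma~\ref{lemma:symm1} and in the isoperimetric setup of the bucket scale $\ell_k$; the only subtlety worth flagging is the portability observation that Lemma~\ref{lemma:symm1}'s cone argument is insensitive to how $p_{j-1}$ was chosen. Once that is noted, the remainder is a short substitution of the closed-form value of $c_{j-1}$ and an appeal to $k_J = k_j$.
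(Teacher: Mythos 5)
Your proof is correct and follows essentially the same route as the paper: a lower bound $V_j(S_t)\geq \frac{c_{j-1}}{j}w^j$ via the Cone Lemma (the paper re-derives this inline rather than citing Lemma~\ref{lemma:symm1}, but the argument is identical and, as you note, depends only on $w\leq L_{j-1}$ and the definition of $K_{j-1}$), combined with the bucket upper bound $V_j(S_t)\leq \ell_{k_j}^j/j!$ and the identity $k_j=k_J$. Your treatment of the $j=1$ base case via $V_1(S_t)\geq 2w$ is in fact slightly cleaner than the paper's.
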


\begin{proof}
  We will derive both an upper and lower bound on $V_j(S_t)$. For the upper bound
  we again apply the Cone Lemma (Lemma \ref{lem:cone}).
  $$V_j(S_t) \geq \frac{1}{j} V_{j-1}(K_{j-1}) w \geq \frac{1}{j} (c_{j-1}
  w^{j-1}) w$$
  If $j > 1$, then the first inequality holds since $S_t$ contains a cone of
  base $K_{j-1}$ and height $w$, and the second inequality follows from the fact that $w \leq
  L_{j-1}$. If $j=1$, then we observe that $S_t$ contains a segment of length $w$,
  so $V_1(S_t) \leq w$. 
  
  To get a lower bound on $V_j(S_t)$, simply note that
  $$V_j(S_t) \leq \ell_{k_j}^j / (j!) = \ell_{k_J}^j / (j!) $$
  where the first inequality follows from the definition of $k_j$ and the second
  from the fact that $k_j = k_J$ since $J = M(j)$.

  Together the bounds imply that $c_{j-1} w^j / j \leq \ell_{k_J}^j /
  (j!)$. Substituting in the value of $c_{j-1}$ and simplifying we obtain that
  $w \leq 2 \ell_{k_J}$.
\end{proof}  

Finally, we argue that if $w$ is large enough (at least $1/T$), then $k_J$ is at most $O_d(\log\log T)$. Once $w$ is at most $1/T$, we can always price at $\underline p_t$ and incur at most $O(1)$ additional regret, so this provides a bound for the number of times we can e.g. overprice.

\begin{lemma}\label{lemma:pricing4}
  In iterations where $w \geq 1/T$, then $k_J \leq O(d \log \log (dT))$.
\end{lemma}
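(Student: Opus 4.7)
The plan is to chain together Lemma \ref{lemma:pricing3} with the explicit definition of the bucket thresholds $\ell_k = d^2 \exp(-\alpha^k)$ (with $\alpha = 1 + 1/d$) and then solve for $k_J$ in closed form. The strategy is simple because all the nontrivial geometric work has already been done in the previous lemmas, which reduce this statement to a one-line manipulation of exponentials.

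First, apply Lemma \ref{lemma:pricing3} to conclude that in the present iteration $w \leq 2 \ell_{k_J}$. Under the hypothesis $w \geq 1/T$, this immediately yields
\[
\ell_{k_J} \;\geq\; \frac{w}{2} \;\geq\; \frac{1}{2T}.
\]
Substituting the definition of $\ell_{k_J}$, we obtain $d^2 \exp(-\alpha^{k_J}) \geq 1/(2T)$, which rearranges to $\alpha^{k_J} \leq \ln(2d^2 T)$, and taking logarithms again gives
\[
k_J \;\leq\; \frac{\ln \ln (2d^2 T)}{\ln \alpha}.
\]

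The only remaining step is to lower bound $\ln \alpha = \ln(1 + 1/d)$. Using the standard inequality $\ln(1+x) \geq x/(1+x)$, valid for $x > -1$, we get $\ln(1+1/d) \geq 1/(d+1) \geq 1/(2d)$. Plugging this in gives $k_J \leq 2d \ln \ln(2d^2 T) = O(d \log \log (dT))$, as required.

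I do not anticipate any real obstacle here: the heavy lifting is entirely in Lemma \ref{lemma:pricing3}, which relates the width to the bucket threshold $\ell_{k_J}$. The role of the choice $\alpha = 1 + 1/d$ becomes clear here --- it is exactly what makes $\ln \alpha = \Theta(1/d)$, so that doubly exponential bucket growth survives the $d$-fold compression and we still end up with $O(d \log \log T)$ buckets.
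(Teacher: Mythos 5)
Your proposal is correct and takes essentially the same approach as the paper: apply Lemma \ref{lemma:pricing3} to get $1/T \leq w \leq 2\ell_{k_J} = 2d^2\exp(-\alpha^{k_J})$ and solve for $k_J$. The paper compresses the algebra into a single ``simplifying the expression'' step, whereas you spell out the two logarithms and the bound $\ln(1+1/d) \geq 1/(2d)$; this is exactly the intended calculation.
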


\begin{proof}
  It follows directly from Lemma \ref{lemma:pricing3}:
  $1/T \leq w \leq 2 \ell_{k_J} = 2 d^2 \exp(-\alpha^{k_J})$.
  Simplifying the expression we get $k_J \leq O(d \log \log (dT))$
\end{proof}

We are now ready to prove our main result:

\begin{theorem}\label{thm:pricing}
  The total loss of PricingSearch (Algorithm \ref{algo:pricingdd}) is bounded by
  $O(d^4 \log \log(dT))$.
\end{theorem}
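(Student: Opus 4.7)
The plan is to bound the total regret by partitioning the rounds according to the pair $(J, k_J)$ that the algorithm encounters, applying Lemmas~\ref{lem:pricing1}--\ref{lemma:pricing4} to bound the contribution of each pair by $O(d^2)$, and then summing over the at most $d \cdot O(d \log\log(dT))$ such pairs. I begin by handling the trivial case where $w < 1/T$: in such rounds the algorithm underprices at $\underline p_t$ and incurs loss at most $2w \leq 2/T$, so these rounds contribute $O(1)$ regret in total.

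For the remaining rounds (where $w \geq 1/T$), Lemma~\ref{lemma:pricing4} guarantees $k_J \leq O(d\log\log(dT))$, so the pair $(J, k_J)$ ranges over a set of size at most $d \cdot O(d\log\log(dT))$. Fix one such pair $(i,k)$, and consider the rounds where $J = i$ and $k_J = k$. Lemma~\ref{lemma:pricing3} gives $w \leq 2\ell_k$, so an underpriced round incurs loss at most $2w \leq 4\ell_k$ while an overpriced round incurs loss at most $\overline p_t \leq \sqrt{d}$. Since $V_i$ is monotone non-increasing by Theorem~\ref{thm:valuation}, the value $k_i$ is monotone non-decreasing along the sequence of rounds; combined with Lemma~\ref{lem:pricing1}, which says that an overpriced round at index $J=i$ forces $\varphi_i$ into a strictly smaller bucket, this shows that at most one overpriced round can occur for each fixed pair $(i,k)$. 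Lemma~\ref{lemma:pricing2} shows that each underpriced round with $J=i$ and $k_J=k$ shrinks $V_i(S_t)$ by exactly $\ell_{k+1}^i/(2 \cdot i!)$, while $V_i$ remains in the window $(\ell_{k+1}^i/i!,\ \ell_k^i/i!]$ throughout the pair; this bounds the number of underpriced rounds inside the pair by $2\ell_k^i/\ell_{k+1}^i$.

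Combining, the regret contributed by pair $(i,k)$ is at most
$$\sqrt{d} + 4\ell_k \cdot \frac{2\ell_k^i}{\ell_{k+1}^i} = \sqrt{d} + 8\, \frac{\ell_k^{i+1}}{\ell_{k+1}^i}.$$
Using $\ell_k = d^2 \exp(-\alpha^k)$ with $\alpha = 1 + 1/d$, a direct calculation gives
$$\frac{\ell_k^{i+1}}{\ell_{k+1}^i} = d^2 \exp\!\bigl(\alpha^k(i\alpha - i - 1)\bigr) = d^2 \exp\!\bigl(\alpha^k (i/d - 1)\bigr) \leq d^2,$$
since $i \leq d$ makes the exponent non-positive. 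Thus each pair contributes $O(d^2)$ regret. Multiplying by the $O(d^2 \log\log(dT))$ possible pairs and adding the $O(1)$ contribution from the $w < 1/T$ case yields a total regret bound of $O(d^4 \log\log(dT))$.

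The heart of the argument is already contained in the four preceding lemmas, so no further geometric work is needed; the only subtle bookkeeping is the verification that the choice $\alpha = 1 + 1/d$ is exactly strong enough to keep the ratio $\ell_k^{i+1}/\ell_{k+1}^i$ bounded uniformly in $k$, which is what turns the Kleinberg--Leighton doubly-exponential bucketing into an $O_d(\log\log T)$ bound. The overpriced rounds contribute only $O(d^{2.5}\log\log(dT))$, so underpricing is the binding constraint, and the final $d^4$ factor comes from the $d$ indices, the $d\log\log(dT)$ buckets, and the $d^2$ per-pair cost.
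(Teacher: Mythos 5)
Your proof is correct and follows essentially the same route as the paper's: the same three-way case split ($w<1/T$, overpricing, underpricing), the same use of Lemmas \ref{lem:pricing1}--\ref{lemma:pricing4} to bound the per-bucket counts, and the same computation showing $\alpha = 1+1/d$ makes $\ell_k^{i+1}/\ell_{k+1}^i \leq d^2$. The only (immaterial) differences are that you organize the count by pairs $(J,k_J)$ rather than separating the overprice and underprice tallies, and you bound the overpricing loss by $\sqrt{d}$ rather than the paper's cruder constant $1$; neither affects the final $O(d^4\log\log(dT))$ bound.
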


\begin{proof}
  We sum the loss in different cases. The first is when $w < 1/T$ and the
  algorithm prices at $\underline p_t$. In those occasions the algorithm always
  sells and the loss is at most $2w \leq 2/T$, so the total loss is at most $2$.

  The second case is when the algorithm overprices and  doesn't sell. If the algorithm doesn't
  sell, then by Lemma \ref{lem:pricing1}, then $\phi_J$ goes from range
$(\ell_{k_J+1}, \ell_{k_J}]$ to the next range $(\ell_{k_J+2}, \ell_{k_J+1}]$.
Since $k_J \leq O(d \log \log (dT))$ by Lemma \ref{lemma:pricing4} this can
happen at most this many times for each index $J$. Since there are $d$ such
indices and the loss of each event is at most $1$, the total loss is
bounded by $O(d^2 \log \log (dT))$.

  The final case is when the algorithm underprices. The loss in this case is
  bounded by the width $2w$. We sum the total loss of events in which the
  algorithm overprices. We fix the selected index $J$ and $k_J$. The loss in such
  a case is at most $2w \leq 4 \ell_{k_J}$ by Lemma \ref{lemma:pricing3}.
  Whenever this happens $S_{t+1} = S_t^+(p_J; u_t)$ so the $J$-th intrinsic
  volume decreases by $\ell^J_{k_J + 1} / (2 J!)$ since $V_J(S_t) - V_J(S_{t+1}) =
  V_J(S_t) - V_J(S_t^+) = \ell^J_{k_J + 1} / (2 J!)$ by Lemma \ref{lemma:pricing2}.
  Since $V_J(S_t) \leq \ell^J_{k_J} / (J!)$. Therefore the total number of times
  it can happen is: $2 \ell^J_{k_J} / \ell^J_{k_J+1}$. The total loss is at most
  the number of times the event can happen multiplied by the maximum loss for an event,
  which is:
  $$\frac{2 \ell^J_{k_J} }{\ell^J_{k_J+1}} \cdot (4 \ell_{k_J}) = 8
  \frac{\ell^{J+1}_{k_J}}{\ell^J_{k_J+1} } = 8 d^2 \exp(J \alpha^{k_J +1} -
  (J+1) \alpha^{k_J}) \leq 8 d^2 \exp(\alpha^{k_J} (d \alpha - (d+1)) = 8 d^2  $$
  since $\alpha = 1+1/d$. By summing over all $d$ possible values of $J$ and all
  $O(d \log \log (dT))$ values of $k_J$ we obtain a total loss of
  $O(d^4 \log \log(dT))$.
\end{proof}

\subsection{Proof of the Cone Lemma}\label{sec:cone_lemma}

We will prove the Cone Lemma in three steps. We start by proving some geometric
lemmas about how linear transformations affect intrinsic volumes. We then use these lemmas to bound the intrinsic volumes of cylinders. Finally, by approximating a cone as a stack of thin cylinders, we apply these bounds to prove the Cone Lemma.

\subsubsection{Geometric lemmas}

Define an \textit{$\alpha$-stretch} of $\R^{d}$ as a linear
transformation which contracts $\R^{d}$ along some axis by a factor of $\alpha$,
leaving the remaining axes untouched (in other words, there is some coordinate
system in which an $\alpha$-stretch $T_{\alpha}$ sends $(x_1, x_2, \dots,
x_{d})$ to $(\alpha x_1, x_2, \dots, x_{d})$).

A contraction is a linear transformation $T : \R^d \rightarrow \R^d$ such
that $\norm{T x} \leq \norm{x}$ for all $x \in \R^d$. An $\alpha$-stretch is a
contraction whenever $\alpha \in [0,1]$.

%We will show that $K'$ is an
%$(h/\ell)$-stretch of $K$, and use this fact to relate the surface measures of
%$K$ and $K'$. 

\begin{lemma}\label{lem:twoplaneproj} Let $H$ and $H'$ be two ($d$-dimensional)
hyperplanes in $\mathbb{R}^{d+1}$, whose normals are separated by angle $\theta$.
Let $K$ be a convex body contained in $H$, and let $K'$ be the projection of $K$
onto $H'$. Then $K'$ is (congruent to) a $(\cos\theta)$-stretch of $K$.
\end{lemma}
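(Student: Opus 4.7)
The plan is to reduce the claim to an explicit coordinate computation by choosing orthonormal bases of $H$ and $H'$ that are adapted to their intersection. First I would handle the degenerate case $\theta = 0$ separately: then $H = H'$ (up to translation) and the orthogonal projection restricted to $H$ is the identity, a $1$-stretch. So from here on assume $0 < \theta \leq \pi/2$. Without loss of generality I would translate so that $0 \in H \cap H'$, making both hyperplanes linear subspaces; this does not affect the conclusion, as translations are isometries and intrinsic volumes (hence our notion of congruence) are rigid-motion invariant.

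Next, let $L = H \cap H'$, which is a $(d-1)$-dimensional subspace of $\mathbb{R}^{d+1}$. I would pick an orthonormal basis $e_1,\ldots,e_{d-1}$ of $L$, and extend to orthonormal bases of $H$ and $H'$ by adding unit vectors $e_d \in H$ and $e_d' \in H'$, each perpendicular to $L$ within its respective hyperplane. Let $n, n'$ denote the unit normals to $H, H'$. All four vectors $n, n', e_d, e_d'$ lie in the two-dimensional subspace $L^\perp$, and within that plane $e_d \perp n$ and $e_d' \perp n'$. Hence the angle between $e_d$ and $e_d'$ equals the angle between $n$ and $n'$, namely $\theta$ (possibly after flipping the sign of $e_d'$, which is harmless).

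The key computation is to evaluate the orthogonal projection $\pi_{H'}(x) = x - \langle x, n'\rangle n'$ for a point $x = \sum_{i=1}^{d-1} x_i e_i + x_d e_d \in H$, and read off its coordinates in the basis $e_1,\ldots,e_{d-1}, e_d'$ of $H'$. Since $e_i \in L \subseteq H'$ we have $\langle e_i, n'\rangle = 0$, while $\langle e_d, n'\rangle = \sin\theta$ and $\langle e_d, e_d'\rangle = \cos\theta$ by planar trigonometry in $L^\perp$. Therefore the $i$-th coordinate of $\pi_{H'}(x)$ is $x_i$ for $i < d$, and the $e_d'$-coordinate is $x_d \cos\theta$. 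Identifying $H$ and $H'$ via the isometry sending $e_i \mapsto e_i$ for $i < d$ and $e_d \mapsto e_d'$, the projection becomes exactly the map $(x_1,\ldots,x_{d-1}, x_d) \mapsto (x_1,\ldots,x_{d-1}, x_d \cos\theta)$, i.e.\ a $(\cos\theta)$-stretch.

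The only step requiring care is verifying that the angle between $e_d$ and $e_d'$ equals $\theta$ (not $\pi - \theta$) with a suitable choice of sign, and confirming the inner products $\langle e_d, n'\rangle = \sin\theta$ and $\langle e_d, e_d'\rangle = \cos\theta$; since this all takes place in the two-dimensional plane $L^\perp$ where $\{n, e_d\}$ and $\{n', e_d'\}$ are two orthonormal frames rotated by $\theta$, this is elementary planar geometry. There is no real obstacle here; the content of the lemma is entirely that the correct basis adapted to $L$ diagonalises the projection, after which the claim drops out.
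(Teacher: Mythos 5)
Your proof is correct and follows essentially the same route as the paper: both arguments choose orthonormal bases adapted to the common $(d-1)$-dimensional subspace of $H$ and $H'$ so that the two hyperplanes differ in only one basis direction, and then observe that orthogonal projection scales that one coordinate by $\cos\theta$ while fixing the rest. Your version is slightly more careful about the degenerate case and the sign of $e_d'$, but the substance is identical.
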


\begin{proof} Without loss of generality, let $H'$ be the hyperplane
with orthonormal basis $e_1, e_2, \dots, e_d$, and let $H$ be the hyperplane
with orthonormal basis $e'_1 = (\cos\theta) e_1 + (\sin\theta) e_{d+1}, e'_2 =
e_2, \dots, e'_d = e_d$. Note that a point $a_1e'_1 + a_2e'_2 + \dots a_ne'_n$
in $H$, projects to the point $(\cos\theta)a_1e_1 + a_2e_2 + \dots + a_ne_n$ in
$H'$. This is the definition of a $(\cos \theta)$-stretch.  \end{proof}

\medskip

The next lemma bounds the change in the $d$-th volume of a $(d+1)$-dimensional
object when it is transformed by a contraction. The analysis will be based on
the fact that for a $(d+1)$-dimensional convex set $S$, $V_d(S)$ corresponds to
half of the surface area. This fact can be derived either
from Hadwiger's theorem (Theorem \ref{thm:hadwiger}) or from Cauchy's formula
for the surface area together with Theorem \ref{thm:random_projections}.

It is simpler to reason about the surface area of polyhedral convex sets (i.e.
sets that can be described as a finite intersection of half-spaces). The
boundary of a polyhedral convex set in $\R^{d+1}$ can be described as a finite
collection of facets, which are convex sets of dimension $d$. The surface area
corresponds to the sum of the $d$-dimensional volume of the facets. For a
$2$-dimensional polytope the surface area correspond to the perimeter. For a
$3$-dimensional polytope the surface area corresponds to the sum of the area
(the $2$-dimensional volume) of the facets. For a general convex set $K$, the
surface area can be computed as the limit of the surface area of $K_t$ where
$K_t$ are polyhedral sets that converge (in the Hausdorff sense) to $K$. This is
equivalent to the usual definition of the surface area as the surface integral
of a volume element.

Given the discussion in the previous paragraph, to reason about how the surface
area transforms after a linear transformation, it is enough to reason how
the volume of $d$-dimensional convex sets (the facets) transform when the
ambient $\R^{d+1}$ space is transformed by a linear transformation.

\begin{lemma}\label{lem:stretch} Let $K \in \Conv_{d+1}$ and $T$ be a
  contraction, then  $$V_d(T(K)) \geq \det T \cdot V_{d}(K)$$  \end{lemma}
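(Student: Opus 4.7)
The plan is to reduce a general contraction to a composition of axis-aligned stretches via the SVD, prove the bound one stretch at a time by tracking how each facet's $d$-volume transforms, and then extend from polytopes to general convex sets by continuity. Write $T = U\Sigma V^\top$ with $U, V$ orthogonal and $\Sigma = \mathrm{diag}(\sigma_1,\dots,\sigma_{d+1})$ whose entries lie in $[0,1]$ because $T$ is a contraction. Since orthogonal maps are rigid motions, $V_d$ is invariant under both $U$ and $V$ by Theorem \ref{thm:valuation}, so it suffices to show $V_d(\Sigma K') \geq (\prod_i \sigma_i)\, V_d(K')$ for arbitrary $K' \in \Conv_{d+1}$; this yields $V_d(T(K)) = V_d(\Sigma(V^\top K)) \geq (\prod_i \sigma_i) V_d(V^\top K) = (\prod_i \sigma_i)\, V_d(K) \geq \det T \cdot V_d(K)$. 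Factoring $\Sigma$ into $\sigma_i$-stretches along each coordinate axis reduces the problem further to the single-stretch statement $V_d(T_\alpha(K)) \geq \alpha\, V_d(K)$ for a coordinate $\alpha$-stretch $T_\alpha$ with $\alpha \in [0,1]$.

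For polyhedral $K$ I use the fact recalled in the excerpt that $2 V_d(K)$ equals the surface area. The facets of $T_\alpha(K)$ are exactly the images under $T_\alpha$ of the facets of $K$, so it suffices to show $\Vol(T_\alpha(F)) \geq \alpha\,\Vol(F)$ facet by facet. Fix a facet $F$ lying in a hyperplane $H$ with unit normal $n = (n_1,\dots,n_{d+1})$, and let $F'$ be the orthogonal projection of $F$ onto $\{x_1 = 0\}$. Since $T_\alpha$ acts trivially on the last $d$ coordinates, $F'$ is also the projection of $T_\alpha(F)$ onto $\{x_1 = 0\}$. Applying Lemma \ref{lem:twoplaneproj} to both projections gives
$$\Vol(F') \;=\; |n_1|\,\Vol(F) \;=\; |n'_1|\,\Vol(T_\alpha(F)),$$
where $n'$ is the unit normal to the image hyperplane $T_\alpha(H)$. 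A short computation (the pullback under $T_\alpha^{-1}$ sends $n$ to $(n_1/\alpha,n_2,\dots,n_{d+1})$) yields $|n'_1| = (|n_1|/\alpha)/\sqrt{n_1^2/\alpha^2 + (1-n_1^2)}$, and substituting into the display gives
$$\Vol(T_\alpha(F)) \;=\; \Vol(F)\,\sqrt{\alpha^2 + n_1^2(1-\alpha^2)} \;\geq\; \alpha\,\Vol(F),$$
since $\alpha \in [0,1]$. Summing over facets proves $V_d(T_\alpha(K)) \geq \alpha\, V_d(K)$ in the polyhedral case.

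To finish, extend to arbitrary $K \in \Conv_{d+1}$ by Hausdorff approximation: take polytopes $K_m \to K$; since $T$ is $1$-Lipschitz, $T(K_m) \to T(K)$ as well, and the inequality passes to the limit by Hausdorff-continuity of $V_d$ (Theorem \ref{thm:valuation}). The main obstacle is the facet volume computation: $T_\alpha$ sends $F$ into a \emph{tilted} hyperplane whose normal depends nontrivially on $n_1$ and $\alpha$, so one cannot just apply a uniform contraction factor to $F$ inside a fixed hyperplane. Lemma \ref{lem:twoplaneproj} is the natural tool here because it converts the relationship between a hyperplane and its projection into an axis-aligned stretch, making this orientation dependence explicit and reducing the bound to the elementary inequality $\sqrt{\alpha^2 + n_1^2(1-\alpha^2)} \geq \alpha$.
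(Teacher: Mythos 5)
Your proof is correct, and while it shares the paper's high-level skeleton (identify $V_d$ with half the surface area, reduce to polytopes by Hausdorff approximation, and prove the inequality facet by facet), the core computation is genuinely different. The paper handles an arbitrary contraction $T$ in one step via a prism argument: for a facet spanned by a parallelepiped $P$ with unit normal segment $N$, it computes $V_{d+1}(T(P+N))$ two ways --- once as $\det T\cdot V_d(P)$ and once as $V_d(T(P))$ times $\langle Tn, n'\rangle\leq 1$ --- and divides. You instead reduce to a single coordinate $\alpha$-stretch via the SVD and rigid-motion invariance, then use Lemma \ref{lem:twoplaneproj} to compare $F$ and $T_\alpha(F)$ through their common projection onto $\{x_1=0\}$, obtaining the exact factor $\sqrt{\alpha^2+n_1^2(1-\alpha^2)}\geq\alpha$. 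Your route buys an explicit equality for how each facet transforms under a stretch (not just a bound), at the cost of the SVD reduction; the paper's prism argument avoids any decomposition of $T$ and is marginally shorter. Three small points you should patch: (i) if $U$ or $V$ has determinant $-1$ it is a reflection rather than a rotation, so you need the (true, standard) fact that intrinsic volumes are invariant under all isometries, e.g.\ via Theorem \ref{thm:random_projections}; (ii) your projection identity divides by $|n_1'|$, which vanishes exactly when $n_1=0$ --- but in that case $T_\alpha$ acts on the facet's hyperplane as an $\alpha$-stretch and $\Vol(T_\alpha(F))=\alpha\Vol(F)$ directly, consistent with your formula; (iii) if some singular value is $0$ then $\det T=0$ and the claim is trivial, so you may assume $\Sigma$ invertible when asserting that the facets of $T_\alpha(K)$ are the images of the facets of $K$.
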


\begin{proof} By the previous discussion, $V_{d}(K)$ is proportional
to the surface area of $K$. By taking finer and finer
approximations of $K$ by polytopes, it suffices to prove the result for a
polyhedral set. We only need to argue how the $d$-dimensional volume of the
facets is transfomed by $T$. The change in volume of a facet corresponds to the
determinant of the transformation induced by $T$ on the tangent space of that
  facet\footnote{The tangent space of a facet is the space of all vectors that
  are parallel to that facet}.
More precisely, given vectors linearly independent vectors $v_1, \hdots, v_d \in
\R^{d+1}$, let $P$ be the parallelepiped generated by them and let $V_d(P)$ be
its volume. Let also $n$ be the unit vector orthogonal to affine subspace
  containing $P$ and $N$
an unit segment in that direction, i.e., the set of points of the form $tn$
for $t\in [0,1]$, then:
$$V_{d+1}(T (P+N)) = (\det T) \cdot V_d(P+N) = (\det T) \cdot V_{d-1}(P)$$
  where the first equality follows from how the (standard) volume transforms and
  the second since $N$ is orthogonal to $P$ and has size $1$.
  
  Now, since $T(P+N) = T(P) + T(N)$, the volume $V_{d+1}(T(P+N))$ can be
  written as $V_{d-1}(T(P))$ times the projection of $N$ in the orthogonal
  direction of $T(P)$, which is $\dot{T n}{ n'} \leq \norm{T n} \cdot \norm{n'}
  \leq 1$ where $n'$ is the orthogonal vector to $T(P)$ and $\norm{T n} \leq 1$
  follows from the fact that $T$ is a contraction. Therefore:
  $$V_{d-1}(T(P)) \geq V_{d+1}(T (P+N)) = (\det T) \cdot V_{d-1}(P)$$

\end{proof}

\subsubsection{Intrinsic volumes of cylinders}
Given a convex set $K$ in $\R^{d}$, an \textit{orthogonal cylinder} with base
$K$ and height $w$ is the convex set in $\R^{d+1}$ formed by taking the
Minkowski sum of $K$ (embedded into $\R^{d+1}$) and a line segment of length $w$
orthogonal to $K$.

\begin{lemma}\label{lem:cylinder}
Let $K$ be a convex set in $\R^d$, and let $S$ be an orthogonal cylinder with base $K$ and height $h$. Then, for all $0 \leq j \leq d$,

$$V_{j+1}(S) = V_{j+1}(K) + hV_{j}(K).$$
\end{lemma}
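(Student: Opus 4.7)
The plan is to prove this via direct computation of Steiner's formula (equation~\eqref{eqn:steineralt}) for the cylinder $S$ in $\R^{d+1}$, using the rotational symmetry of the $\eps$-tube to reduce the $(d+1)$-dimensional computation to a $d$-dimensional one.

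Fix coordinates so that $K \subseteq \R^d \times \{0\} \subset \R^{d+1}$ and $S = K + [0,h] \cdot e_{d+1}$. The key geometric observation is that the $\eps$-neighborhood $S + \eps B_{d+1}$ decomposes cleanly along the $e_{d+1}$-axis: for $y \in [0,h]$ the slice at height $y$ is exactly $K + \eps B_d$, while for $y \in [h, h+\eps]$ (resp.\ $y \in [-\eps, 0]$) the slice at height $y$ is $K + \sqrt{\eps^2 - (y-h)^2} B_d$ (resp.\ $K + \sqrt{\eps^2 - y^2}B_d$). Applying Fubini gives
\begin{equation*}
\Vol_{d+1}(S + \eps B_{d+1}) = h \cdot \Vol_d(K + \eps B_d) + 2 \int_0^\eps \Vol_d\bigl(K + \sqrt{\eps^2 - s^2}\,B_d\bigr)\,ds.
\end{equation*}

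Now I expand each factor using Steiner's formula applied to $K$ in its ambient space $\R^d$, namely $\Vol_d(K + \rho B_d) = \sum_{j=0}^d \kappa_{d-j} V_j(K) \rho^{d-j}$, where the $V_j(K)$ match the ones in the statement by ambient independence (Theorem~\ref{thm:ambient}). The integral term then reduces to evaluating $2\int_0^\eps (\eps^2-s^2)^{(d-j)/2}\,ds$, which after substitution $s = \eps t$ equals $\eps^{d-j+1} \int_{-1}^1 (1-t^2)^{(d-j)/2}\,dt = \eps^{d-j+1}\kappa_{d-j+1}/\kappa_{d-j}$; the last equality is the standard slicing identity for the volume of a ball, $\kappa_{k+1} = \kappa_k \int_{-1}^1 (1-t^2)^{k/2}\,dt$. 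Substituting back produces
\begin{equation*}
\Vol_{d+1}(S + \eps B_{d+1}) = \sum_{j=0}^d h\kappa_{d-j} V_j(K) \eps^{d-j} + \sum_{j=0}^d \kappa_{d-j+1} V_j(K) \eps^{d-j+1}.
\end{equation*}

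Finally, I match this against Steiner's formula for $S$ inside $\R^{d+1}$,
\begin{equation*}
\Vol_{d+1}(S + \eps B_{d+1}) = \sum_{i=0}^{d+1} \kappa_{d+1-i} V_i(S) \eps^{d+1-i}.
\end{equation*}
Reindexing so $i = j+1$ on the LHS and reading off the coefficient of $\eps^{d-j}$ yields $\kappa_{d-j}V_{j+1}(S) = \kappa_{d-j}(V_{j+1}(K) + h V_j(K))$ for $0 \leq j \leq d-1$, and the $j = d$ case reduces to the elementary prism identity $V_{d+1}(S) = h \Vol_d(K) = h V_d(K)$ (since $V_{d+1}(K) = 0$). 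Dividing out $\kappa_{d-j}$ gives the claim. The only genuinely non-routine step is the ball-slicing identity, which I plan to cite rather than re-derive; polynomial identification in $\eps$ is justified because Steiner's formula is an exact polynomial, not an asymptotic expansion.
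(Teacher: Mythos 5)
Your proof is correct and uses the same top-level decomposition as the paper: both arguments split $\Vol_{d+1}(S+\eps B_{d+1})$ at the hyperplanes $x_{d+1}=0$ and $x_{d+1}=h$, identify the middle slab's volume as $h\,\Vol_d(K+\eps B_d)$, and then account for the two caps. The difference is in how the caps are handled. The paper notes that the top cap and bottom cap are congruent (by translation) to the two halves of $K+\eps B_{d+1}$ cut by $\{x_{d+1}=0\}$, so together they have volume exactly $\Vol_{d+1}(K+\eps B_{d+1})$, which is then expanded by Steiner's formula in $\R^{d+1}$. You instead compute the cap volume by Fubini along $e_{d+1}$, applying Steiner in $\R^d$ to each slice and evaluating the resulting integral with the ball-slicing identity $\kappa_{k+1}=\kappa_k\int_{-1}^1(1-t^2)^{k/2}\,dt$; this reproduces the same polynomial $\sum_j \kappa_{d+1-j}V_j(K)\eps^{d+1-j}$ for the cap contribution. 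The paper's congruence observation is more geometric and avoids the integral; yours is more computational but self-contained in the sense that it needs only Steiner's formula in $\R^d$ and a one-line recursion for $\kappa_k$, rather than the reassembly argument. Both are valid, and the coefficient matching is justified exactly as you say, since Steiner's formula is an identity of polynomials.
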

\begin{proof}
Embed $K$ into $\R^{d+1}$ so that it lies in the hyperplane $x_{d+1} = 0$, and
  let $L$ be the line segment from $0$ to $he_{d+1}$, so that $S = K+L$ is an
  orthogonal cylinder with base $K$ and height $h$. We will compute
  $\Vol_{d+1}(S+\eps B_{d+1})$. Recall that $\Vol$ refers to the standard
  volume. Whenever we add subscripts (e.g. $\Vol_d$) we do so to highlight that
  we are talking about the standard volume of a convex set in a $d$-dimensional
  (sub)space.
  
  We claim we can decompose $S+\eps B_{d+1}$ into
  two parts; one with total volume $\Vol_{d+1}(K+\eps B_{d+1})$, and one with
  total volume $h\Vol_{d}(K + \eps B_{d})$. To begin, consider the intersection
  of $S + \eps B_{d+1}$ with $\{ x_{d+1} \in [0, h]\}$. We claim this set has
  volume at least $h\Vol_{d}(K + \eps B_{d})$. In particular, note that (since
  $S$ is an orthogonal cylinder) every cross-section of the form $(S + \eps
  B_{d+1}) \cap \{ x_{d+1} = t\}$ for $t \in [0, h]$ is congruent to the set $K
  + \eps B_{d}$. It follows that the volume of this region is $h\Vol_{d}(K+\eps
  B_{d})$.

Next, consider the intersection of $S + \eps B_{d+1}$ with the set $\{ x_{d+1}
  \not\in [0, h] \}$. This intersection has two components: a component $S^{+}$,
  the intersection of $S + \eps B_{d+1}$ with the set $\{ x_{d+1} \geq h \}$,
  and a component $S^{-}$, the intersection of $S + \eps B_{d+1}$ with the set
  $\{ x_{d+1} \leq 0 \}$ (see Figure \ref{fig:cylinder}). Now, define $K^{+}$ to be the intersection of $K +
  \eps B_{d+1}$ with $\{x_{d+1} \geq 0\}$, and let $K^{-}$ be the intersection
  of $K + \eps B_{d+1}$ with $\{x_{d+1} \leq 0\}$. It is straightforward to
  verify that $K^{+}$ is congruent to $S^{+}$ and that $K^{-}$ is congruent to
  $S^{-}$, and therefore the volume of this region is equal to $\Vol(K^{+}) +
  \Vol(K^{-}) = \Vol_{d+1}(K+\eps B_{d+1})$. 

We therefore have that $\Vol_{d+1}(S+\eps B_{d+1}) = \Vol_{d+1}(K+\eps B_{d+1}) + h\Vol_{d}(K + \eps B_{d})$. Expanding out all parts via Steiner's formula (\ref{eqn:steineralt}), we have that:

$$\sum_{j=0}^{d+1} \kappa_{d+1-j}V_j(S)\eps^{d+1-j} = \sum_{j=0}^{d}\kappa_{d+1-j}V_{j}(K)\eps^{d+1-j} + h\sum_{j=0}^{d}\kappa_{d-j}V_j(K)\eps^{d-j}.$$

Equating coefficients of $\eps^{d-j}$, we find that

$$V_{j+1}(S) = V_{j+1}(K) + hV_{j}(K).$$
\end{proof}

\begin{figure}
\centering
\begin{subfigure}[b]{0.40\textwidth}
  \centering
\begin{tikzpicture}
  \draw[dashed, fill=red!20!white] (-.4,0) -- (-.4,2) .. controls (-.4,2.22) and (-0.22,2.4) .. (0,2.4) --
  (1,2.4) .. controls  (1.22,2.4) and (1.4,2.22)  .. (1.4,2) -- (1.4,0) ..
  controls    (1.4,-.22)  and (1.22,-.4)  .. (1,-.4) -- (0,-.4) .. controls
  (-.22,-.4) and (-.4,-.22)  .. cycle;

  \draw[line width = 1] (0,0) -- (1,0) -- (1,2) -- (0,2) -- cycle;
  % \node at (.5,.25) {\small $K$};
  %\node at (.25,1) {\small $L$};
  \draw[dashed] (-.4, 0) -- (1.4,0);
  \draw[dashed] (-.4, 2) -- (1.4,2);
  \node at (.5,2.20) {\tiny $S^+$};
  \node at (.5,-.2) {\tiny $S^-$};
  \begin{scope}[shift={(2.5,1 )}]
   \draw[dashed, fill=red!20!white] (-.4,0) .. controls (-.4,0.22) and
   (-0.22,0.4) .. (0,0.4) --
  (1,0.4) .. controls  (1.22,0.4) and (1.4,0.22)  .. (1.4,0) -- (1.4,0) ..
  controls    (1.4,-.22)  and (1.22,-.4)  .. (1,-.4) -- (0,-.4) .. controls
  (-.22,-.4) and (-.4,-.22)  .. cycle;
  \draw[line width = 1](0,0) -- (1,0);
    \draw[dashed] (-.4,0) -- (1.4, 0);
\node at (.5,.2) {\tiny $S^+$};
  \node at (.5,-.2) {\tiny $S^-$};
  \end{scope}
\end{tikzpicture}
  \caption{Proof of Lemma \ref{lem:cylinder}: we decompose $K+L+\epsilon B$
  (left) and $K + \epsilon B$ (right).
  }
\label{fig:cylinder}
\end{subfigure}
\qquad
\begin{subfigure}[b]{0.40\textwidth}
  \centering
\begin{tikzpicture}
  \draw[line width = 1, fill=red!20!white] (0,0) -- (1,0) -- (1,2) -- (0,2) -- cycle;
  \node at (.5,1) {$S_\bot$};
  \begin{scope}[shift={(2,0 )}]
  \draw[line width = 1, fill=red!20!white] (0,0) -- (1,0) -- (2,2) -- (1,2) -- cycle;
  \draw[line width = 1.5, blue] (3/5, 6/5) -- (7/5, 4/5);
    \node at (.75,.5) {$S_1$};
    \node at (1.25,1.5) {$S_2$};
  \end{scope}
  \begin{scope}[shift={(5,1 )}]
  \draw[line width = 1, fill=red!20!white] (0,0) -- (1,0) -- (7/5, 4/5) -- (3/5,
    6/5)  -- cycle;
    \draw[line width = 1.5, blue] (3/5, 6/5) -- (7/5, 4/5);
    \node at (.75,.5) {$S_1$};
  \end{scope}
  \begin{scope}[shift={(4,-1 )}]
  \draw[line width = 1, fill=red!20!white] (3/5, 6/5) -- (7/5, 4/5) -- (2,2) -- (1,2) -- cycle;
    \draw[line width = 1.5, blue] (3/5, 6/5) -- (7/5, 4/5);
    \node at (1.25,1.5) {$S_2$};
  \end{scope}
\end{tikzpicture}
  \caption{First step in the proof of Lemma \ref{lem:oblique}, we cut and
  re-assemble a cylinder  }
\label{fig:oblique}
\end{subfigure}

\begin{subfigure}[b]{0.40\textwidth}
  \centering
\begin{tikzpicture}
  \draw[line width = 1, fill=red!20!white] (0,0) -- (1,0) -- (1+1/10,2/10) --
  (1/10,2/10) -- cycle;
  \draw[line width = 1.5, blue] (1/5, 6/5-4.5/5) -- (5/5, 4/5-4.5/5);
  \node at (1.3,.1) {\small $S$};

  \begin{scope}[shift={(2.5,0 )}]
  \draw[line width = 1, fill=red!20!white] (0,0) -- (1,0) -- (1+4/10,2*4/10) --
  (4/10,2*4/10) -- cycle;
  \draw[line width = 1, fill=red!20!white] (0,0) -- (1,0) -- (1+3/10,2*3/10) --
  (3/10,2*3/10) -- cycle;
  \draw[line width = 1, fill=red!20!white] (0,0) -- (1,0) -- (1+2/10,2*2/10) --
  (2/10,2*2/10) -- cycle;
  \draw[line width = 1, fill=red!20!white] (0,0) -- (1,0) -- (1+1/10,2*1/10) --
  (1/10,2*1/10) -- cycle;
    \node at (1.4,.1) {\small $S^{[4]}$};
  \draw[line width = 1.5, blue] (3/5-3/10, 6/5-3/5) -- (7/5-3/10, 4/5-3/5);
    \end{scope}
  \begin{scope}[shift={(5,0 )}]
  \draw[line width = 1, fill=red!20!white] (0,0) -- (1,0) -- (1,2*4/10) --
  (0,2*4/10) -- cycle;
  \draw[line width = 1, fill=red!20!white] (0,0) -- (1,0) -- (1,2*3/10) --
  (0,2*3/10) -- cycle;
  \draw[line width = 1, fill=red!20!white] (0,0) -- (1,0) -- (1,2*2/10) --
  (0,2*2/10) -- cycle;
  \draw[line width = 1, fill=red!20!white] (0,0) -- (1,0) -- (1,2*1/10) --
  (0,2*1/10) -- cycle;
    \node at (1.4,.1) {\small $S^{[4]}_\perp$};
    \end{scope}

\end{tikzpicture}
  \caption{Stacking thin oblique and orthogonal cylinders in Lemma
  \ref{lem:oblique}}
\label{fig:stacking}
\end{subfigure}
\quad
\begin{subfigure}[b]{0.40\textwidth}
  \centering
\begin{tikzpicture}
  \draw[line width = 1, fill=red!20!white] (0,0) -- (3,2) -- (2,0) -- cycle;
  \draw[line width = 1] (.5*1.5,.5) -- (2+.5*.5,.5);
  \draw[line width = 1] (1*1.5,1) -- (2+.5,1);
  \draw[line width = 1] (1.5*1.5,1.5) -- (2+.5*1.5,1.5);
  \node at (1.5+.2,.5+.25) {\small $S_2$};
  \node at (2+.2,1+.25) {\small $S_1$};
  \node at (2.5+.2,1.5+.25) {\small $S_0$};
  \node at (1+.2,.25) {\small $S_3$};

  \begin{scope}[shift={(3.5,0 )}]
  \draw[dashed] (0,0) -- (3,2) -- (2,0) -- cycle;
  \draw[line width = 1, fill=red!20!white] (.5*1.5,.5) -- (2+.5*.5,.5) --
    (2+.5*.5-.25,0) --  (.5*1.5-.25,0) -- cycle;
  \draw[line width = 1, fill=red!20!white] (1*1.5,1) -- (2+.5,1) -- (2+.25,.5) -- (1*1.5-.25,.5) --
    cycle ;
  \draw[line width = 1, fill=red!20!white] (1.5*1.5,1.5) -- (2+.5*1.5,1.5) --  (2+.5*1.5-.25,1)
    -- (1.5*1.5-.25,1) -- cycle;
  \end{scope}
\end{tikzpicture}
  \caption{Approximating a cone by thin oblique cylinders}
\label{fig:cone_cylinder}
\end{subfigure}

\caption{Illustration of the cylinder and cone proofs. In all cases, the $x$-axis is a
$d$-dimensional space and the $y$-axis a $1$-dimensional space }
\end{figure}
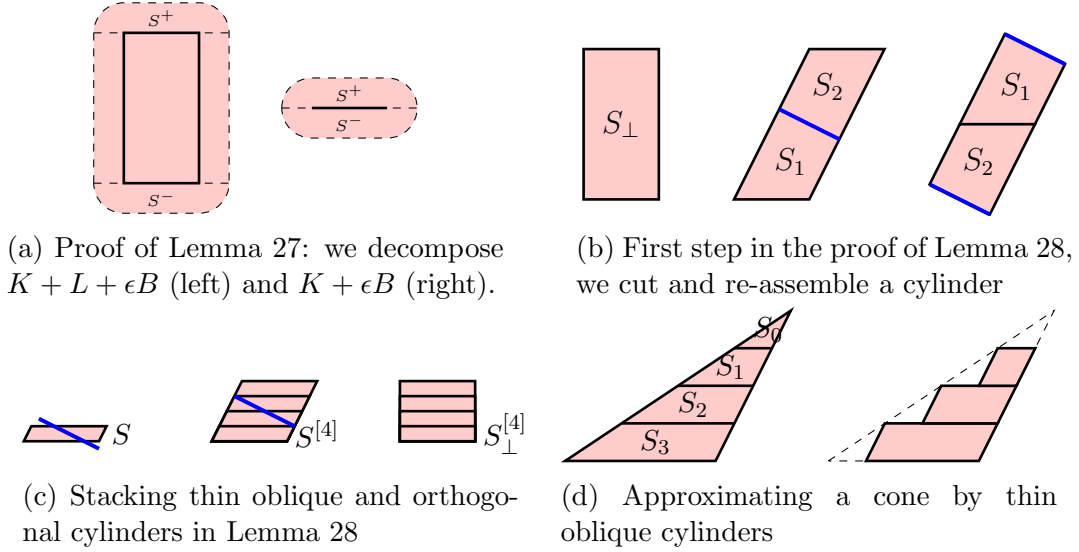

An \textit{oblique cylinder} in $\R^{d+1}$ is formed by taking the Minkowski sum
of a convex set $K \subset \R^{d}$ and a line segment $L$ not necessarily
perpendicular to $K$. The \textit{height} of an oblique cylinder is equal to the
length of the component of $L$ orthogonal to the affine subspace containing $K$.

\begin{lemma}\label{lem:oblique}
Let $K$ be a convex set in $\R^d$. If $S$ is an oblique cylinder with base $K$ and height $h$, and $S_{\perp}$ is an orthogonal cylinder with base $K$ and height $h$, then (for all $1 \leq j \leq d+1$)

$$V_{j}(S) \geq V_{j}(S_{\perp}).$$
\end{lemma}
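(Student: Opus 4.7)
The plan is to compare $S$ and $S_\perp$ slab-by-slab, slicing each into thin parallel sub-cylinders and letting the slab count $n$ tend to infinity. The main technical input is Lemma \ref{lem:stretch}, applied to the contraction that sends an oblique slab's ``effective base'' to the orthogonal slab's base.

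First I would slice both $S$ and $S_\perp$ with $n-1$ hyperplanes parallel to $K$, producing $n$ thin oblique sub-cylinders $\tilde{S}_k = K + L_k$ (each with $|L_k|=\ell/n$, where $\ell$ is the length of the oblique segment $L$) and $n$ thin orthogonal sub-cylinders $\tilde{S}^\perp_k$ of height $h/n$. Adjacent sub-cylinders meet in a single $d$-dimensional face, namely a translate of $K$, and hence the valuation property of $V_j$ gives
\begin{align*}
V_j(S) = \sum_{k=1}^n V_j(\tilde{S}_k) - (n-1) V_j(K), \qquad V_j(S_\perp) = n\,V_j(\tilde{S}^\perp_1) - (n-1) V_j(K).
\end{align*}
The problem then reduces to showing $V_j(\tilde{S}_k) \geq V_j(\tilde{S}^\perp_1)$ for each thin slab. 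Next, I would derive a Minkowski-sum-with-segment identity: for any convex body $K$ and segment $L$ of length $\ell$ in direction $u$, $V_{j+1}(K + L) = V_{j+1}(K) + \ell \cdot V_j(\pi_{u^\perp}(K))$. This is exact (not merely leading order) since the mixed volume $V(K^{j},L,L)$ involving two parallel copies of the same segment vanishes; it can be verified by expanding both sides via Steiner's formula (Equation \ref{eqn:steineralt}) and matching coefficients in $\varepsilon$. For the thin orthogonal slab, $u^\perp$ contains $K$'s hyperplane, so $\pi_{u^\perp}(K) = K$ and we recover Lemma \ref{lem:cylinder}. For the thin oblique slab, Lemma \ref{lem:twoplaneproj} identifies $\pi_{u^\perp}(K)$ as a $(\cos\theta)$-stretch of $K$ with $\cos\theta = h/\ell$. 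The per-slab inequality thus reduces to $V_j(T_\alpha(K)) \geq \alpha V_j(K)$ for $\alpha = h/\ell \leq 1$, where $T_\alpha$ is the axis-aligned contraction with $\det T_\alpha = \alpha$.

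\textbf{Main obstacle.} Lemma \ref{lem:stretch} gives this contraction inequality directly in the top-intrinsic-volume case (for $V_d$ of a body in $\R^{d+1}$) but not for intermediate $V_j$ with $j<d$. To handle general $j$, I would invoke Theorem \ref{thm:random_projections} to write $V_j(K)$ as an average of $j$-dimensional projection volumes over the Grassmannian, and then apply Lemma \ref{lem:stretch} projection by projection. The delicate point is that the stretch $T_\alpha$ does not commute with a generic projection---it commutes cleanly only when the stretch axis lies inside, or is orthogonal to, the projection subspace---so the key technical claim is that, after averaging over projections, the $j$-dimensional volume drops by at most a factor of $\alpha$. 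Once this generalized stretch inequality is in hand, summing the per-slab inequality over $n$ slabs and letting $n \to \infty$ yields $V_j(S) \geq V_j(S_\perp)$ via the valuation decomposition, completing the argument.
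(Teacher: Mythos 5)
Your reduction of the codimension-one case is essentially the paper's: you arrive (via the Minkowski-sum-with-a-segment identity) at exactly the comparison $\ell\, V_j(K') \geq h\, V_j(K)$ where $K'$ is an $(h/\ell)$-stretch of $K$, which is what the paper obtains by cutting the oblique cylinder perpendicular to $L$ and reassembling it as an orthogonal cylinder over $K'$. The slicing into $n$ thin slabs, incidentally, buys you nothing: each slab is again an oblique cylinder over the same base $K$, so the identity you invoke per slab is the same identity for $S$ itself and no limit is ever taken.

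The genuine gap is the step you yourself label the ``main obstacle.'' For $j<d$ you need the generalized stretch inequality $V_j(T_\alpha K) \geq \alpha\, V_j(K)$, and you do not prove it; you only name it as ``the key technical claim.'' Your proposed route---apply Theorem \ref{thm:random_projections} and then Lemma \ref{lem:stretch} projection by projection---founders on precisely the non-commutativity you flag: for a generic $H \in \Gr(d,j)$ the set $\pi_H(T_\alpha K)$ is not a contraction image of $\pi_H(K)$, so there is no per-projection inequality to average, and no argument is given for why the average nonetheless loses at most a factor $\alpha$. (The claim is in fact true, but the clean proofs go through mixed volumes: $V(T_\alpha K[j],B[d-j]) = \alpha\, V(K[j],T_{1/\alpha}B[d-j]) \geq \alpha\, V(K[j],B[d-j])$ by monotonicity---machinery not developed in the paper.) The paper avoids ever needing this inequality for $j<d$ by doing the Grassmannian averaging at the level of the \emph{cylinders} rather than of the stretch: it conditions on $P=\pi_d(H)$ and projects $S$ and $S_\perp$ onto the $(j+1)$-dimensional subspace $P'$ spanned by $P$ and the cylinder axis $e_{d+1}$. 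Because $P'$ contains the axis, $\Pi_{P'}S$ and $\Pi_{P'}S_\perp$ are again an oblique and an orthogonal cylinder of the \emph{same} height $h$ over the \emph{same} base $\Pi_{P'}K$, so only the codimension-one case---where Lemma \ref{lem:stretch} genuinely applies---is ever invoked. To repair your argument you should either adopt that adapted choice of projection subspaces, or supply an independent proof of $V_j(T_\alpha K)\geq \alpha V_j(K)$; as written, the lemma is not established for $1 \leq j \leq d-1$.
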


\begin{proof}
Note that when $j=d+1$, $V_{j}(S) = V_{j}(S_{\perp})$ as they are related by a
linear transformation with determinant $1$. For the remaining cases, we will
  first prove for $j=d$ and then reduce all other cases to $j=d$.
 
  \paragraph{Case $\mathbf{j=d}$ (tall cylinder).}
Write $S = K + L$, where $L$ is a line segment of length $\ell$ (with orthogonal
  component $h$ with respect to $K$). We will begin by choosing a
  hyperplane $H$ perpendicular to $L$ that intersects $S$ along its lateral
  surface, dividing it into two sections $S_1$ and $S_2$ (see Figure
  \ref{fig:oblique}).
  Note that this is only possible if the height $h$ of
  this cylinder is large enough with respect to the diameter of $K$ and angle
  $L$ makes with $K$. We address the case of the short cylinder in the next case.
  Let $K' = H \cap S$. By Theorem \ref{thm:valuation}, we know that $V_{d}(S) =
  V_{d}(S_1) + V_{d}(S_2) - V_d(K')$. 

Note that it is possible to reassemble $S_1$ and $S_2$ by gluing them along
  their copies of $K$ to form an orthogonal cylinder with base $K'$ and height
  $\ell$. Call this cylinder $S'$. Again by Theorem \ref{thm:valuation}, we have
  that $V_{d}(S') = V_{d}(S_1) + V_{d}(S_2) - V_d(K)$, and therefore $V_{d}(S) =
  V_{d}(S') + V_{d}(K) - V_{d}(K')$. But by Lemma \ref{lem:cylinder}, $V_{d}(S')
  = V_{d}(K') + \ell V_{d-1}(K')$, so $V_{d}(S) = V_{d}(K) + \ell V_{d-1}(K')$.
  On the other hand (also by Lemma \ref{lem:cylinder}), $V_{d}(S_{\perp}) =
  V_{d}(K) + h V_{d-1}(K)$. Therefore, to show that $V_{d}(S) \geq
  V_{d}(S_{\perp})$, it suffices to show that $\ell V_{d-1}(K') \geq h
  V_{d-1}(K)$. 

  Now, note that $K'$ is the projection of $K$ onto the hyperplane $H$. The normal
  to $H$ is parallel to $L$. Since $L$ has length $\ell$ and orthogonal
  component $h$ with respect to $K$, the angle between $L$ and the normal
  to $K$ equals $\arccos(h/\ell)$, from which it follows from Lemma
  \ref{lem:twoplaneproj} that $K'$ is an $(h/\ell)$ stretch of $K$. By Lemma
  \ref{lem:stretch}, it follows that $V_{j-1}(K') \geq (h/\ell)V_{j-1}(K)$, from
  which the desired inequality follows.

  \paragraph{Case $\mathbf{j=d}$ (short cylinder).}
Finally, what if the original cylinder was not tall enough to divide into two
  components in the desired manner? To deal with this, let $S^{[n]}$ denote $n$
  copies of $S$ stacked on top of each other (i.e. $S^{[n]} = K + nL$), and let
  $S_{\perp}^{[n]}$ denote $n$ copies of $S_{\perp}$ stacked on top of each
  other (i.e. an orthogonal cylinder with base $K$ and height $nh$). Repeatedly
  applying Theorem \ref{thm:valuation}, we have that $V_{d}(S^{[n]}) = nV_{d}(S)
  + (n-1)V_{d}(K)$, and that $V_{d}(S_{\perp}^{[n]}) = nV_{d}(S_{\perp}) +
  (n-1)V_{d}(K)$. Therefore, to show that $V_{d}(S) \geq V_{d}(S_{\perp})$, it
  suffices to show that $V_{d}(S^{[n]}) \geq V_{d}(S_{\perp}^{[n]})$. For some
  $n$, $S^{[n]}$ will be tall enough to divide as desired, which completes the
  proof.

\paragraph{Reducing $\mathbf{j<d}$ to $\mathbf{j=d}$.}
  We can without loss of generality assume that $K$ (the base of the cylinder)
  is in the plane spanned by the first $d$ coordinate vectors. Also, let $\pi_d :
  \R^{d+1} \rightarrow \R^d$ be the projection in the first $d$ coordinates. 

  Recall that the $j$th intrinsic volume $V_{j}(S)$ is equal to the expected
  volume of the projection of $S$ onto a randomly chosen $j$-dimensional
  subspace of $\R^{d+1}$, where the distribution over subspaces is given by the
  Haar measure over  $\Gr(d+1, k)$ (see Theorem \ref{thm:random_projections}). 

  Therefore, choose $H$ according to this measure and let $P = \pi_d(H)$.
  Note that (almost surely) $P$ is an element of $\Gr(d, j)$
  and $P$ is distributed according to the Haar
  measure of this Grassmannian. By the law of total expectation, we can write
\begin{eqnarray*}
V_{j}(S) &=& \E_{H \sim \Gr(d+1, j)}\left[V_{j}(\Pi_{H}S)\right] 
= \E_{P \sim \Gr(d, j)}\left[\E_{H\sim \Gr(d+1, j)}\left[V_{j}(\Pi_{H}S) \,|\, \pi_{d}(H) = P\right]\right]
\end{eqnarray*}

Let $P'$ be the element of $\Gr(d+1, j+1)$ spanned by $P$ and $e_{d+1}$. Note that since $H \subset P'$, $\Pi_{H}S = \Pi_{H}\Pi_{P'}S$. We therefore claim that

$$\E_{H\sim \Gr(d+1, j)}\left[V_{j}(\Pi_{H}S) \,|\, \pi_{d}(H) = P\right] = V_{j}(\Pi_{P'}S).$$

Indeed, conditioned on $\pi_{d}(H) = P$, $H$ is a (Haar-)uniform subspace of
dimension $j$ of the $j+1$-dimensional space $P'$, from which
the above equality follows. Therefore, we have that
$$V_{j}(S) = \E_{P \sim \Gr(d, j)}\left[V_{j}(\Pi_{P'}S)\right]$$

\noindent
and similarly

$$V_{j}(S_{\perp}) = \E_{P \sim \Gr(d, j)}\left[V_{j}(\Pi_{P'}S_{\perp})\right].$$

Now, since $e_{d+1}$ belongs to $P'$, if $S_{\perp}$ is an orthogonal cylinder
with base $K$ and height $h$ in $\R^{d+1}$, then $\Pi_{P'}S_{\perp}$ is an
orthogonal cylinder with base $\Pi_{P'}K$ and height $h$ in $P'$. Likewise,
$\Pi_{P'}S$ is an oblique cylinder with base $\Pi_{P'}K$ and height $h$ in $P'$.
Since $P'$ is $j+1$ dimensional, it follows the previous cases that:
$V_{j}(\Pi_{P'}S) \geq V_{j}(\Pi_{P'}S_{\perp})$
so

$$V_{j}(S) = \E_{P \sim \Gr(\d, j)}\left[V_{j}(\Pi_{P'}S)\right] \geq \E_{P \sim
\Gr(d, j)}\left[V_{j}(\Pi_{P'}S_{\perp})\right] \geq V_{j}(S_{\perp})$$
\end{proof}

\subsubsection{Intrinsic volumes of cones}

A \textit{cone} in $\R^{d+1}$ is the convex hull of a $d$-dimensional convex set
$K$ and a point $p \in \R^{d+1}$. If the distance from $p$ to the affine
subspace containing $K$ is $h$, we say the cone has \textit{height} $h$
and \textit{base} $K$.

\begin{lemma}\label{lem:cone2}
Let $K$ be a convex set in $\R^{d}$, and let $S$ be a cone in $\R^{d+1}$ with base $K$ and height $h$. Then, for all $0 \leq j \leq d$,

$$V_{j+1}(S) \geq \frac{1}{j+1}h V_{j}(K).$$
\end{lemma}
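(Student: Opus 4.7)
My plan follows the strategy suggested by Figure \ref{fig:cone_cylinder}: approximate the cone $S$ from inside by a stack of thin oblique cylinders and then invoke Lemmas \ref{lem:oblique} and \ref{lem:cylinder}. Place $S$ so that its base $K$ lies in the hyperplane $\{x_{d+1}=0\}$ and its apex at height $h$. Divide $[0,h]$ into $n$ equal sub-intervals, and for each $i \in \{0,1,\ldots,n-1\}$ inscribe in the $i$-th horizontal slab of $S$ the oblique cylinder $C_i$ whose base is a translate of the cone's smaller (upper) cross-section---i.e.\ a scaled copy of $K$ by factor $1-(i+1)/n$---and whose generators run along the cone's slope. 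Each $C_i$ then has orthogonal height $h/n$ and sits entirely inside $S$.

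By Lemma \ref{lem:oblique}, $V_{j+1}(C_i) \geq V_{j+1}(D_i)$ where $D_i$ is the orthogonal cylinder with the same base and height. Lemma \ref{lem:cylinder} combined with the homogeneity $V_j(\alpha K) = \alpha^j V_j(K)$ then gives
$$V_{j+1}(C_i) \;\geq\; \left(1 - \tfrac{i+1}{n}\right)^{j+1} V_{j+1}(K) \;+\; \tfrac{h}{n}\left(1 - \tfrac{i+1}{n}\right)^{j} V_j(K).$$

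The stack $S^{(n)} := \bigcup_i C_i \subseteq S$ is polyconvex, with consecutive cylinders $C_i, C_{i+1}$ meeting in a single $d$-dimensional face congruent to $(1-(i+2)/n)K$ and all higher-order intersections empty. Extending intrinsic volumes to polyconvex sets via inclusion-exclusion yields $V_{j+1}(S^{(n)}) = \sum_i V_{j+1}(C_i) - \sum_i V_{j+1}(C_i \cap C_{i+1})$; substituting the bounds above and using ambient independence (Theorem \ref{thm:ambient}) and homogeneity to evaluate $V_{j+1}(C_i \cap C_{i+1}) = (1-(i+2)/n)^{j+1} V_{j+1}(K)$, the $V_{j+1}(K)$-contributions telescope to a single surviving term $(1-1/n)^{j+1} V_{j+1}(K)$, while the $V_j(K)$-contribution becomes a Riemann sum $\tfrac{h}{n}\sum_i (1-(i+1)/n)^j$ converging to $h\int_0^1 s^j\,ds = h/(j+1)$. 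Letting $n \to \infty$ should therefore produce $V_{j+1}(S) \geq V_{j+1}(K) + \tfrac{h}{j+1} V_j(K) \geq \tfrac{h}{j+1} V_j(K)$, as desired.

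The main obstacle is making the limit step rigorous: since $S^{(n)}$ is polyconvex and not convex, the continuity (Definition \ref{def:continuous}) and monotonicity of intrinsic volumes stated in Theorem \ref{thm:valuation} do not apply verbatim. I plan to resolve this either by working in the standard extension of intrinsic volumes to the convex ring, where inclusion-exclusion is part of the definition and continuity holds for Hausdorff-convergent monotone sequences (via polytope approximation), or by comparing $S^{(n)}$ to its convex hull $\widehat{S}^{(n)} = \mathrm{conv}(S^{(n)}) \subseteq S$, which is convex, converges to $S$, and whose intrinsic volumes can be pinched between those of $S^{(n)}$ and those of $S$.
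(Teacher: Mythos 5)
Your overall strategy is the one the paper uses (same slicing into $n$ horizontal slabs, same appeal to Lemmas \ref{lem:cylinder} and \ref{lem:oblique}, same Riemann sum converging to $h\int_0^1 s^j\,ds$), but the order in which you assemble the pieces creates a genuine gap at exactly the point you flag. You inscribe the cylinders first and then try to relate $V_{j+1}$ of their union $S^{(n)}$ to $V_{j+1}(S)$. Since $S^{(n)}$ is not convex, you are forced into the additive extension of intrinsic volumes to the convex ring, and neither of your proposed escapes works there: the extension is \emph{not} monotone (a polyconvex subset of a convex body can have larger $V_j$ --- think of a fine comb inside a square, whose inclusion--exclusion value of $V_1$ blows up), so the pinch $V_{j+1}(S^{(n)}) \leq V_{j+1}(\widehat{S}^{(n)})$ does not follow from containment; and the extension is not continuous with respect to Hausdorff distance either, so $V_{j+1}(S^{(n)}) \to V_{j+1}(S)$ cannot be cited. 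Both routes would require a bespoke argument at least as hard as the lemma itself.

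The fix is a simple reordering, and it is what the paper does: decompose $S$ \emph{exactly} into the convex frusta $S_i$ cut out by consecutive hyperplanes, with $K_i = H_i \cap S$ congruent to $\tfrac{i}{n}K$. Every partial union $S_0 \cup \dots \cup S_i$ is itself (a truncation of) the cone, hence convex, so the valuation property of Theorem \ref{thm:valuation} applies verbatim and yields the exact identity $V_{j+1}(S) = \sum_{i} V_{j+1}(S_i) - \sum_{i} V_{j+1}(K_i)$ --- no convex-ring extension needed. Only now do you inscribe an oblique cylinder $C_i$ with base $K_i$ and height $h/n$ inside each frustum $S_i$; monotonicity for \emph{convex} sets gives $V_{j+1}(S_i) \geq V_{j+1}(C_i) \geq V_{j+1}(K_i) + \tfrac{h}{n}V_j(K_i)$ by Lemmas \ref{lem:oblique} and \ref{lem:cylinder}. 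The $V_{j+1}(K_i)$ terms cancel against the subtracted sum, leaving $V_{j+1}(S) \geq \tfrac{h}{n}\sum_i (\tfrac{i}{n})^j V_j(K)$, and the only limit remaining is a limit of real numbers. Your telescoping computation and the resulting constant $\tfrac{1}{j+1}$ are otherwise correct.
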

\begin{proof}

Choose a positive integer $n$, and divide $S$ into $n$ parts via the hyperplanes
  $H_{i} = \{ x_{d+1} = \frac{n-i}{n}h\}$ (for $0 \leq i \leq n$). For $0 \leq i
  < n$, let $K_i$ be the intersection of $H_i$ with $S$, and let and let $S_i$
  be the region of $S$ bounded between hyperplanes $H_{i}$ and $H_{i+1}$ (see
  Figure \ref{fig:cone_cylinder}). Note
  that each $S_i$ is a frustum with bases $K_i$ and $K_{i+1}$ and height $h/n$,
  and furthermore that each $K_i$ is congruent to $\frac{i}{n}K$. 

By repeatedly applying Theorem \ref{thm:valuation}, we know that

$$V_{j+1}(S) = \sum_{i=0}^{n-1} V_{j+1}(S_i) - \sum_{i=1}^{n-1} V_{j+1}(K_i).$$

Note that each set $S_i$ contains an oblique cylinder with base $K_{i}$ (since
  $K_{i}$ is a contraction of $K_{i+1}$, some translate of $K_{i}$ is strictly
  contained inside $K_{i+1}$) and height $h/n$. It follows from Lemmas
  \ref{lem:cylinder} and \ref{lem:oblique} that $V_{j+1}(S_i) \geq V_{j+1}(K_i)
  + \frac{h}{n}V_{j}(K_i)$. It follows that

\begin{eqnarray*}
V_{j+1}(S) \geq  \sum_{i=1}^{n-1} \frac{h}{n}V_{j}(K_i) 
= \sum_{i=1}^{n-1} \frac{h}{n}V_{j}\left(\frac{i}{n}K\right) 
= \sum_{i=1}^{n-1} \frac{h}{n}\left(\frac{i}{n}\right)^{j}V_{j}(K) 
=\left(\sum_{i=1}^{n-1}\left(\frac{i}{n}\right)^{j}\frac{1}{n}\right)h V_{j}(K).
\end{eqnarray*}

As $n$ goes to infinity, this sum approaches $\int_{0}^{1}x^{j}dx = \frac{1}{j+1}$, and therefore we have that $V_{j+1}(S) \geq \frac{1}{j+1}hV_{j}(K)$. 

\end{proof}

\subsection{Efficient implementation}\label{sec:efficient}

We have thus far ignored issues of computational efficiency. In this subsection, we will show that algorithms SymmetricSearch (Algorithm \ref{algo:symmetricdd}) and PricingSearch (Algorithm \ref{algo:pricingdd}) can be implemented in polynomial time by a randomized algorithm that succeeds with high probability.

The main primitive we require to implement both algorithms is a way to efficiently compute the intrinsic volumes of a convex set (and in particular a convex polytope, since our knowledge set starts as $[0,1]^d$ and always remains a convex polytope). Unfortunately, even computing the ordinary volume of a convex polytope (presented as an intersection of half-spaces) is known to be $\# P$-hard \cite{barany1987computing}. Fortunately, there exist efficient randomized algorithms to compute arbitrarily good multiplicative approximations of the volume of a convex set.

\begin{theorem}[Dyer, Frieze, and Kannan \cite{dyer1991random}]\label{thm:volumeapprox}
Let $K$ be a convex subset of $\R^{d}$ with an efficient membership oracle (which given a point, returns whether or not $x \in K$). Then there exists a randomized algorithm which, given input $\eps > 0$, runs in time $\poly(d, \frac{1}{\eps})$ and outputs an $\eps$-approximation to $\Vol(K)$ with high probability.
\end{theorem}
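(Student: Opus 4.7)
The plan is to follow the Dyer-Frieze-Kannan Markov chain Monte Carlo strategy for volume estimation, adapted to the convex body setting. First I would reduce to a well-rounded instance: by finding (in polynomial time, using only the membership oracle) an affine transformation that approximates a John ellipsoid of $K$, I may assume $B(0,1) \subseteq K \subseteq B(0, R)$ with $R = \poly(d)$. Since volumes transform by a known determinant under this affine map, it suffices to estimate the volume of the rounded body and multiply back.

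Second, I would set up a telescoping product. Define $K_i = K \cap B(0, 2^{i/d})$ for $i = 0, 1, \ldots, m$ with $m = O(d \log R)$, so that $K_0 = B(0,1)$ (whose volume has a closed form) and $K_m = K$. Then
\[ \Vol(K) = \Vol(K_0) \cdot \prod_{i=0}^{m-1} \frac{\Vol(K_{i+1})}{\Vol(K_i)}. \]
By construction each ratio lies in $[1, 2]$, so each can be estimated to relative accuracy $1 \pm \eps/m$ by drawing $N = \poly(m/\eps)$ nearly-uniform samples from $K_{i+1}$ and counting the fraction that land in $K_i$ via the membership oracle. A union bound and a standard Chernoff estimate on the product of these ratio estimates then gives an overall $\eps$-approximation with high probability.

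The real work is in the third ingredient: a polynomial-time algorithm to sample nearly uniformly from a convex body given only a membership oracle. Here I would use a Markov chain such as the ball walk or hit-and-run; these consult only the membership oracle at each step and are reversible with respect to the uniform measure on the body. To bound mixing, I would invoke the discrete Cheeger inequality: if the chain has conductance $\Phi$, its mixing time is $O(\Phi^{-2} \log(1/\eps))$. A warm-start procedure, in which samples from the chain on $K_i$ are used to initialize the chain on $K_{i+1}$, avoids incurring a fresh burn-in cost at each level of the telescoping product.

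The main obstacle, and the technical heart of the argument, is the isoperimetric inequality required to prove $\Phi \geq 1/\poly(d)$. Concretely, one needs that for any convex body $K \subseteq \R^d$ and any partition $K = A \sqcup B$, the measure of a small neighborhood around $\partial A \cap \partial B$ inside $K$ is at least $\min(\Vol(A), \Vol(B))/\poly(d, R)$. The plan would be to prove this via the localization lemma, which reduces the $d$-dimensional isoperimetric statement to a one-dimensional claim about log-concave densities on line segments; that one-dimensional claim is then handled by a direct application of a Brunn-Minkowski-type inequality. Combining this isoperimetric bound with a one-step overlap estimate for the walk yields the needed conductance lower bound, and hence the claimed $\poly(d, 1/\eps)$ running time.
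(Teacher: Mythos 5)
This theorem is quoted from Dyer, Frieze, and Kannan; the paper gives no proof of it and uses it as a black box, so there is nothing in the paper to compare your argument against line by line. Your sketch is a correct outline of the standard proof strategy: rounding via an approximate John ellipsoid, the telescoping product $\Vol(K)=\Vol(K_0)\prod_i \Vol(K_{i+1})/\Vol(K_i)$ with $K_i = K\cap B(0,2^{i/d})$ (and your claim that each ratio lies in $[1,2]$ is right, since $2^{-1/d}K_{i+1}\subseteq K_i$ when $0\in K$), Monte Carlo estimation of each ratio from near-uniform samples, and a random-walk sampler whose mixing is controlled by conductance and ultimately by an isoperimetric inequality for convex bodies. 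Two caveats are worth flagging. First, as stated the theorem is slightly too strong: with a bare membership oracle and no promise that $K$ contains a ball of known radius and is contained in one, volume approximation is information-theoretically impossible; every correct formulation (including DFK's) assumes a well-guaranteed body, and your rounding step silently relies on this. In the paper's application $K$ is an explicit polytope inside $[0,1]^d$, so this is harmless, but your proof should state the assumption. Second, your proposal is a plan rather than a proof: the conductance lower bound, the localization-based isoperimetric inequality, the handling of the bias from only \emph{nearly} uniform samples in the ratio estimator, and the warm-start analysis are each named but not carried out, and these are precisely where the length and difficulty of the DFK/Lov\'asz--Simonovits papers lie. (Historically, DFK used a grid walk with a more hands-on isoperimetric argument; the ball-walk/hit-and-run plus localization route you describe is the later, cleaner treatment. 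Either suffices for the $\poly(d,1/\eps)$ bound claimed here.)
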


We will show how we can extend this to efficiently compute (approximately, with high probability) the intrinsic volumes of a convex polytope presented as an intersection of half-spaces. 

\begin{theorem}\label{thm:intrinsicvolumeapprox}
Let $K$ be a polytope in $\R^{d}$ defined by the intersection of $n$ half-spaces and contained in $[0,1]^d$. Then there exists a randomized algorithm which, given input $\eps > 0$ and $1 \leq i \leq d$, runs in time $\poly(d, n, \frac{1}{\eps})$, and outputs an $\eps$-approximation to $V_{i}(K)$ with high probability.
\end{theorem}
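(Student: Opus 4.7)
The plan is to use the random projections formula (Theorem \ref{thm:random_projections}), which expresses the intrinsic volume as an expectation of ordinary $i$-dimensional volumes: $V_i(K) = \E_{H \sim \Gr(d,i)}[\Vol_i(\pi_H(K))]$. This reduces approximating $V_i(K)$ to (a) sampling Haar-uniform $i$-planes, (b) approximating ordinary volumes of convex bodies via Theorem \ref{thm:volumeapprox}, and (c) Monte Carlo averaging.

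For (a), a Haar-random $i$-dimensional subspace $H \subset \R^d$ is generated by orthonormalizing (Gram--Schmidt) the columns of a $d \times i$ matrix of i.i.d.\ standard Gaussians. For (b), given $H$, the projected polytope $\pi_H(K) \subset H \cong \R^i$ admits an efficient membership oracle: a point $y \in H$ lies in $\pi_H(K)$ iff the affine slice $\{x \in \R^d : x \in K,\, \pi_H(x) = y\}$ is nonempty, which is a linear feasibility problem on the half-space description of $K$ and is solvable in time $\poly(d,n)$. Combining this oracle with Theorem \ref{thm:volumeapprox} yields a multiplicative $(1 \pm \eps/3)$-approximation $\widetilde{\Vol}_k$ to $\Vol_i(\pi_{H_k}(K))$ for each sampled subspace $H_k$ in time $\poly(d,n,1/\eps)$. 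For (c), the estimator is $\hat V_i(K) = \frac{1}{N}\sum_{k=1}^N \widetilde{\Vol}_k$, output after $N$ independent draws.

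The main obstacle is choosing $N = \poly(d,n,1/\eps)$ so that the Monte Carlo error converts into a multiplicative (rather than merely additive) $\eps$-approximation. Each per-sample value $\Vol_i(\pi_H(K))$ is bounded above by $B := \max_H \Vol_i(\pi_H([0,1]^d))$, which, since $[0,1]^d$ has diameter $\sqrt d$, is at most $(C\sqrt d)^i$; Hoeffding's inequality then yields additive error $O(B/\sqrt{N})$ with high probability, so $N \gtrsim (B/(\eps V_i(K)))^2$ samples suffice. The hardest step is bounding the ratio $B/V_i(K)$ by $\poly(d,n)$. My plan is to prove such a worst-case-to-average projection inequality for convex $K$ by combining the Alexandrov--Fenchel inequalities (Theorem \ref{thm:af}) with the containment $K \subseteq [0,1]^d$; intuitively, a convex body cannot have a single ``outlier'' projection that exceeds the mean by more than a polynomial factor without forcing a contradiction with the log-concavity of $j \mapsto V_j(K)$ implied by Theorem \ref{thm:af}.

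As a safety net, if the desired polynomial bound fails in a degenerate regime where $V_i(K)$ is extremely small, I would truncate the estimator at threshold $1/\poly(T)$ and output $0$ below that. The regret analyses of Theorems \ref{thm:symsearch} and \ref{thm:pricing} are robust to this additive slack, since once any intrinsic volume has shrunk below $1/\poly(T)$ the remaining rounds contribute only $O(1)$ loss through the $w < 1/T$ branch. High-probability success for all $T$ rounds then follows by a union bound after driving the per-call failure probability down to $1/T^{O(1)}$, which costs only an extra $\log T$ factor in $N$ and preserves the $\poly(d, n, 1/\eps)$ runtime.
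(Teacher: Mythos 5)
Your core construction---Kubota's random-projection formula (Theorem \ref{thm:random_projections}), an LP feasibility test as the membership oracle for $\pi_H(K)$, the Dyer--Frieze--Kannan approximator (Theorem \ref{thm:volumeapprox}) on each sampled subspace, and Hoeffding averaging over polynomially many Haar-random subspaces---is exactly the paper's proof. Where you diverge is in the guarantee you try to extract. The paper stops at the additive guarantee that Hoeffding gives directly (each sample $\Vol_i(\pi_H(K))$ is bounded by the corresponding shadow of $[0,1]^d$, so polynomially many samples control the additive error), and it handles the interface with the algorithms in a separate remark: the regret analyses tolerate small perturbations in the computed intrinsic volumes. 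Your ``safety net'' paragraph (truncation below $1/\poly(T)$, robustness of Theorems \ref{thm:symsearch} and \ref{thm:pricing}, union bound over rounds) is essentially that same resolution, so the additive version of the theorem is all you actually need.

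The step you yourself flag as hardest---a worst-case-to-average shadow inequality $\max_H \Vol_i(\pi_H K) \le \poly(d,n)\cdot \E_{H}[\Vol_i(\pi_H K)]$---is not only left unproven but is false in general, so the multiplicative route should be abandoned rather than patched. Take $K$ to be an $i$-dimensional face of the cube, $[0,1]^i \times \{0\}^{d-i}$: its largest $i$-dimensional shadow has volume $1$, while its Haar-average $i$-dimensional shadow equals $V_i(K)$ divided by the flag coefficient $\binom{d}{i}\kappa_d/(\kappa_i\kappa_{d-i})$ (the normalization that Kubota's formula carries, suppressed in the statement of Theorem \ref{thm:random_projections}), and this coefficient is exponential in $d$ for $i=\Theta(d)$. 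Log-concavity of $j \mapsto V_j(K)$ from Theorem \ref{thm:af} cannot repair this, since the failure already occurs for perfectly legitimate flat convex bodies, and nearly flat knowledge sets genuinely arise over the course of the algorithms. So: keep your sampling scheme and membership oracle, prove the additive statement via Hoeffding, and rely on the truncation/robustness argument you already sketched---at which point your proof coincides with the paper's.
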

\begin{proof}
We use the fact (Theorem \ref{thm:random_projections}) that $V_i(K)$ is the expected volume of the projection of $K$ onto a randomly chosen $i$-dimensional subspace (sampled according to the Haar measure). Since $K$ is contained inside $[0,1]^d$, any $i$-dimensional projection of $K$ will be contained within an $i$-dimensional projection of $[0,1]^d$, whose $i$-dimensional volume is at most $\poly(d)$. By Hoeffding's inequality, we can therefore obtain an $\eps$-approximation to $V_{i}(K)$ by taking the average of $\poly(d, \frac{1}{\eps})$ $(\eps/2)$-approximations for volumes of projections of $K$ onto $i$-dimensional subspaces.

To approximately compute the volume of a projection of $K$ onto an $i$-dimensional subspace $S$, we will apply Theorem \ref{thm:random_projections}. Note that we can check whether a point belongs in the projection of $K$ into $S$ by solving an LP (the point adds $i$ additional linear constraints to the constraints defining $K$). This can be done efficiently in polynomial time, and therefore we have a polynomial-time membership oracle for this subproblem.
\end{proof}

We now briefly argue that Theorem \ref{thm:intrinsicvolumeapprox} allows us to implement efficient randomized variants of SymmetricSearch and PricingSearch which succeed with high probability. To do this, it suffices to note that all of the analysis of both algorithms is robust to tiny perturbations in computations of intrinsic volumes. For example, in SymmetricSearch the analysis carries through even if instead of $K_i$ dividing $S_t$ into two regions such that $V_{i}(S^{+}) = V_{i}(S^{-})$, it divides them into regions satisfying $V_{i}(S^{+}) \in [(1-\eps)V_{i}(S^{-}), (1+\eps)V_{i}(S^{-})]$ for some constant $\eps$. 

The only remaining implementation detail is how to hyperplanes $K_i$ that divide the $i$th intrinsic volume of $S_i$ equally (or in the case of PricingSearch, divide off a fixed amount of intrinsic volume). Since intrinsic volumes are monotone (Theorem \ref{thm:valuation}), this can be accomplished via binary search.

\section{Halving algorithms}\label{sect:halving}

There are many simple algorithms one can try for the contextual search problem,
like algorithms that always halve the width or volume of the current knowledge
set. One natural question is whether these simple algorithms suffice to give the
same sort of regret bounds as our algorithms based on intrinsic volumes (e.g.
SymmetricSearch). 

In this section, we show that while these algorithms also obtain $O_d(1)$
regret for the contextual search problem with symmetric loss,
the dependence on $d$ is exponential rather than polynomial (and moreover this
dependence is tight, at least for the algorithm which always divides the width
in half). Moreover, we show that even for these simpler algorithms, looking at
how intrinsic volumes of the knowledge set change is a valuable technique for
bounding the total regret.

% For the pricing loss, it is simple to see that any algorithm that takes only
% volume or width into account can't achieve $O(\log \log T)$. If an algorithm can
% only decide where to place the hyperplane using the volume of the two sides,
% then it can't distinguish between the case where the contexts are all $e_1$ and
% the case where the first $T/2$ contexts are $e_1$ and the remaining are $e_2$.
% \renato{(It would be nice to have a similar argument for dividing the width.
% Also, do you think the explanation above is clear, now that I wrote I am no
% longer sure that I like this paragraph ? But we should say something to show
% that for the pricing loss it is not so simple.)}

\subsection{Dividing the width in half}

In this section, we will analyze the algorithm which always cuts the width in
half; that is, always guesses $p_t = p_{t}^{\mid} = \frac{1}{2}(\overline{p}_t +
\underline{p}_t)$. We will show that for the symmetric loss function, this
strategy achieves $2^{O(d)} = O_d(1)$ regret.

Our analysis will proceed similarly to the proof of Theorem \ref{thm:symsearch}.
We will rely on the following lemma, which shows that dividing the width in half
guarantees that the ratio of the intrinsic volume of the smaller half to that of
the larger half is still lower-bounded by some function of $d$.

\begin{lemma}\label{lem:splitratio}
If $p_{t} = p_{t}^{\mid}$, then

  $$V_{j}(S^{-}) \geq 2^{-j}V_{j}(S^{+}) \quad \text{and} \quad
  V_{j}(S^{+}) \geq 2^{-j}V_{j}(S^{-})$$
\end{lemma}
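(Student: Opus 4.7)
The plan is to show the homothet containment $S_t \subseteq 2S^- - y^-$ for a carefully chosen $y^- \in S^-$, and then apply $j$-homogeneity, translation-invariance, and monotonicity of $V_j$ to conclude $V_j(S^+) \leq V_j(S_t) \leq 2^j V_j(S^-)$. The second inequality will follow from the same argument with $S^+$ and $S^-$ swapped.

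For the first step, let $y^- \in S_t$ be any point achieving $\dot{u_t}{y^-} = \underline{p}_t$; such a point exists because $S_t$ is compact, and it lies in $S^-$ because $p_t = p_t^\mid \geq \underline{p}_t$. Since $p_t^\mid = \frac{1}{2}(\overline{p}_t + \underline{p}_t)$ and $w = \frac{1}{2}(\overline{p}_t - \underline{p}_t)$, we have $\underline{p}_t = p_t^\mid - w$. I claim $S_t \subseteq 2S^- - y^-$: for every $x \in S_t$, the midpoint $z = \frac{1}{2}(x + y^-)$ lies in $S^-$. Indeed, $z \in S_t$ by convexity of $S_t$, and
\[
\dot{u_t}{z} = \tfrac{1}{2}\dot{u_t}{x} + \tfrac{1}{2}(p_t^\mid - w) \leq \tfrac{1}{2}(p_t^\mid + w) + \tfrac{1}{2}(p_t^\mid - w) = p_t^\mid,
\]
so $z \in S^-$ as claimed; rewriting $z \in S^-$ as $x \in 2S^- - y^-$ gives the containment.

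Given the containment, translation-invariance together with $j$-homogeneity of $V_j$ yield $V_j(2S^- - y^-) = 2^j V_j(S^-)$, and combining with monotonicity (Theorem \ref{thm:valuation}) applied to $S^+ \subseteq S_t \subseteq 2S^- - y^-$ gives $V_j(S^+) \leq 2^j V_j(S^-)$, which rearranges to $V_j(S^-) \geq 2^{-j} V_j(S^+)$. The opposite inequality is obtained symmetrically: take $y^+ \in S_t$ with $\dot{u_t}{y^+} = \overline{p}_t$ and observe that $\frac{1}{2}(x + y^+) \in S^+$ for every $x \in S_t$ by the analogous computation, so $S_t \subseteq 2S^+ - y^+$ and hence $V_j(S^-) \leq V_j(S_t) \leq 2^j V_j(S^+)$. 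There is no real obstacle: the key observation is simply that cutting at the midpoint of the width places the extreme point on the opposite side at distance exactly $2w$ in the $u_t$-direction from any point on the far side, which is precisely what forces the $\frac{1}{2}$-scaling to land everything in the correct half.
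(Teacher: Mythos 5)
Your proof is correct, and it rests on the same geometric idea as the paper's: dilating by a factor of $2$ about a point of $S^-$ that achieves $\underline{p}_t$, which works precisely because the cut at $p_t^\mid$ makes the two halves have equal width $w$. The only difference is in how the dilation is packaged. The paper introduces the cone $C = \Conv(q \cup K)$, shows $C \subseteq S^-$ and $S^+ \subseteq 2C$, and sandwiches $V_j(S^+) \leq V_j(2C) = 2^j V_j(C) \leq 2^j V_j(S^-)$; you instead prove the single containment $S_t \subseteq 2S^- - y^-$ and conclude directly via monotonicity, translation-invariance, and homogeneity. Your version is a genuine streamlining for this lemma, since the cone is an unnecessary intermediary here (it matters in the volume-halving analysis of Lemma \ref{lem:splitvol1}, where the disjointness of $C$ and $S^+$ inside $\alpha C$ is actually used, but not for the width-halving bound). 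Both arguments invoke exactly the same properties of intrinsic volumes, so nothing is lost or gained in generality.
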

\begin{proof}
  Let $w = \frac{1}{2}( \overline{p}_t - \underline{p}_t )$ and $K$ be the
  intersection of the hyperplan $\langle x, u_t \rangle = p_t$ with $S_t$.
  Choose a point $q$ in $S_t$ such that  $\langle q, u_t \rangle =
  \underline{p}_t $ as depicted in Figure \ref{fig:grun_i}.
  Consider the cone $C$ formed by the convex hull of $q$ and
  $K$ (Figure \ref{fig:grun_ii}). Since $S^{-}$ is convex, $C$ is contained in $S^{-}$, and thus
  $V_{j}(S^{-}) \geq V_j(C)$.

  Now, consider the dilation of the cone $C$ by a factor of $2$ about the point
  $q$ (Figure \ref{fig:grun_iii}). This results in a new cone $2C$. We claim that this cone contains
  $S^{+}$. To see this, it suffices to note that the contraction of $S^{+}$ by a
  factor of $1/2$ about $q$ lies within $S^{-}$. This follows from the fact that
  $S$ is convex, and the width of $S^{-}$ is equal to the width of $S^{+}$ (so
  any segment connecting $q$ to some point $q' \in S^{+}$ has at least as much
  length in $S^{-}$ than $S^{+}$).

  It follows that $2^{j}V_{j}(C) = V_{j}(2C) \geq V_{j}(S^{+})$. Combining this
  with our earlier inequality, the first result follows. The second result follows
  symmetrically.  \end{proof}

\begin{figure}
\centering
\begin{subfigure}[b]{0.30\textwidth}
\begin{tikzpicture}[scale=.9, xscale=1.4]
  \fill[blue!0!white] (-.1,-.4) rectangle (3.4,2.6);
  %\draw[help lines] (0,0) grid (3,2);
  \draw[line width=1.5pt] (0,1) .. controls (0,1.4) and (.6, 2) .. (1,2)
              .. controls (1.4,2) and (3,1.4) .. (3,1)
              .. controls (3,0.6) and (1.4,0) .. (1,0)
              .. controls (.6,0) and (0,.6) .. (0,1);
 \node (X1) at (1.5,2.05) {};
 \node (X2) at (1.5,-.05) {};
  \node [shape=circle, fill=black,inner sep=1.5pt,label=left:$q$] (Q1) at
 (0,1) {};
  \draw[line width=1.2pt, color=blue] (1.5,1.9)--(1.5,.1);
 \node at (1.7,1) {$K$};
 \node at (.9,1) {$S^-$};
 \node at (2.3,1) {$S^+$};
 \begin{scope}[line width=1.0pt]
 \begin{scope}[>=latex]
 \draw[<->] (0,2.5)--(1.5,2.5);
 \draw[<->] (3,2.5)--(1.5,2.5);
 \end{scope}
 \draw (0,2.4)--(0,2.6);
 \draw (1.5,2.4)--(1.5,2.6);
 \draw (3, 2.4)--(3,2.6);
 \end{scope}
 \node at (1.5/2, 2.7) {$w$};
 \node at (1.5 + 1.5/2, 2.7) {$w$};
\end{tikzpicture}
  \caption{midpoint cut}
\label{fig:grun_i}
\end{subfigure}
  \begin{subfigure}[b]{0.30\textwidth}
\begin{tikzpicture}[scale=.9, xscale=1.4]
  \fill[blue!0!white] (-.1,-.4) rectangle (3.4,2.6);
  %\draw[help lines] (0,0) grid (3,2);
  \draw[line width=1.5pt] (0,1) .. controls (0,1.4) and (.6, 2) .. (1,2)
              .. controls (1.4,2) and (3,1.4) .. (3,1)
              .. controls (3,0.6) and (1.4,0) .. (1,0)
              .. controls (.6,0) and (0,.6) .. (0,1);
 \node (X1) at (1.5,2.05) {};
 \node (X2) at (1.5,-.05) {};
  \node [shape=circle, fill=black,inner sep=1.5pt,label=left:$q$] (Q1) at
 (0,1) {};
  \draw[dashed, fill=blue!20!white] (0,1) -- (1.5,.1) -- (1.5,1.9) -- cycle;
  \draw[line width=1.2pt, color=blue] (1.5,1.9)--(1.5,.1);
 \node at (1.7,1) {$K$};
 \node at (1,1) {$C$};
\end{tikzpicture}
\caption{$C \subseteq S^-$}
\label{fig:grun_ii}
\end{subfigure}
  \begin{subfigure}[b]{0.30\textwidth}
\begin{tikzpicture}[scale=.9, xscale=1.4]
  \fill[blue!0!white] (-.1,-.4) rectangle (3.4,2.6);
  \draw[dashed, fill=blue!20!white] (0,1) -- (3,1-2*.9) -- (3,1+2*.9) -- cycle;
  %\draw[help lines] (0,0) grid (3,2);
  \draw[line width=1.5pt] (0,1) .. controls (0,1.4) and (.6, 2) .. (1,2)
              .. controls (1.4,2) and (3,1.4) .. (3,1)
              .. controls (3,0.6) and (1.4,0) .. (1,0)
              .. controls (.6,0) and (0,.6) .. (0,1);
 \node (X1) at (1.5,2.05) {};
 \node (X2) at (1.5,-.05) {};
  \node [shape=circle, fill=black,inner sep=1.5pt,label=left:$q$] (Q1) at
 (0,1) {};
  \draw[line width=1.2pt, color=blue] (1.5,1.9)--(1.5,.1);
 \node at (2.5,0) {$2C$};
\end{tikzpicture}
\caption{$S^+ \subseteq 2C$}
\label{fig:grun_iii}
\end{subfigure}
\caption{}
\label{fig:grunbaum} 
\end{figure}

\begin{theorem}\label{thm:width_half}
  The algorithm that always sets $p_t = p_t^\mid$ has
  regret bounded by $2^{O(d)}$ for the symmetric loss.
\end{theorem}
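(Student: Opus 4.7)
The plan is to adapt the analysis of Theorem \ref{thm:symsearch} almost verbatim, using a potential function of the same form $\Phi_t = \sum_{j=1}^{d} \gamma_j V_j(S_t)^{1/j}$ but with larger weights $\gamma_j = 2^{O(d)}$. Since $V_j([0,1]^d) = \binom{d}{j}$, the initial potential is $\Phi_1 = \sum_{j=1}^{d} \gamma_j \binom{d}{j}^{1/j} = 2^{O(d)}$, so a per-round drop of $\Omega(\ell_t)$ will give the desired $2^{O(d)}$ regret bound. The main technical task is to reproduce Lemmas \ref{lemma:symm1} and \ref{lemma:symm2} when the cut plane is forced to be the midplane rather than chosen to bisect some $V_i$.

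First I would introduce, purely for analysis, the single midplane cross-section $K^{\mid} = S_t \cap \{x : \langle u_t, x \rangle = p_t^{\mid}\}$ and analysis constants $c_0 = 1$ with the faster-decaying ratio $c_j/c_{j-1} = 2^{-(j+2)}/j$. Defining $L_j = (V_j(K^{\mid})/c_j)^{1/j}$ with $L_0 := \infty$, we have $L_d = 0$ since $K^{\mid}$ lies in a $(d-1)$-dimensional hyperplane, so some $j \in \{1, \dots, d\}$ satisfies $L_{j-1} \geq w \geq L_j$ (exactly as in the well-definedness proof of Algorithm \ref{algo:symmetricdd}). The analog of Lemma \ref{lemma:symm1} then carries over unchanged: the Cone Lemma applied to the cone inside $S_t$ with base $K^{\mid}$ and apex at any point achieving $\underline p_t$ (which has height $w$ above $K^{\mid}$) gives $V_j(S_t) \geq \frac{1}{j}w V_{j-1}(K^{\mid}) \geq \frac{c_{j-1}}{j}w^j$.

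The step that genuinely diverges from SymmetricSearch, and the main obstacle, is producing the analog of Lemma \ref{lemma:symm2}: a constant-factor contraction of $V_j(S_t)$. Since the midpoint cut does not bisect $V_j$, I would appeal instead to Lemma \ref{lem:splitratio}. Combining its ratio bound $V_j(S^{-}) \geq 2^{-j} V_j(S^{+})$ (assuming WLOG $S^+$ is the larger piece and hence the worst-case adversary choice for $S_{t+1}$) with the valuation identity $V_j(S^+) + V_j(S^-) = V_j(S_t) + V_j(K^{\mid})$ yields
\[
V_j(S_{t+1}) \;\leq\; \frac{V_j(S_t) + V_j(K^{\mid})}{1 + 2^{-j}}.
\]
Now $w \geq L_j$ gives $V_j(K^{\mid}) \leq c_j w^j \leq (j c_j/c_{j-1}) V_j(S_t) = 2^{-(j+2)} V_j(S_t)$, so substituting yields $V_j(S_{t+1}) \leq \frac{1 + 2^{-(j+2)}}{1 + 2^{-j}} V_j(S_t) \leq (1 - \Omega(2^{-j})) V_j(S_t)$. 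The weakness of Lemma \ref{lem:splitratio} compared to exact bisection is exactly what forces the ratio $c_j/c_{j-1}$ to shrink exponentially in $j$, which in turn makes the Cone Lemma lower bound $(c_{j-1}/j)^{1/j}$ on $V_j(S_t)^{1/j}/w$ shrink by a factor $2^{-O(d)}$; this is the source of the exponential-in-$d$ dependence.

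Finally, Bernoulli's inequality in the form $(1-x)^{1/j} \leq 1 - x/j$ converts the multiplicative bound into $V_j(S_t)^{1/j} - V_j(S_{t+1})^{1/j} \geq \Omega(2^{-j}/j)\, V_j(S_t)^{1/j} \geq 2^{-O(d)} w$, where the last step uses the cone lower bound on $V_j(S_t)^{1/j}$. Choosing the uniform weight $\gamma_j = 2^{O(d)}$ large enough to absorb this factor gives $\Phi_t - \Phi_{t+1} \geq w \geq \ell_t$, and summing telescopically yields $\Reg \leq \Phi_1 = 2^{O(d)}$.
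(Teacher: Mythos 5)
Your proof is correct and follows essentially the same route as the paper: apply Lemma~\ref{lem:splitratio} to get a $2^{-j}$-bounded split ratio, combine with the valuation identity and the Cone Lemma lower bound $V_j(S_t)\geq \frac{c_{j-1}}{j}w^j$ to obtain a constant-factor contraction of $V_j$, and telescope a potential $\sum_j V_j(S_t)^{1/j}$. The only differences are cosmetic: the paper uses the ratio $c_i/c_{i-1}=2^{-(i+1)}/i$ rather than your $2^{-(j+2)}/j$, and it keeps the potential unweighted (so the per-round drop is $2^{-O(d)}w$ and the $2^{O(d)}$ factor appears at the end) rather than folding $\gamma_j=2^{O(d)}$ into the potential; both accountings give the same bound.
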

\begin{proof}
We will proceed similarly to the analysis of the SymmetricSearch algorithm.
  For any fixed round $t$, let $w = \frac{1}{2}\wid(S_t;u_t)$ and $K$ be
  the intersection of the hyperplane $\langle x, u_t\rangle = p_t^\mid$ with $S_t$. 

Define a sequence of constants $c_i$ so that $c_0 = 1$ and $c_i/c_{i-1} =
  2^{-(i+1)}/i$ (in other words, $c_i = 2^{-(i+1)(i+2)/2}/i!$). For $1 \leq i
  \leq d$, define $L_i = (V_{i}(K)/c_i)^{1/i}$, and let $L_0 = \infty$. Choose
  $j$ so that $L_{j-1} \geq w \geq L_j$ (it is always possible to do this by the
  same logic as in Theorem \ref{thm:symsearch}). 
Note that the proof of Lemma \ref{lemma:symm1} carries over verbatim to show
  that $V_{j}(S_t) \geq \frac{1}{j}c_{j-1}w^j$.  We now proceed in two steps:\\

  \emph{Step 1} First we show that $V_{j}(S_{t+1}) \leq (1 - 2^{-(j+2)})V_{j}(S_t)$. 

  The set $S_{t+1}$ is either $S^{+}$ or $S^{-}$. By Lemma \ref{lem:splitratio},
  we therefore have that $$V_{j}(S_{t+1}) \leq \frac{1}{1+2^{-j}}(V_{j}(S^{+}) +
  V_{j}(S^{-})).$$

  Now, since $V_j$ is a valuation, $V_{j}(S^{+}) + V_{j}(S^{-}) = V_{j}(S_t) +
  V_{j}(K)$. Now, $V_{j}(K) = c_j L_j^j \leq c_j w^j$ by the choice of $j$.
   Combining this with
  the fact $V_{j}(S_t) \geq \frac{1}{j}c_{j-1}w^j$, we observe that $V_{j}(K)
  \leq j\frac{c_j}{c_{j-1}}V_{j}(S_t)$, and therefore that

$$V_{j}(S^+) + V_j(S^{-}) \leq \left(1 + j\frac{c_{j}}{c_{j-1}}\right)V_j(S_t) =
  (1 + 2^{-(j+1)})V_{j}(S_t).$$

It follows that

$$V_{j}(S_{t+1}) \leq \frac{1}{1+2^{-j}} \cdot (1 + 2^{-(j+1)}) V_{j}(S_t) \leq
  (1 - 2^{-(j+2)})V_{j}(S_t).$$\\

  \emph{Step 2:} Next we consider the potential function $\Phi(t) =
  \sum_{i=1}^{d} V_{i}(S_t)^{1/i}$. Note that $\Phi(0) = \mathrm{poly}(d)$. We
  will show that each round, $\Phi(t)$ decreases by at least $2^{-O(d)}w$. Since
  the loss each round is upper bounded by $w$, this proves our theorem.
\begin{eqnarray*}
V_{j}(S_{t})^{1/j} - V_{j}(S_{t+1})^{1/j} &\geq & (1 - (1-2^{-(j+2)})^{1/j})V_{j}(S_t)^{1/j} 
\geq  \frac{1}{j}2^{-(j+2)}V_{j}(S_t)^{1/j} \\
&\geq & \frac{1}{j}2^{-(j+2)}\left(\frac{c_{j-1}}{j}\right)^{1/j}w 
\geq  2^{-O(d)}w.
\end{eqnarray*}
\end{proof}

The exponential dependency on the dimension is tight, as it is shown in an
example by Cohen et al \cite{CohenLL16}. 

\begin{theorem}
There is an instance of the contextual search problem with symmetric loss such that the algorithm that always sets $p_t = p_{t}^{mid}$ incurs regret $2^{\Omega(d)}$.
\end{theorem}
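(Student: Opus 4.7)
I would prove the lower bound by exhibiting an explicit $d$-dimensional adversarial instance, following the construction of Cohen, Lobel, and Paes~Leme~\cite{CohenLL16}. The design principle is to approximately saturate the inequality of Lemma~\ref{lem:splitratio} at every round: this inequality is precisely the step where the $2^{O(d)}$ factor enters the upper bound of Theorem~\ref{thm:width_half}, and its tight case occurs when the knowledge set is (approximately) a cone with $v$ lying on the larger side of the midpoint cut, so that the cut barely shrinks the relevant intrinsic volume while still forcing $\Omega(1)$ loss.

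Concretely, the adversary fixes a point $v \in [0,1]^d$ and adaptively selects contexts $u_t$ so that, for all $t$ up to some $T = 2^{\Omega(d)}$: (i) $\wid(S_t; u_t) = \Omega(1)$ and $v$ sits near the extreme of $S_t$ in direction $u_t$, forcing per-round midpoint-loss $\Omega(1)$; and (ii) the progress potential $\Phi(t) = \sum_i V_i(S_t)^{1/i}$ used in the upper bound of Theorem~\ref{thm:width_half} decreases by at most a factor of $1 - 2^{-\Omega(d)}$ per round. Since $\Phi(0) = \mathrm{poly}(d)$, property (ii) lets the adversary sustain these rounds for $T = 2^{\Omega(d)}$ steps, and summing the $\Omega(1)$ per-round losses from property (i) yields total regret $2^{\Omega(d)}$, matching the upper bound (up to the polynomial factor inside the exponent).

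The main obstacle will be producing the sequence $(u_t)$ that satisfies both properties simultaneously. After a midpoint cut the residual set $S_{t+1}$ is no longer literally a cone but a frustum, so one must either exhibit a new direction $u_{t+1}$ along which the frustum again looks cone-like, or work with a relaxed ``near-conic'' invariant that is preserved by midpoint cuts. Verifying either version requires the cross-section and projection identities of Section~\ref{sec:cone_lemma} (in particular Lemmas~\ref{lem:stretch} and~\ref{lem:oblique}) to track how the relevant intrinsic volumes of the base of the inscribed cone transform under the cut; this bookkeeping is the main technical content of the Cohen--Lobel--Paes~Leme construction that I would reproduce to complete the proof.
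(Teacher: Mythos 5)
Your proposal correctly identifies what a lower-bound instance must accomplish---sustain $\Omega(1)$ per-round loss for $2^{\Omega(d)}$ rounds---but it never supplies the instance, and the instance is essentially the entire proof. The paper's argument (imported from \cite{CohenLL16}) is a short probabilistic construction involving none of the machinery you anticipate: take $d$ a multiple of $8$, set $v = 0$, draw i.i.d.\ random subsets $X_t \subseteq [d]$ of size $d/4$, and use contexts $u_t = {\bf 1}\{X_t\}/\sqrt{d/4}$. A standard concentration bound gives that, with high probability, all pairwise intersections among the first $2^{\Omega(d)}$ subsets satisfy $\vert X_s \cap X_t\vert \leq d/8$. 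Since $v=0$, $u_t \geq 0$, and $S_t \subseteq [0,1]^d$, one has $\underline{p}_t = 0$ and the feedback always retains the lower half $S_t^-$; the small-intersection property yields $\langle u_s, {\bf 1}\{X_t\}\rangle \leq \frac{1}{2}\langle u_s, {\bf 1}\{X_s\}\rangle \leq p_s$ for every $s<t$, so the witness point ${\bf 1}\{X_t\}$ survives all earlier cuts and lies in $S_t$. This forces $\overline{p}_t \geq \sqrt{d/4}$, hence $p_t = \frac{1}{2}\overline{p}_t = \Omega(1)$ while $\langle u_t, v\rangle = 0$, giving per-round loss $\Omega(1)$ over $2^{\Omega(d)}$ rounds.

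Two aspects of your plan would not go through as written. First, your step (ii)---that the potential $\Phi$ shrinks by at most a $(1-2^{-\Omega(d)})$ factor per round---does not by itself ``let the adversary sustain these rounds'': a slowly decaying potential is perfectly consistent with the adversary running out of directions of width $\Omega(1)$ after a constant number of steps, so you must still exhibit the sequence $(u_t)$, which is exactly the obstacle you defer. Second, your guess at the technical content of the \cite{CohenLL16} construction is off: it does not maintain a near-conic invariant, does not saturate Lemma~\ref{lem:splitratio}, and makes no use of the cross-section and projection lemmas of Section~\ref{sec:cone_lemma}. It is a purely combinatorial statement about inner products of sparse $0/1$ vectors: nearly orthogonal contexts guarantee that a point far from $v$ in the current direction always survives the previous midpoint cuts. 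Without that explicit construction (or an equivalent one), the proof is incomplete.
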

\begin{proof}
See \cite{CohenLL16}. The instance they describe is as follows: let the dimension $d$ be a multiple of $8$ and $v = 0 \in [0,1]^d$.
Let $X_t$ be iid random subsets of $[d]$ of size $d/4$. Now, consider feature
vectors of the form $u_t = {\bf 1}\{X_t\} / \sqrt{d/4}$ where ${\bf 1}\{X_t\}$
is the indicator vector of $X_t$. By standard concentration bounds we have that
with high probability for any $s < t < 2^{\Omega(d)}$ we will have
$\vert X_s \cap X_t \vert \leq d/8$. Therefore for such $s < t$, 
$\dot{{\bf 1}\{X_t\}}{u_s}\leq \frac{1}{2} \dot{{\bf 1}\{X_s\}}{u_s} $ and hence
${\bf 1}\{X_t\} \in S_t$ where $S_t$ is the knowledge set in step $t$. It
implies that the loss in the $t$-th step is at least $\Omega(1)$. Since there are
$2^{\Omega(d)}$ such steps, the loss grows exponentially in $d$. 
\end{proof}

%\begin{theorem}
%There exists an instance of the contextual search problem with symmetric loss where the algorithm that always splits the width in half incurs $2^{\Omega(d)}$ regret.
%\end{theorem}
%\begin{proof}
%See \jonnote{CITE}.
%\end{proof}
%
\subsection{Dividing the volume in half}

In this section we will consider the algorithm which always divides the volume of our current knowledge set in half. More specifically, this algorithm always chooses $p_t$ so that $\Vol(S^{+}) = \Vol(S^{-})$. We will show that for the symmetric loss function, this strategy also achieves $2^{O(d)} = O_{d}(1)$ regret.

Like in the previous subsection, we will argue that splitting the volume in half
guarantees that the other intrinsic volumes are split in some ratio bounded away
from 0 and 1. To do this, we will first show that splitting the volume in half
imposes constraints on the ratio of the widths of $S^{+}$ and $S^{-}$, and then
adapt the proof of Lemma \ref{lem:splitratio}. The proof follows the same scheme
depicted in Figure \ref{fig:grunbaum} but with unequal widths for $S^+$ and $S^-$.

\begin{lemma}\label{lem:splitvol1}
Assume $p_t$ is chosen so that $\Vol(S^{+}) = \Vol(S^{-})$. Then if $w^{+} = \wid(S^{+}; u_t)$ and $w^{-} = \wid(S^{-}; u_t)$, 

  $$w^{-} \geq (2^{1/d} - 1)w^{+} \quad \text{and} \quad
w^{+} \geq (2^{1/d} - 1)w^{-}.$$
\end{lemma}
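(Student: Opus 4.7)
The plan is to adapt the Grünbaum-style cone/dilation argument from Lemma \ref{lem:splitratio} (as in Figure \ref{fig:grunbaum}), but now with asymmetric widths on the two sides of the cut. Let me sketch the two ingredients and then explain how they combine.

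First I would set up the geometry. Let $H = \{x : \langle u_t, x\rangle = p_t\}$ be the cutting hyperplane, $K = S_t \cap H$, and — in contrast to the midpoint case — pick $q \in S^{+}$ with $\langle u_t, q\rangle = \overline{p}_t$ (so $q$ sits at distance exactly $w^{+}$ from $H$). By convexity of $S_t$, the cone $C$ with apex $q$ and base $K$ lies inside $S^{+}$, giving
\[
\Vol(S^{+}) \;\geq\; \Vol(C) \;=\; \frac{1}{d}\,w^{+}\,\Vol_{d-1}(K).
\]

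Next I would bound $\Vol(S^{-})$ from above by the volume of the opposing frustum. For any $r \in S^{-}$, the line segment from $q$ to $r$ crosses $H$ (since $q$ and $r$ are on opposite sides), and by convexity of $S_t$ the crossing point lies in $K$. Parametrizing by signed distance to $H$, a point of $S^{-}$ at depth $t \in [0, w^{-}]$ lies on the ray from $q$ through some $k \in K$ at dilation factor $(w^{+}+t)/w^{+}$. Thus $S^{-}$ is contained in the frustum obtained by extending $C$ through $H$ by height $w^{-}$, whose base (on the $S^{-}$ side) is a $\bigl((w^{+}+w^{-})/w^{+}\bigr)$-dilation of $K$. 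Using the cone volume formula twice,
\[
\Vol(S^{-}) \;\leq\; \frac{1}{d}\left(\frac{(w^{+}+w^{-})^{d}}{(w^{+})^{d-1}} - w^{+}\right)\Vol_{d-1}(K).
\]

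Finally I would combine the two bounds with $\Vol(S^{+}) = \Vol(S^{-})$, which cancels the factor $\tfrac{1}{d}\Vol_{d-1}(K)$ and yields $2(w^{+})^{d} \leq (w^{+}+w^{-})^{d}$, i.e.\ $w^{-} \geq (2^{1/d}-1)w^{+}$. The other inequality follows by interchanging the roles of $S^{+}$ and $S^{-}$ (picking $q \in S^{-}$ at $\underline{p}_t$ instead).

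The only delicate step is the frustum containment for $S^{-}$: one must check that the dilation parameter $(w^{+}+t)/w^{+}$ is correct and that the full frustum, not just the original cone, is the right object to use when $w^{-}$ may be much smaller or larger than $w^{+}$. Everything else is just the cone volume formula $\Vol = \tfrac{1}{d} h \cdot \Vol_{d-1}(\text{base})$ and elementary algebra on the resulting inequality.
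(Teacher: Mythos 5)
Your proof is correct and is essentially the paper's argument: both place an apex at an extreme point of $S_t$, bound the volume on that side below by the cone over $K$ and the volume on the other side above by the dilated cone minus the cone, and conclude $\bigl((w^{+}+w^{-})/w^{\pm}\bigr)^{d}\geq 2$ from $\Vol(S^{+})=\Vol(S^{-})$. The only (immaterial) differences are that you put the apex in $S^{+}$ rather than $S^{-}$ and carry the explicit factor $\tfrac{1}{d}\Vol_{d-1}(K)$ through the algebra instead of just using $\Vol(\alpha C)=\alpha^{d}\Vol(C)$.
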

\begin{proof}
Choose a point $q$ in $S^{-}$ that is distance $w^{-}$ from the hyperplane $\langle x, u_t \rangle = p_t$. Let $K$ be the intersection of this hyperplane with the original set $S_t$. Consider the cone $C$ formed by the convex hull of $q$ and $K$. Since $S^{-}$ is convex, $C$ is contained in $S^{-}$, and therefore $\Vol(S^{-}) \geq \Vol(C)$. 

Now, consider the dilation of the cone $C$ by a factor of $\alpha = (w^{+} +
  w^{-})/w^{-}$ about the point $q$. By similar logic as in the proof of Lemma
  \ref{lem:splitratio}, this cone $\alpha C$ contains $S^{+}$. In fact, since
  $C$ is contained in $\alpha C$ (since $\alpha > 1$) and since $C$ is contained
  in $S^{-}$ (which has zero volume intersection with  $S^{+}$), $S^{+}$ is contained in $\alpha C \setminus C$. We therefore have that

$$\Vol(\alpha C) - \Vol(C) \geq \Vol(S^{+}) = \Vol(S^{-}) \geq \Vol(C).$$

Since $\Vol(\alpha C) = \alpha^d \Vol(C)$, this implies that $\alpha^{d} \geq
2$, and therefore that $(w^{+} + w^{-})/w^{-} = \alpha \geq 2^{1/d}$ and $w^{+}/w^{-} \geq 2^{1/d} - 1$,
as desired. The other inequality follows by symmetry.
\end{proof}

\begin{lemma}\label{lem:splitvol2}
Let $w^{+} = \wid(S^{+}; u_t)$ and $w^{-} = \wid(S^{-}; u_t)$. Assume $p_t$ is chosen so that $w^{+} \geq \alpha w^{-}$ and $w^{-} \geq \alpha w^{+}$, for some $\alpha > 0$. Then, for all $1 \leq j \leq d$,

$$V_{j}(S^{-}) \geq \left(1 + \frac{1}{\alpha}\right)^{-j}V_{j}(S^{+})
  \quad \text{and}
\quad V_{j}(S^{+}) \geq \left(1 + \frac{1}{\alpha}\right)^{-j}V_{j}(S^{-}).$$
\end{lemma}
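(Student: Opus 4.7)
The plan is to adapt the proof of Lemma~\ref{lem:splitratio} (the midpoint case, which corresponds to $\alpha=1$) to this slightly more general setting where the two halves need not have equal width. The picture is exactly the one in Figure~\ref{fig:grunbaum}: inscribe a cone in $S^{-}$ with apex at a point of maximal depth, then dilate that cone about its apex by just enough to engulf $S^{+}$.

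Concretely, I would let $H=\{x:\dot{u_t}{x}=p_t\}$, $K=H\cap S_t$, and choose a point $q\in S^{-}$ with $\dot{u_t}{q}=p_t-w^{-}$ (a witness of the width of $S^{-}$ in direction $u_t$). Let $C$ be the cone with apex $q$ and base $K$. By convexity $C\subseteq S^{-}$, and by monotonicity $V_j(C)\le V_j(S^{-})$. Now set $\beta = 1+w^{+}/w^{-}$ and consider the dilation $\beta C := q+\beta(C-q)$. I claim $S^{+}\subseteq \beta C$: given any $x\in S^{+}$, the segment $\overline{qx}$ lies in $S_t$ and hence crosses $K$ at some $y$, so $x=q+\tau(y-q)$ with $\tau\ge 1$. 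Taking inner products with $u_t$ gives $\dot{u_t}{x-q}=\tau w^{-}$, so $\tau-1 = (\dot{u_t}{x}-p_t)/w^{-}\le w^{+}/w^{-}=\beta-1$, i.e.\ $\tau\le\beta$, which places $x$ in $\beta C$.

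Combining the inclusion $S^{+}\subseteq\beta C$ with monotonicity and $j$-homogeneity (applied after translating $q$ to the origin, which is legal by rigid-motion invariance of $V_j$) yields
\[
V_j(S^{+})\le V_j(\beta C)=\beta^{j}V_j(C)\le \beta^{j}V_j(S^{-}).
\]
The hypothesis $w^{-}\ge\alpha w^{+}$ gives $\beta\le 1+1/\alpha$, so $V_j(S^{-})\ge(1+1/\alpha)^{-j}V_j(S^{+})$. The symmetric inequality follows by swapping the roles of $S^{+}$ and $S^{-}$ and using the other hypothesis $w^{+}\ge\alpha w^{-}$.

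The only step that requires any care is the geometric claim $S^{+}\subseteq\beta C$, which is why I would write out the parametrization $x=q+\tau(y-q)$ explicitly; everything else (monotonicity, homogeneity, translation invariance of $V_j$) is standard from Section~\ref{sect:intrinsic}. There is no real obstacle, since this is a direct generalization of the $\alpha=1$ calculation already carried out in Lemma~\ref{lem:splitratio}, with the factor $2=1+1$ simply replaced by $1+1/\alpha$.
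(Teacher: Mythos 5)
Your proposal is correct and follows essentially the same route as the paper: inscribe the cone $C$ with apex $q$ and base $K$ inside $S^{-}$, dilate about $q$ by the width ratio $1+w^{+}/w^{-}\le 1+1/\alpha$ to cover $S^{+}$, and conclude via monotonicity, homogeneity, and translation invariance of $V_j$. Your explicit parametrization $x=q+\tau(y-q)$ is a slightly more careful writeup of the containment step that the paper only sketches by reference to Lemma~\ref{lem:splitratio}.
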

\begin{proof}
We follow the argument in the proof of Lemma \ref{lem:splitratio}. The only
difference is that we now must consider the dilation of the cone $C$ by a factor
of $1 + \frac{1}{\alpha}$ about $q$, as the ratio of the width of $S_t$ to the
width of $S^{-}$ (or $S^{+}$) is at most $1 + \frac{1}{\alpha}$.  \end{proof}

\begin{corollary}\label{cor:splitvol}
If $p_t$ is chosen so that $\Vol(S^{+}) = \Vol(S^{-})$, then for all $1 \leq j \leq d$,

$$V_{j}(S^{-}) \geq \left(1 + \frac{1}{\alpha}\right)^{-j}V_{j}(S^{+})
  \quad \text{and} \quad
  V_{j}(S^{+}) \geq \left(1 + \frac{1}{\alpha}\right)^{-j}V_{j}(S^{-}),$$

\noindent
where $\alpha = 2^{1/d} - 1 = \Theta(d^{-1})$.
\end{corollary}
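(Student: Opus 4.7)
The plan is to obtain this corollary as an immediate consequence of Lemmas \ref{lem:splitvol1} and \ref{lem:splitvol2}, which have already done the geometric heavy lifting. Lemma \ref{lem:splitvol1} established that cutting a convex body into two pieces of equal volume forces the widths of the two halves (in the cut direction) to satisfy $w^{-} \geq (2^{1/d}-1) w^{+}$ and symmetrically. Lemma \ref{lem:splitvol2} then shows that any such two-sided width ratio controls the ratio of every intrinsic volume.

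Concretely, I would begin the proof by setting $\alpha = 2^{1/d}-1$ and invoking Lemma \ref{lem:splitvol1} to conclude $w^{+} \geq \alpha w^{-}$ and $w^{-} \geq \alpha w^{+}$. Feeding these two width inequalities directly into Lemma \ref{lem:splitvol2} with the same $\alpha$ yields the two intrinsic-volume bounds claimed in the corollary.

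For the asymptotic estimate $\alpha = \Theta(d^{-1})$, I would use the elementary bound $e^{x} - 1 \geq x$ (for $x \geq 0$) applied to $x = (\ln 2)/d$ to get $\alpha \geq (\ln 2)/d$, together with $e^{x} - 1 \leq x e^{x}$ (or simply $2^{1/d} - 1 \leq 2^{1/d}(\ln 2)/d \leq 2(\ln 2)/d$ for $d \geq 1$) for the matching upper bound. This gives $\alpha = \Theta(1/d)$ and hence $1 + 1/\alpha = \Theta(d)$, which is the form in which the corollary is used downstream.

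There is no real obstacle here: both lemmas have been proved, and the corollary is essentially a packaging step. The only thing to be careful about is matching the role of $\alpha$ in the two lemmas (the width ratio from Lemma \ref{lem:splitvol1} becomes the hypothesis of Lemma \ref{lem:splitvol2}), which is immediate once the substitution is made.
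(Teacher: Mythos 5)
Your proof is correct and follows exactly the paper's route: the corollary is obtained by feeding the width ratio from Lemma \ref{lem:splitvol1} into Lemma \ref{lem:splitvol2} with $\alpha = 2^{1/d}-1$. Your added justification of $\alpha = \Theta(d^{-1})$ via $e^x - 1 \geq x$ and $e^x - 1 \leq x e^x$ is a fine (and slightly more explicit) touch than the paper, which simply asserts it.
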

\begin{proof}
Follows from Lemmas \ref{lem:splitvol1} and \ref{lem:splitvol2}.
\end{proof}

\begin{theorem}\label{thm:half_volume}
The algorithm that always sets $p_t$ such that $\Vol(S^+_t) = \Vol(S^-_t)$
has regret bounded by $2^{O(d\log d)}$ for the symmetric loss.
\end{theorem}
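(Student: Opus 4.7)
The plan is to mirror the proof of Theorem \ref{thm:width_half} almost verbatim, replacing its use of Lemma \ref{lem:splitratio} (which gave the ratio bound $2^{-j}$ for width-halving) with Corollary \ref{cor:splitvol} (which gives the ratio bound $\beta^{-j}$ where $\beta = 1 + 1/\alpha$ and $\alpha = 2^{1/d}-1 = \Theta(1/d)$, so $\beta = \Theta(d)$). The degradation from $2$ to $\Theta(d)$ in this base is exactly what will turn the $2^{O(d)}$ bound into a $d^{O(d)} = 2^{O(d\log d)}$ bound.

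Concretely, I would introduce the family of constants $c_0 = 1$, $c_i/c_{i-1} = \beta^{-i}/(2i)$, so that $c_i = \beta^{-i(i+1)/2}/(2^i i!)$, and set $L_i = (V_i(K_t)/c_i)^{1/i}$, where $K_t$ is the slicing hyperplane's intersection with $S_t$. As in Theorem \ref{thm:width_half}, pick $j$ with $L_{j-1} \geq w \geq L_j$. The lower bound $V_j(S_t) \geq \tfrac{1}{j} c_{j-1} w^j$ goes through as before (it only used the fact that $S_t$ contains a cone of height $w$ over $K_{j-1}$ together with $w \leq L_{j-1}$, both of which are oblivious to how $p_t$ was chosen). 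The upper bound $V_j(K_t) \leq c_j w^j$ uses $w \geq L_j$ and is also unchanged.

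Next I would combine $V_j(S^+)+V_j(S^-) = V_j(S_t) + V_j(K_t)$ with Corollary \ref{cor:splitvol}, which implies
\[
V_j(S_{t+1}) \leq \frac{1}{1+\beta^{-j}}\bigl(V_j(S^+) + V_j(S^-)\bigr)
= \frac{1}{1+\beta^{-j}}\bigl(V_j(S_t) + V_j(K_t)\bigr).
\]
Substituting $V_j(K_t) \leq c_j w^j \leq j(c_j/c_{j-1})V_j(S_t) = \tfrac{1}{2}\beta^{-j} V_j(S_t)$ gives
\[
V_j(S_{t+1}) \leq \frac{1 + \tfrac{1}{2}\beta^{-j}}{1+\beta^{-j}} V_j(S_t) \leq \Bigl(1 - \frac{\beta^{-j}}{3}\Bigr) V_j(S_t),
\]
the analog of Step 1 in Theorem \ref{thm:width_half}.

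Finally, with the potential $\Phi(t) = \sum_{i=1}^d V_i(S_t)^{1/i}$, I would estimate
\[
V_j(S_t)^{1/j} - V_j(S_{t+1})^{1/j} \geq \frac{\beta^{-j}}{3j}V_j(S_t)^{1/j} \geq \frac{\beta^{-j}}{3j}\Bigl(\frac{c_{j-1}}{j}\Bigr)^{1/j}w.
\]
Plugging $c_{j-1} = \beta^{-(j-1)j/2}/(2^{j-1}(j-1)!)$ and taking a $j$-th root produces a per-round decrease of order $\beta^{-j}\cdot \beta^{-(j-1)/2}/\poly(j) \cdot w = \beta^{-O(j)}\cdot w/\poly(j)$, so $\Phi$ decreases by at least $\beta^{-O(d)}\cdot w = d^{-O(d)}\cdot w$. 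Bounding $\ell_t \leq w$ and noting $\Phi(0) = \poly(d)$, the total regret is at most $\poly(d)\cdot d^{O(d)} = 2^{O(d\log d)}$. The main obstacle will be book-keeping the constants carefully enough (in particular showing the $\beta^{-O(j)}/\poly(j)$ per-round decrease holds for every $j \in \{1,\ldots,d\}$ uniformly) so that the final bound is genuinely $2^{O(d\log d)}$ rather than something larger; everything else parallels Theorem \ref{thm:width_half} line by line.
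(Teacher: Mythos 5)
Your proposal is correct and follows the paper's proof of this theorem essentially line by line: both substitute Corollary \ref{cor:splitvol} (with $\lambda = 1+1/\alpha = \Theta(d)$) for Lemma \ref{lem:splitratio} in the argument of Theorem \ref{thm:width_half}, rescale the constants $c_i$ by powers of $\lambda$ so that $V_j(K_t) \leq O(\lambda^{-j})V_j(S_t)$, and conclude via the same potential $\Phi(t)=\sum_i V_i(S_t)^{1/i}$ that the per-round decrease is $\lambda^{-O(d)}w = 2^{-O(d\log d)}w$. The only differences are cosmetic choices of the ratio $c_i/c_{i-1}$ and of the intermediate contraction constant, which do not affect the bound.
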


\begin{proof}
We will proceed similarly to the analysis of the SymmetricSearch algorithm. Consider a fixed round $t$, let $w = \frac{1}{2}\wid(S_t;u_t)$, and let $K$ be the intersection of the hyperplane $\langle x, u_t\rangle = p_t$ with $S_t$. 

Let $\alpha = 2^{1/d} - 1$, and let $\lambda = 1 + \frac{1}{\alpha}$. Note that $\lambda \geq 2$ and $\lambda = \Theta(d)$. Define a sequence of constants $c_i$ so that $c_0 = 1$ and $c_i/c_{i-1} = \lambda^{-(i+1)}/i$ (in other words, $c_i = \lambda^{-(i+1)(i+2)/2}/i!$). For $1 \leq i \leq d$, define $L_i = (V_{i}(K)/c_i)^{1/i}$, and let $L_0 = \infty$. Choose $j$ so that $L_{j-1} \geq w \geq L_j$ (it is always possible to do this by the same logic as in Theorem \ref{thm:symsearch}). 
The proof of Lemma \ref{lemma:symm1} again carries over verbatim to show that
  $V_{j}(S_t) \geq \frac{1}{j}c_{j-1}w^j$. We again proceed in two steps
  similarly to the proof of Theorem \ref{thm:width_half}.\\

\emph{Step 1:} We first show that  $V_{j}(S_{t+1}) \leq (1 - \lambda^{-(j+2)})V_{j}(S_t)$. 

  The set $S_{t+1}$ is either $S^{+}$ or $S^{-}$. By Corollary \ref{cor:splitvol}, we therefore have that $V_{j}(S_{t+1}) \leq \frac{1}{1+\lambda^{-j}}(V_{j}(S^{+}) + V_{j}(S^{-}))$.

Now, since $V_j$ is a valuation, $V_{j}(S^{+}) + V_{j}(S^{-}) = V_{j}(S_t) + V_{j}(K)$. Since $w \geq L_j$, $V_{j}(K) \leq c_jw^{j}$. Combining this with the fact $V_{j}(S_t) \geq \frac{1}{j}c_{j-1}w^j$, we observe that $V_{j}(K) \leq j\frac{c_j}{c_{j-1}}V_{j}(S_t)$, and therefore that

$$V_{j}(S^+) + V_j(S^{-}) \leq \left(1 + j\frac{c_{j}}{c_{j-1}}\right)V_j(S_t) = (1 + \lambda^{-(j+1)})V_{j}(S_t).$$

It follows that (since $\lambda \geq 2$)

$$V_{j}(S_{t+1}) \leq \frac{1}{1+\lambda^{-j}} \cdot (1 + \lambda^{-(j+1)}) V_{j}(S_t) \leq (1 - \lambda^{-(j+2)})V_{j}(S_t).$$

  \emph{Step 2:} We next consider the potential function 
  $\Phi(t) = \sum_{i=1}^{d} V_{i}(S_t)^{1/i}$. Note that $\Phi(0) =
  \mathrm{poly}(d)$. We will show that each round, $\Phi(t)$ decreases by at
  least $2^{-O(d\log d)}w$. Since the loss each round is upper bounded by $w$,
  this proves our theorem.

In particular, note that

\begin{eqnarray*}
V_{j}(S_{t})^{1/j} - V_{j}(S_{t+1})^{1/j} &\geq & (1 - (1-\lambda^{-(j+2)})^{1/j})V_{j}(S_t)^{1/j} 
\geq  \frac{1}{j}\lambda^{-(j+2)}V_{j}(S_t)^{1/j} \\
&\geq & \frac{1}{j}\lambda^{-(j+2)}\left(\frac{c_{j-1}}{j}\right)^{1/j}w 
\geq  2^{-O(d\log d)}w.
\end{eqnarray*}
\end{proof}

%\jonnote{open question 1: is this tight? can we get an exponential in $d$ lower bound like we did for width?}

%\jonnote{open question 2: what about algorithms that halve other intrinsic volumes (e.g. perimeter)? what regret do those algorithms achieve? I think this method doesn't quite work (splitting perimeter in half isn't guaranteed to split area in any constant ratio) but maybe it can be fixed.}

%\renato{(Both are great open questions, but I am tempted to leave them out of
%the submission and include in the camera ready or only in arxiv.)}

\section{General loss functions}\label{sect:general_loss}

Throughout this paper we have focused on the special cases of the symmetric loss function and the pricing loss function. In this subsection we briefly explore the landscape of other possible loss functions and what regret bounds we can obtain for them. 

For simplicity, we restrict ourselves to loss functions of the form
$\ell(\langle u_t, v\rangle, p_t) = F(\langle u_t, v \rangle - p_t)$. Note that
while some functions (e.g. the pricing loss function) may not be of this form,
they may be dominated by some function of this form (e.g. $F(x) = x$ for $x \geq
0$ and $F(x) = 1$ for $x \leq 0$), and hence any regret bound
that holds for this simplified loss function holds for the original loss
function.

We begin by showing that if $F(x)$ goes to $0$ polynomially quickly from both sides (i.e. if $F(x) \leq |x|^{\beta}$ for some $\beta > 0$), then SymmetricSearch still achieves constant regret.

\begin{theorem}
If $F(x) = |x|^{\beta}$, for $\beta > 0$, then SymmetricSearch (Algorithm \ref{algo:symmetricdd}) achieves regret $O_{d, \beta}(1)$ for the contextual search problem with this loss function. 
\end{theorem}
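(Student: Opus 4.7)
The approach is to reanalyze SymmetricSearch (Algorithm \ref{algo:symmetricdd}) without modifying the algorithm itself, replacing the potential of Theorem \ref{thm:symsearch} with one tailored to the exponent $\beta$. Recall from the original argument that the algorithm's choice of index $j$ guarantees (i) $V_j(S_t) \geq \tfrac{c_{j-1}}{j}\,w_t^j$ by Lemma \ref{lemma:symm1} and (ii) $V_j(S_{t+1}) \leq \tfrac{3}{4}\,V_j(S_t)$ by Lemma \ref{lemma:symm2}, where $w_t = \tfrac{1}{2}\wid(S_t;u_t)$. Also, since $p_t \in [\underline p_t, \overline p_t]$, the loss in round $t$ is at most $(2w_t)^\beta$.

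The case $\beta \geq 1$ follows immediately from Theorem \ref{thm:symsearch}: since $S_t \subseteq [0,1]^d$ implies $w_t \leq \sqrt{d}$, we have $w_t^\beta \leq d^{(\beta-1)/2}\, w_t$, and the $O(d^4)$ symmetric-loss bound yields $O_{d,\beta}(1)$ total regret.

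For $\beta < 1$ I would use the potential
$$\Phi_t \;=\; \sum_{i=1}^d \frac{i}{\beta}\, V_i(S_t)^{\beta/i}.$$
The initial value is $\Phi_1 = \sum_{i=1}^d \frac{i}{\beta}\,\binom{d}{i}^{\beta/i} \leq \sum_{i=1}^d \frac{i}{\beta}\, d^{\beta} = O_{d,\beta}(1)$, using $\binom{d}{i}^{1/i}\leq d$. For the per-round drop, monotonicity of every $V_i$ lets me retain only the $j$-th summand and apply (ii) to get
$$\Phi_t - \Phi_{t+1} \;\geq\; \frac{j}{\beta}\bigl(1 - (3/4)^{\beta/j}\bigr)\, V_j(S_t)^{\beta/j}.$$
Since $\beta/j \leq 1$, the elementary bound $1 - (3/4)^x \geq x\log(4/3)$ on $[0,1]$ combines the first two factors into the constant $\log(4/3)$. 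Then (i) gives $V_j(S_t)^{\beta/j} \geq (c_{j-1}/j)^{\beta/j}\, w_t^\beta$, and since $c_{j-1}/j \leq 1$ (as $c_0 = 1$ and the ratios $c_i/c_{i-1} = 1/(2i)$ only shrink things further) and $\beta/j \leq \beta$, we have $(c_{j-1}/j)^{\beta/j} \geq (c_{j-1}/j)^\beta$, a quantity depending only on $d$ and $\beta$. So $\Phi_t - \Phi_{t+1} \geq \Omega_{d,\beta}(w_t^\beta)$, and telescoping together with the per-round loss bound $(2w_t)^\beta$ yields total regret $\leq 2^\beta\,\Phi_1/\Omega_{d,\beta}(1) = O_{d,\beta}(1)$.

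The main obstacle is that for small $\beta$ the factor $1 - (3/4)^{\beta/j}$ degrades linearly in $\beta/j$, which is precisely why the potential needs the compensating prefactor $i/\beta$; without it the cancellation against $\beta/j$ would fail and the bound on $\Phi_t - \Phi_{t+1}$ would carry an unwanted factor of $\beta$. A secondary technicality is ensuring $c_{j-1}/j \leq 1$ so that raising to the smaller exponent $\beta/j$ can be replaced by raising to $\beta$, which decouples the $d$-dependence from $j$.
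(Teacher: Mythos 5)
Your approach is correct in substance and essentially mirrors the paper's: replace the potential $\sum_i V_i(S_t)^{1/i}$ of Theorem~\ref{thm:symsearch} by one with exponent $\beta/i$ and reuse Lemmas~\ref{lemma:symm1} and~\ref{lemma:symm2} unchanged. The paper treats all $\beta>0$ in one stroke with $\Phi_t=\sum_i V_i(S_t)^{\beta/i}$ and absorbs all $j$-dependence into an $O_{d,\beta}(1)$ constant, whereas you split the analysis: for $\beta\geq 1$ you get a clean reduction to Theorem~\ref{thm:symsearch} via $w_t^\beta\leq d^{(\beta-1)/2}w_t$ (note you really need the stronger fact $\sum_t w_t = O(d^4)$, which is what the proof shows, not just the stated regret bound), and for $\beta<1$ you introduce the prefactor $i/\beta$. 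That prefactor is harmless but unnecessary: without it the per-round drop carries an extra factor $\beta/j\geq\beta/d$, which is still $\Omega_{d,\beta}(1)$, so the ``cancellation would fail'' concern you raise is misplaced.

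One small slip, which the paper shares: the inequality $1-(3/4)^x\geq x\log(4/3)$ for $x\in[0,1]$ is actually false (at $x=1$ the left side is $1/4<\log(4/3)$; the function is concave with slope $\log(4/3)$ at $0$, so $\log(4/3)x$ is an \emph{upper} bound). The correct concavity bound on $[0,1]$ is $1-(3/4)^x\geq x/4$, which changes only the constant and leaves the conclusion intact. The step $(c_{j-1}/j)^{\beta/j}\geq(c_{j-1}/j)^\beta$, which you flag as a technicality, is fine: $c_{j-1}/j\leq 1$ and $\beta/j\leq\beta$, so raising a number in $(0,1]$ to the smaller exponent gives something larger.
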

\begin{proof}
We modify the proof of Theorem \ref{thm:symsearch} to look at the potential function $\Phi_t = \sum_{i=1}^{d} V_{i}(S_t)^{\beta/i}$. The change in potential in each round is now at least (for some $j \in [d]$):

\begin{eqnarray*}
V_{j}(S_{t})^{\beta/j} - V_{j}(S_{t+1})^{\beta/j} &\geq & \left(1 - \left(\frac{3}{4}\right)^{\beta/j}\right)V_{j}(S_t)^{\beta/j} \\
&\geq & \left(1 - \left(\frac{3}{4}\right)^{\beta/j}\right)\left(\frac{c_{j-1}}{j}\right)^{\beta/j} w^{\beta}  \geq O_{d,\beta}(1) \ell_{t}.
\end{eqnarray*}

It follows that the total regret of SymmetricSearch is $O_{d,\beta}(1)$.
\end{proof}

Similarly, we can show that for functions $F$ which are discontinuous on one side and converge to zero polynomially quickly on the other side, the PricingSearch algorithm (with a slightly different choice of parameters) achieves $O_{d, \alpha}(1)$ regret.

\begin{theorem}
Let $\alpha > 0$ be a constant, and let $F(x) = |x|^{\beta}$, for $x \geq 0$ and let $F(x) = 1$ for $x < 0$. PricingSearch (Algorithm \ref{algo:pricingdd}) with parameter $\alpha = 1 + \frac{\beta}{d}$ achieves regret $O_{d, \beta}(\log \log T)$ for the contextual search problem with this loss function. 
\end{theorem}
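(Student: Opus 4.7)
The plan is to recycle the structure of the proof of Theorem \ref{thm:pricing} and observe that the only two things that change when moving from the original pricing loss (effectively $\beta = 1$) to $F(x) = x^\beta$ for $x \geq 0$ and $F(x) = 1$ for $x < 0$ are (i) the per-round underpricing loss is now bounded by $w^\beta$ rather than $w$, and (ii) the bucket ratio must be retuned so that the total underpricing loss per bucket remains bounded. Crucially, Lemmas \ref{lem:pricing1}, \ref{lemma:pricing2}, and \ref{lemma:pricing3} and their proofs make no use of the specific value of $\alpha$ or of the form of the loss function, so I can invoke them verbatim.

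I would begin by replacing the width threshold for the trivial regime from $1/T$ to $\tau = T^{-1/\beta}$: in rounds with $w < \tau$ we price at $\underline p_t$, always sell, and incur underpricing loss at most $w^\beta \leq 1/T$ per round, totaling $O(1)$ across the $T$ rounds. In rounds with $w \geq \tau$, Lemma \ref{lemma:pricing3} gives $w \leq 2\ell_{k_J}$, and chaining this with $\ell_{k_J} = d^2\exp(-\alpha^{k_J})$ yields $k_J \leq k_{\max} = O_{d,\beta}(\log\log T)$. Each overpricing event contributes at most $1$ to the regret and, by Lemma \ref{lem:pricing1}, bumps $\varphi_J$ to a strictly higher bucket; thus the total overpricing contribution is at most $d \cdot k_{\max} = O_{d,\beta}(\log\log T)$.

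The heart of the argument is the underpricing bookkeeping for each pair $(J, k_J)$. By Lemma \ref{lemma:pricing2} each underpricing event decreases $V_J(S_t)$ by $\ell_{k_J+1}^J / (2\, J!)$, so while the pair remains $(J, k_J)$ there are at most $2\ell_{k_J}^J / \ell_{k_J+1}^J$ such events. Combining with the per-round loss bound $w^\beta \leq (2\ell_{k_J})^\beta$ gives total underpricing loss per $(J,k_J)$ of order
\begin{equation*}
O_\beta(1) \cdot \frac{\ell_{k_J}^{J+\beta}}{\ell_{k_J+1}^J} \;=\; O_\beta\bigl(d^{2\beta}\bigr)\, \exp\!\bigl(\alpha^{k_J}\bigl(J(\alpha-1)-\beta\bigr)\bigr).
\end{equation*}
Substituting $\alpha - 1 = \beta/d$ and using $J \leq d$ makes the exponent non-positive, giving a per-bucket bound of $O_{d,\beta}(1)$. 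Summing over the $d$ indices $J$ and the $O_{d,\beta}(\log\log T)$ buckets $k_J$ yields total underpricing loss $O_{d,\beta}(\log\log T)$, which combined with the previous two contributions gives the theorem.

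The main obstacle is precisely the exponent calculation above: the whole purpose of the choice $\alpha = 1 + \beta/d$ is to force $J(\alpha-1) \leq \beta$ for every $J \leq d$, which cancels the $\ell_{k_J}^\beta$ factor coming from the new loss exponent and prevents an exponential blow-up. Everything else is a bookkeeping variation on Theorem \ref{thm:pricing}; one just has to be careful to use $\tau = T^{-1/\beta}$ rather than $1/T$ in the trivial branch so that the small-width rounds contribute $O(1)$ regret uniformly in $\beta$ (in particular when $\beta < 1$).
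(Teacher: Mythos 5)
Your proposal is correct and follows essentially the same route as the paper: invoke Lemmas \ref{lem:pricing1}--\ref{lemma:pricing3} unchanged, rebound the per-event underpricing loss by $(2\ell_{k_J})^{\beta}$, and verify that the choice $\alpha = 1+\beta/d$ makes the exponent $\alpha^{k_J}(J(\alpha-1)-\beta)$ non-positive for all $J \leq d$, exactly as in the paper's computation. Your one deviation --- replacing the trivial-regime threshold $1/T$ by $\tau = T^{-1/\beta}$ --- is a sensible refinement the paper omits: the paper keeps the $1/T$ threshold, under which the small-width rounds contribute $T\cdot(2/T)^{\beta}$, which is only $O(1)$ when $\beta \geq 1$, so your adjustment is actually needed to cover $\beta < 1$.
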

\begin{proof}
Again, we modify the proof of Theorem \ref{thm:pricing}. Lemmas \ref{lem:pricing1}, \ref{lemma:pricing2}, \ref{lemma:pricing3}, and \ref{lemma:pricing4} hold as written. The only necessary change is in the underpricing case of the proof of Theorem \ref{thm:pricing}, where the maximum loss for an event is now $(4\ell_{k_J})^{\beta}$, and so the total loss from underpricing (for a fixed value of $J$ and $k_J$) is at most

  \begin{eqnarray*}
  \frac{2 \ell^J_{k_J} }{\ell^J_{k_J+1}} \cdot (4 \ell_{k_J})^{\beta}
  &=& 2^{1+2\beta} d^{2\beta} \exp(J \alpha^{k_J +1} - (J+\beta) \alpha^{k_J})\\
  &\leq & 2^{1+2\beta} d^{2\beta} \exp(\alpha^{k_J} (d \alpha - (d+\beta))\\
    &=& 2^{1+2\beta} d^{2\beta} = O_{d,\beta}(1).
  \end{eqnarray*}
  
It follows that the total regret of PricingSearch is $O_{d,\beta}(\log\log T)$.
\end{proof}

\bibliographystyle{plain}
\footnotesize
\bibliography{pricing}
\normalsize

\appendix

\section{Analysis of the $1$-dimensional case}\label{appendix:one-dim}

We now analyze the policy of Kleinberg and Leighton for the one-dimensional
case. Their keep a knowledge set $S_t = [a_t, a_t + \Delta_t]$ and
choose price $$p_t = a_t + 1/2^{2^{k_t}} \quad \text{where} \quad k_t = \lfloor 1+\log_2 \log_2
\Delta_t^{-1} \rfloor$$ while $\Delta_t > 1/T$ after that, their policy prices at the lower end
of the interval.

Clearly the total regret whenever $\Delta_t \leq 1/T$ is at most $1$, so we only
need to analyze the cases where $\Delta_t > 1/T$ and hence $k_t \leq O(\log \log
T)$. To show a regret bound of $O(\log \log T)$ it is enough to argue that for
every value of $k$, the total regret from timesteps where $k_t = k$ is $O(1)$. 

We start by noting that if there is no sale then in the next period
$\Delta_{t+1} = 1/2^{2^{k_t}}$ and therefore $k_{t+1} = k_t + 1$. Since $k_t$ is
monotone, there can be at most one no-sale for every value of $k_t$. The
remaining periods where $k_t = k$ correspond to sales, where the loss is at most
$\Delta_t \leq 1/2^{2^{k-1}}$, since by each sale $\Delta_t$ decreases by
$1/2^{2^{k}}$, there are at most $2^{2^{k}} / 2^{2^{k-1}} = 2^{2^{k-1}}$
sales. Since each of them incur loss $\Delta_t \leq 1/2^{2^{k-1}}$, the total
regret for sales with $k_t = k$ is at most $1$. The total regret from no-sales
is at most $1$ since there is at most one no-sale.

\end{document}